\newtheorem{theorem}{Theorem}
\newtheorem{lemma}{Lemma}
\newtheorem{proposition}{Proposition}
\newtheorem{corollary}{Corollary}
\newtheorem{definition}{Definition}
\newcommand{\ketbra}[2]{\vert #1 \rangle \langle #2 \vert}
\newcommand{\dm}[1]{\ketbra{#1}{#1}}
\newcommand{\mc}[1]{\mathcal{#1}}
\DeclareMathOperator{\tr}{Tr}
\newcommand{\yx}[1]{{\color{black} #1}}
\newcommand{\id}{I}
\newcommand{\bal}{\begin{equation}\begin{aligned}}
\newcommand{\eal}{\end{aligned}\end{equation}}
\newcommand{\sbar}{\;\rule{0pt}{9.5pt}\right|\;}
\newcommand{\lset}{\left\{\left.}
\newcommand{\rset}{\right\}}
\DeclareMathOperator{\Tr}{Tr}
\newcommand{\mA}{\mathcal{A}}
\newcommand{\mE}{\mathcal{E}}
\newcommand{\mN}{\mathcal{N}}
\newcommand{\mT}{\mathcal{T}}
\newcommand{\mD}{\mathcal{D}}
\newcommand{\mI}{\mathcal{I}}
\newcommand{\mF}{\mathcal{F}}
\newcommand{\mL}{\mathcal{L}}
\newcommand{\mM}{\mathcal{M}}
\newcommand{\mO}{\mathcal{O}}
\newcommand{\mS}{\mathcal{S}}
\newcommand{\mZ}{\mathcal{Z}}
\newcommand{\mbR}{\mathbb{R}}
\DeclareMathOperator*{\argmin}{arg\,min}
\newcommand*{\addFileDependency}[1]{
  \typeout{(#1)}
  \@addtofilelist{#1}
  \IfFileExists{#1}{}{\typeout{No file #1.}}
}
\begin{document}

\title{Virtual quantum resource distillation: General framework and applications}

\author{Ryuji Takagi}
\email{ryuji.takagi@phys.c.u-tokyo.ac.jp}
\affiliation{Department of Basic Science, The University of Tokyo, Tokyo 153-8902, Japan}
\affiliation{Nanyang Quantum Hub, School of Physical and Mathematical Sciences, Nanyang Technological University, 637371, Singapore}

\author{Xiao Yuan}
\email{xiaoyuan@pku.edu.cn}
\affiliation{Center on Frontiers of Computing Studies, Peking University, Beijing 100871, China}
\affiliation{School of Computer Science, Peking University, Beijing 100871, China}
\affiliation{Stanford Institute for Theoretical Physics, Stanford University, Stanford California 94305, USA}

\author{Bartosz Regula}
\email{bartosz.regula@gmail.com}
\affiliation{Mathematical Quantum Information RIKEN Hakubi Research Team, RIKEN Cluster for Pioneering Research (CPR) and RIKEN Center for Quantum Computing (RQC), Wako, Saitama 351-0198, Japan}
\affiliation{Department of Physics, Graduate School of Science, The University of Tokyo, Bunkyo-ku, Tokyo 113-0033, Japan}

\author{Mile Gu}
\email{mgu@quantumcomplexity.org}
\affiliation{Nanyang Quantum Hub, School of Physical and Mathematical Sciences, Nanyang Technological University, 637371, Singapore}
\affiliation{Centre for Quantum Technologies, National University of Singapore, 3 Science Drive 2, 117543, Singapore}
\affiliation{CNRS-UNS-NUS-NTU International Joint Research Unit, UMI 3654, Singapore 117543, Singapore}

\begin{abstract}
  We develop the general framework of virtual resource distillation --- an alternative distillation strategy proposed in \href{https://journals.aps.org/prl/abstract/10.1103/PhysRevLett.132.050203}{Phys. Rev. Lett. \textbf{132}, 050203 (2024)}, which extends conventional quantum resource distillation by integrating the power of classical postprocessing. The framework presented here is applicable not only to quantum states, but also dynamical quantum objects such as quantum channels and higher-order processes. We provide a general characterization and benchmarks for the performance of virtual resource distillation in the form of computable semidefinite programs as well as several operationally motivated quantities. We apply our general framework to various concrete settings of interest, including standard resource theories such as entanglement, coherence, and magic, as well as settings involving dynamical resources such as quantum memory, quantum communication, and non-Markovian dynamics. The framework of probabilistic distillation is also discussed.
\end{abstract}

\maketitle


\section{Introduction}

The advantages of quantum algorithms and information processing are enabled by the efficient use of quantum resources such as quantum entanglement~\cite{Horodecki09} and superposition~\cite{RevModPhys.89.041003}.
However, it is usually difficult to prepare such quantum resources with high quality due to inevitable noise and imperfection. 
The standard approach to address this issue is through resource distillation, a class of protocols to prepare high-quality quantum resources from those of lower quality.

The feasibility and performance of resource distillation is a major topic of study in quantum information theory, often analyzed using the tools developed in quantum resource theories~\cite{chitambar2018quantum}, which are frameworks that deal with the quantification and manipulation of physical quantities that are costly to access in a given setting. Resource distillation has been studied in various resource theories, with the ultimate goal of producing output quantum objects, such as quantum states and channels, that are as close as possible to a desired target object. Although this goal is well motivated, as it allows for versatile use of the distilled resource object, it may be too restrictive depending on the objective of the overall quantum algorithm that utilizes the processed resource object after the distillation procedure.

Here, we observe that many quantum algorithms, such as variational quantum algorithms~\cite{Cerezo2021variational}, ultimately aim to obtain some classical output --- i.e., numerical values --- retrieved by measuring the expectation values of suitable observables. 
Such quantum algorithms do not strictly require distilling a desired quantum resource physically, as long as we can retrieve the expected values of all observables made on the quantum objective. 
This motivates us to propose a novel variant of resource distillation. 
Our distillation strategy does not directly distill a better quantum object, but instead fully utilizes the potential of classical postprocessing to `virtually' approximate its measurement statistics, allowing us to simulate the expectation values one would obtain if one were in physical possession of the target object.
We remark that several protocols known as virtual cooling~\cite{Cotler2019quantum} and virtual distillation in quantum error mitigation~\cite{Koczor2021exponential,Huggins2021virtual} share a similar idea that classical postprocessing enables us to simulate purer quantum states, although they differ from our framework --- these techniques extract the measurement statistics for purer quantum objects by coherently interacting multiple copies of noisy objects, while our virtual resource distillation applies a probabilistic operation to a single copy of the noisy quantum object, which is inspired by and related to the techniques for simulating unphysical objects~\cite{Buscemi2013twopoint,Jiang2021physical,Regula2021operational}, and error mitigation techniques based on quasiprobability~\cite{PhysRevLett.119.180509}.

This paper provides an extension and a rigorous theoretical foundation to our companion paper~\cite{maintext}, where the notion of virtual resource distillation is first introduced. Notably, although the discussion in~\cite{maintext} focuses on resource theories of quantum states, here we present extensive discussions of the fully general framework, which includes applications to dynamical resource theories of quantum channels and higher-order processes~\cite{dana2018resource,diaz2018using,rosset2018resource,bauml2019resource,dynamicalEntanglement,dynamicalCoherence,pirandola2017fundamental,faist_2018,theurer2018quantifying,takagi2019general,liu2019operational,liu2019resource,wang2019quantifying,PhysRevResearch.1.033169,takagi_2020,Kristjansson2020resource,Gour2019how,yuan_2020,Regula2021fundamental,Fang2020no-go,Takagi2021optimal,Berk2021resourcetheoriesof}, as well as the framework of probabilistic distillation~\cite{horodecki1999general,Jonathan1999minimal,Fang2018probabilistic,PhysRevLett.125.060405,regula_2022,regula_2021-4,Regula2021fundamental,Fang2020no-go}. 
We give detailed comparisons between the sampling overhead needed to realize virtual distillation protocols and conventional ones. 
We discuss applications of virtual resource distillation in various physical settings, providing an in-depth study of the distillation performance in each case.
Along the way, we include the proofs for Theorems~1 and 2 in the companion paper, which correspond to Theorems~\ref{prop:general_bounds_cost_parallel} and \ref{prop:cost twirling} in this article.



\section{Resource theories}

The restrictions imposed by a given physical settings can be usually represented by a limited set of quantum states and operations that one has access to.  
For instance, when two parties are physically separated apart and quantum communication is hard to establish, it is reasonable to study the scenario where they only have access to local quantum operations and classical communication (LOCC).  
In such a scenario, they can only generate separable states, and other states are costly ``resources'' that cannot be created for free, where entanglement serves as the resource quantity of interest.  
Central questions in such a scenario include (1) resource quantification: what is a good way of quantifying the underlying resource that we do not have free access to, and (2) resource manipulation: what are the resource transformations possible by only using the freely accessible operations. 
In general, the manipulated resources need not be quantum states, but can be e.g.\ quantum channels, measurements, or higher-order quantum operations~---~we will approach the problem generally by considering all such `resource objects' in a common formalism.

Resource theories are frameworks that provide a systematic approach to studying the above questions of quantum resource quantification and manipulation~\cite{chitambar2018quantum}.
The basic building blocks of the resource theory framework include a set $\mF$ of free objects ---  a subset of objects that can be prepared for free --- and a set $\mO$ of free operations --- the accessible operations that are allowed to transform the resource objects in the given setting. 
To reflect the physical constraints and the underlying quantum resource, we impose a basic condition on free operations; no free operation can create a resourceful object from a free object, i.e., if $\Lambda\in\mO$, then $\Lambda(X)\in\mF\ \forall X\in\mF$. 
The maximal such set is called \emph{resource non-generating operations}, examples of which include separability-preserving operations in entanglement theory and maximally incoherent operations in coherence theory, and any arbitrary set $\mO$ of free operations is then a subset of resource non-generating operations. 
With these concepts, we can formalize resource quantification by considering a function $R$ from objects to real numbers.
In particular, we call a function $R$ \emph{resource measure} or \emph{monotone} if (1) it always gives the smallest value for all free objects, i.e., for some constant $c$, $R(X)= c \ \forall X\in\mF$ and $R(X)\geq c\ \forall X$, and (2) it is monotonically nonincreasing under free operations, i.e., $R(X)\geq R(\Lambda(X))$ for every object $X$ and every free operation $\Lambda\in\mO$.

Depending on the specific setting of interest, one can flexibly select the set of resource objects to study. When we are interested in the manipulation of quantum states and the resources contained therein, the set of all quantum states are the relevant objects of study, which includes the designated set $\mF$ of free states as a subset.  
In this scenario, quantum channels serve as the operations that manipulate quantum states. We call the resource theories whose resource objects are quantum states \emph{resource theories of quantum states} or \emph{state theories} in short.

On the other hand, if one would like to study the resource contents belonging to quantum channels, as done e.g.\ in the theory of quantum communication, then the relevant 
objects of study become the set of quantum channels. 
Quantum channels are manipulated by quantum superchannels~\cite{Chiribella2008quantum,Chiribella2008transforming} that transform quantum channels to quantum channels. We call this framework \emph{resource theories of quantum channels}.
A quantum superchannel is constructed by a combination of two channels, between which another channel can be inserted. 
Inserting a channel $\mE$ into the slot for a superchannel $\Xi$ then results in an output channel $\Xi(\mE)$ (Fig.~\ref{fig:superchannel_comb}).

\begin{figure}
    \centering
    \includegraphics[width=\columnwidth]{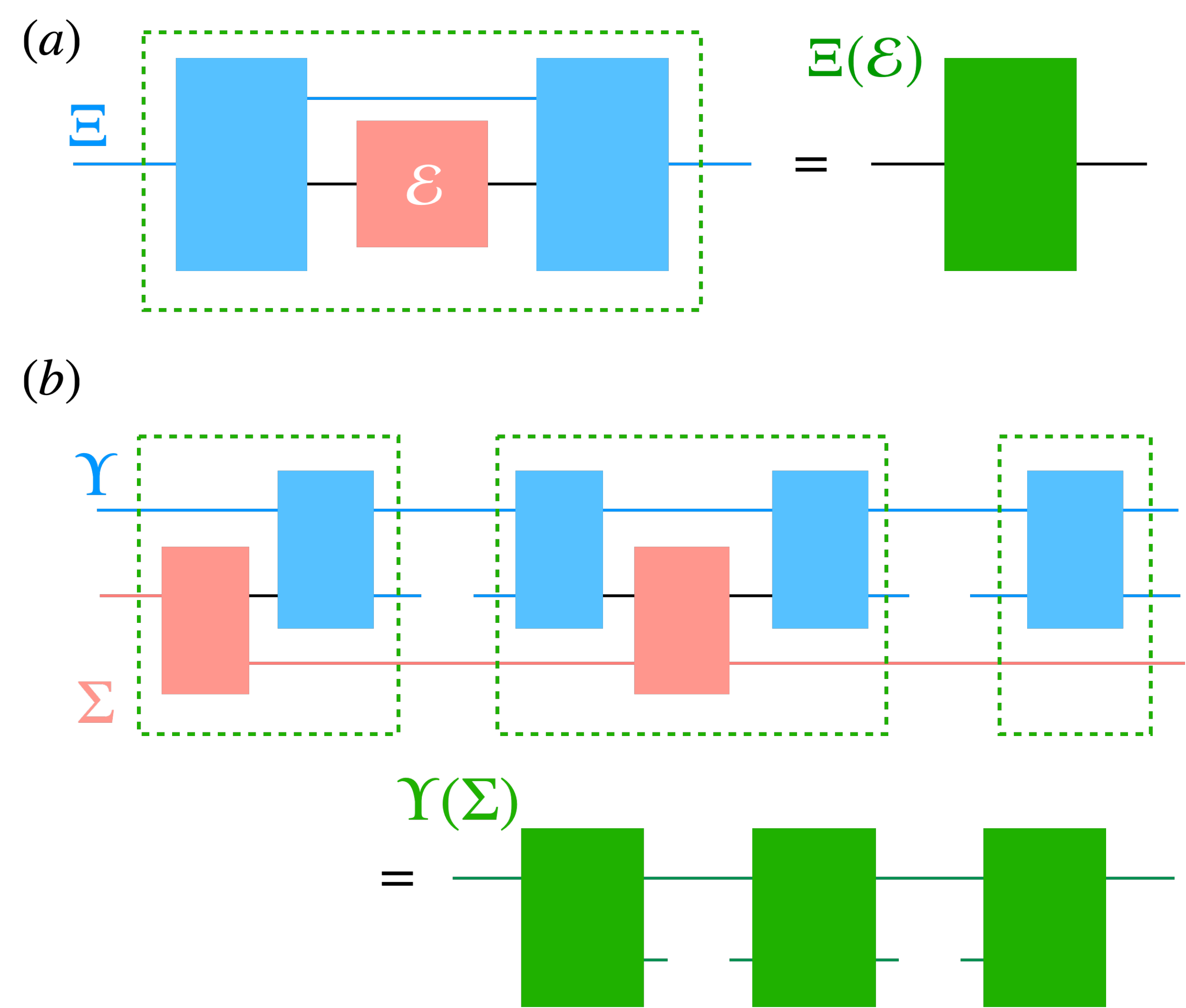}
    \caption{(a) Channel transformation by a superchannel. A superchannel $\Xi$ is constructed by two channels (denoted by blue boxes) that are connected by the identity channel (blue wire between two boxes). Inserting a channel $\mE$ into a slot between the two channels results in another channel $\Xi(\mE)$ (green). (b) Comb transformation by another comb. Combs $\Upsilon$ (blue) and $\Sigma$ (red) are constructed by a series of channels and interlocking them results in another comb $\Upsilon(\Sigma)$ (green).}
    \label{fig:superchannel_comb}
\end{figure}

One can extend superchannels to ones with multiple empty slots, known as quantum combs, and take them as the main resource object to study, which constructs \emph{resource theories of quantum combs}~\cite{Chiribella2008quantum,Berk2021resourcetheoriesof}. For instance, this framework is useful for studying noise suppression, where non-Markovian noise can be considered as a quantum comb~\cite{Berk2021extracting}. 
A quantum comb consists of a set of bipartite quantum channels and has empty slots between these channels.
This allows a quantum comb $\Upsilon$ to act on another quantum comb $\Sigma$ by interlocking them as in Fig.~\ref{fig:superchannel_comb}, which results in another comb $\Upsilon(\Sigma)$. 
Therefore, quantum combs themselves serve as the operations that manipulate quantum combs, and thus any set $\mO$ of free operations is a subset of quantum combs in this framework.

It is worth noting that there is a strict hierarchy among these three types of objects: quantum states are special forms of quantum channels, and quantum channels are special forms of quantum combs. 
Unless otherwise stated, our results hold for any type of resource object as long as it is isomorphic to a closed convex subset of operators acting on a finite-dimensional Hilbert space. 
Our results also do not assume any specification of the set $\mF$ of free objects and the set $\mO$ of free operations except the assumption that they are closed convex sets. 
This approach that does not rely on the specific structure of free objects is due to the recently developed \emph{general resource theories}~\cite{horodecki2013quantumness,chitambar2018quantum}, which has provided operational characterizations of general resources in terms of discrimination tasks~\cite{takagi2018operational,takagi2019general,Uola2019quantifying_conic,Oszmaniec2019operational}, resource erasure~\cite{anshu_2017,liu2019resource}, and resource manipulation~\cite{liu2019one,Regula2020benchmarking,Regula2021oneshot,PhysRevLett.125.060405,Regula2021fundamental,Fang2020no-go,Takagi2021oneshot,regula_2022,regula_2021-4}.


\section{Resource distillation}

Many quantum information processing protocols are designed under the assumption that we are in possession of a specific form of resource objects, e.g., maximally entangled states.  
However, this assumption is hard to meet in a realistic noisy scenario.
Therefore, preparing the desired specific object from distorted noisy ones using only the freely accessible operations is a crucial subroutine in the realization of quantum information processing tasks.
This procedure is known as \emph{resource distillation}, and its performance in relation to resource quantification has been a major topic of study. 

For instance, suppose two parties, Alice and Bob, would like to run the quantum teleportation protocol but only have access to noisy entangled states. 
Quantum teleportation can be run by first distilling a maximally entangled state from the accessible noisy entangled states by using local operations and classical communication, then using the distilled entangled state as a resource for quantum teleportation. 
In this protocol, the performance of the distillation process plays a crucial role in characterizing the efficiency of running quantum teleportation. In the following, we formalize the distillation performance in general settings, including scenarios where the resource object of interest is not only a quantum state but a quantum process --- representing, for instance, cases where one is interested in purifying a noisy quantum operation into a unitary one.

Let $T$ be a desired target object {--- again, this can be a quantum state, channel, or even a more general comb, depending on the setting of interest. Suppose now} that our goal is to obtain as many copies of $T$ as possible within a tolerable error $\varepsilon$. 
The one-shot distillation rate is defined as 
\bal
 D^\varepsilon(X)\coloneqq \sup_{\Lambda\in\mO}\lset m \sbar \Lambda(X)\sim_\varepsilon T^{\otimes m} \rset 
\eal
where $A\sim_\varepsilon B$ means that $A$ and $B$ are $\varepsilon$-close with respect to some distance measure. 
In this work, we focus on the trace-norm--based distance, which has the form below depending on the type of resource theories under study. 

For two quantum states $\rho_1$ and $\rho_2$, we consider the trace distance, i.e.,  
\bal
 \rho_1\sim_\varepsilon \rho_2 \iff \frac{1}{2}\|\rho_1-\rho_2\|_1 \leq \varepsilon
\eal
where $\|\cdot\|_1$ is the trace norm. 
In this manuscript, we use $\tau$ to denote a target state for state distillation and $\psi$ to emphasize that the target is pure.
A target state is commonly set as a pure state with some specific form, e.g., a Bell state in the entanglement distillation and the $T$ state in the magic state distillation, although we do not put any restriction about the target state unless otherwise stated.
For clarity, we will sometimes use $\tau$ instead of $T$ when the target is a quantum state.

For two quantum channels $\mE_1$ and $\mE_2$, the distance between two channels is described by the diamond distance, i.e., 
\bal\label{eq:diamond}
 \mE_1\sim_\varepsilon \mE_2 &\iff  \max_{\rho}\frac{1}{2}\|{\rm id}\otimes\mE_1(\rho)-{\rm id}\otimes\mE_2(\rho)\|_1\leq \varepsilon \\
 &\iff \frac{1}{2}\|\mE_1-\mE_2\|_\diamond\leq \varepsilon
\eal
where $\|\cdot\|_\diamond$ is the diamond norm~\cite{Kitaev1997quantum}, with $\rm id$ denoting the identity channel on an ancillary space --- \emph{a priori} unbounded, but it is in fact sufficient to consider an ancillary space of the same dimension as the input space of the channel~\cite[Theorem~3.46]{watrous_2018}.
What Eq.~\eqref{eq:diamond} means is that the distance between two channels can be measured by the maximum trace distance between two output states obtained by the partial application to the same input state.

This idea can be extended to measuring the distance between two quantum combs.
For given two combs $\Upsilon_1$ and $\Upsilon_2$ with the same input-output structure, we can consider applying them to another interlocking comb that outputs quantum states and takes the trace distance between these two states (Fig.~\ref{fig:comb_distance}).
We define the distance between $\Upsilon_1$ and $\Upsilon_2$ by taking the maximization over all such interlocking combs, i.e.,
\bal
 \frac{1}{2}\|\Upsilon_1-\Upsilon_2\|_c \coloneqq\max_\Sigma \frac{1}{2}\|\Upsilon_1(\Sigma)-\Upsilon_2(\Sigma)\|_1
\eal
where the maximization is taken over all the combs such that $\Upsilon_{1,2}(\Sigma)$ is a quantum state.  
We remark that this distance measure was introduced and discussed previously in Refs.~\cite{Chiribella2008memory,Gutoski2012on_a_measure}.
With this distance measure, we can define the $\varepsilon$-closedness of two combs $\Upsilon_1$ and $\Upsilon_2$ as 
\bal
 \Upsilon_1 \sim_\varepsilon \Upsilon_2 \iff \frac{1}{2}\|\Upsilon_1(\Sigma)-\Upsilon_2(\Sigma)\|_c\leq \varepsilon.
\eal

\begin{figure}
    \centering
    \includegraphics[width=\columnwidth]{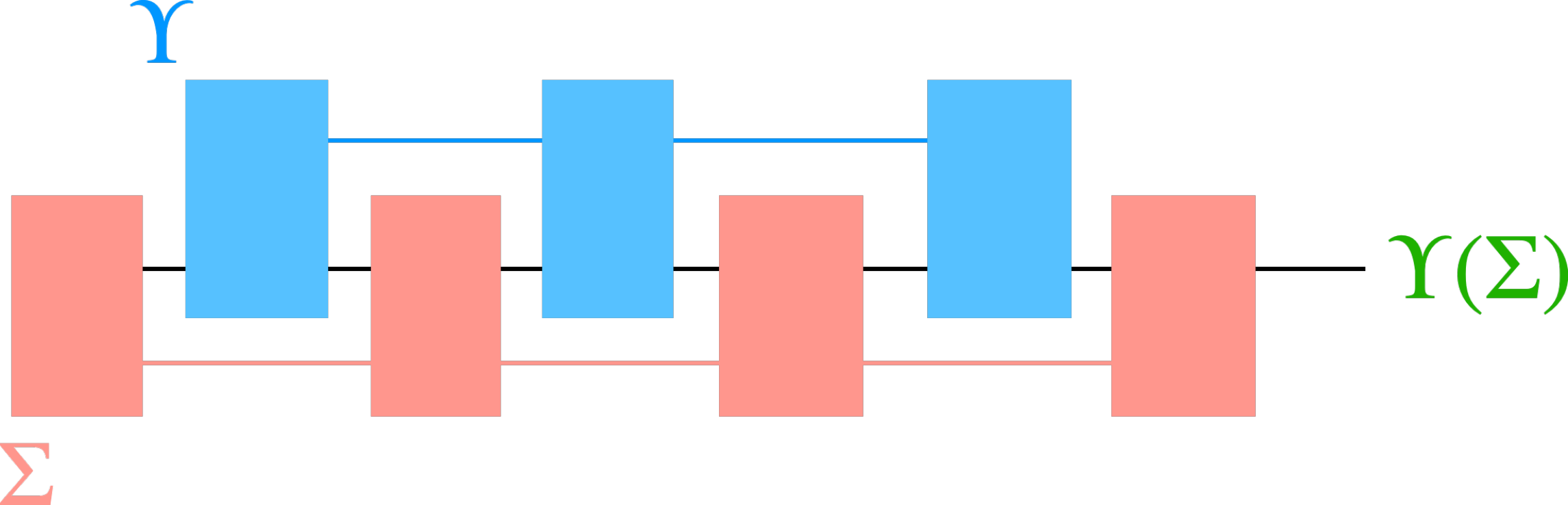}
    \caption{For a given comb $\Upsilon$, interlocking another comb $\Sigma$ that fills all the slots of $\Upsilon$ results in a quantum state as an output $\Upsilon(\Sigma)$. $\frac{1}{2}\|\Upsilon_1-\Upsilon_2\|_c$ is defined as the maximum trace distance between states $\Upsilon_1(\Sigma)$ and $\Upsilon_2(\Sigma)$.}
    \label{fig:comb_distance}
\end{figure}

\section{Virtual resource distillation}

\subsection{Setting}

As the ultimate goal of most quantum information processing tasks is to obtain classical information of interest, they typically terminate with measurements. 
We show that when the desired classical information is an expectation value of the final quantum state at the output of the protocol, we can extend the notion of resource distillation. 

For clarity, let us begin with state theories, i.e., $\mF$ is a set of quantum states and $\mO$ is a set of quantum channels. 
Consider a quantum information processing protocol that applies a quantum channel $\Lambda$ to a certain resource state $\tau$ to produce the final quantum state $\Lambda(\tau)$. 
This is followed by a measurement of an observable $M$.  
Here, we assume $-I/2\leq M \leq I/2$ because we can always normalize a bounded observable acting on a finite-dimensional system. 
This ensures that the measurement outcomes for $M$ are bounded in $[-1/2, 1/2]$.

We suppose that the classical information of interest is the expectation value of the final state for the observable $M$, i.e., $\Tr(M\Lambda(\tau))$.
The expectation value can be estimated by taking the statistical average over many measurement outcomes. 
The Hoeffding inequality~\cite{Hoeffding1963probability} ensures that $\mO(\log(1/\delta)/\beta^2)$ samples provide the estimate with accuracy $\beta$ with probability $1-\delta$. 

Suppose that we are not in possession of the state $\tau$ but instead have access to another state $\rho$ together with free operations $\mO$.  
For simplicity, we also suppose that the accessible free operations only accommodate a single copy of $\rho$ coherently, in the sense that they are one-shot protocols. 
This will allow us to provide a description particularly suitable for near-term technologies, where the size of available quantum devices is restricted; a more general characterization can be obtained by assuming that the input state is of the form $\rho^{\otimes n}$ for some number of copies of a state $\rho$.

One conventional way to estimate the desired expectation value is to prepare $D^\varepsilon(\rho)$ copies of the state $\tau$ from the available state $\rho$ by a distillation protocol. 
The optimal distillation protocol can prepare a state $\tilde \tau$ that is $\varepsilon$-close to $\tau^{\otimes D^\varepsilon(\rho)}$. 
{Recalling that we are interested in the expectation value of a single copy of $\tau$, we will assume}
that every reduced state of $\tilde \tau$ is identical --- which can be realized by symmetrization --- and let $\tau'$ be the reduced state of $\tilde \tau$ with the same size of $\tau$. 
Since the partial trace does not increase the trace distance, we have $\frac{1}{2}\|\tau' - \tau\|_1\leq \varepsilon$.
We then get 
\bal
|\Tr(M\Lambda(\tau'))-\Tr(M\Lambda(\tau))|&\leq \frac{1}{2}\|\Lambda(\tau')-\Lambda(\tau)\|_1\\
&\leq \frac{1}{2}\|\tau' - \tau\|_1\\
&\leq \varepsilon
\eal
for an arbitrary observable $M$ with $-I/2\leq M \leq I/2$.

Therefore, the distilled state admits the estimation of expectation value with the accuracy $\varepsilon$.
This means that $\mO((1/D^\varepsilon(\rho))\log(1/\delta)/\beta^2)$ copies of $\rho$ provide the estimate with accuracy $\beta+\varepsilon$ with probability $1-\delta$, where the number of copies differs by the factor $D^\varepsilon(\rho)$ compared to the case when the resource state $\tau$ is available. 

\subsection{Framework}

We now introduce virtual resource distillation. 
The basic idea is that one can apply classical postprocessing to help estimate the desired expectation value.  
Suppose that a state $\tilde \tau$ that is $\varepsilon$-close to $\tau^{\otimes m}$ can be decomposed into the form 
\bal
 \tilde \tau = \lambda_+ \Lambda_+(\rho) - \lambda_-\Lambda_-(\rho)
\eal
where $\Lambda_\pm\in\mO$ are free operations and $\lambda_\pm\geq 0$ are nonnegative numbers.
Since we assume here that $\Lambda_\pm$ are trace preserving, we have $\lambda_+-\lambda_-=1$.
Letting $\tau'_\pm$ be the reduced states of $\Lambda_\pm(\rho)$, the reduced state $\tau'$ of $\tilde \tau$ can be written as $\tau'=\lambda_+ \tau'_+ - \lambda_- \tau'_-$.
Defining $\gamma\coloneqq \lambda_+ + \lambda_-$ and $p_\pm = \lambda_\pm/\gamma$, the quantity
\begin{equation}\begin{aligned}
&\lambda_+\Tr(M\Lambda(\Lambda_+(\rho))) - \lambda_-\Tr(M\Lambda(\Lambda_-(\rho))) \\
 &= p_+\gamma\Tr(M\Lambda(\Lambda_+(\rho))) -  p_- \gamma \Tr(M\Lambda(\Lambda_-(\rho)))
\end{aligned}\end{equation}
corresponds to the desired expectation value $\Tr(M\Lambda(\tau))$ with error $\varepsilon$. 
This form ensures that the following procedure gives the estimator of the expectation value with bias $\varepsilon$.
\begin{enumerate}
    \item Flip the biased coin that lands heads with probability $p_+$ and tails with probability $p_-$.
    \item When we see heads, apply $\Lambda_+$ to $\rho$ and measure {$m$ commuting observables $M\otimes I^{\otimes m-1},\ I\otimes M \otimes I^{\otimes m-2}, \dots, I^{\otimes m-1}\otimes M$ to get outcomes $o_1,\dots,o_m$}.  
    Store the value $\gamma o_1,\dots,\gamma o_m$ to the classical register. If we see tails, apply $\Lambda_-$ to $\rho$, measure the same observables, and get $m$ measurement outcomes $o_1,\dots, o_m$. Store the value $-\gamma o_1,\dots,-\gamma o_m$ in the classical register. 
    \item Repeat the above process and take the sample average of the values stored in the classical register. 
\end{enumerate}

Due to the classical postprocessing, in which we multiply $\gamma$ or $-\gamma$ to the measurement outcome, the possible range of each random variable changes to $[-\gamma/2, \gamma/2]$.
The Hoeffding inequality ensures that this procedure allows us to estimate the desired expectation value with accuracy $\beta+\varepsilon$ with probability $1-\delta$ with $\mO((\gamma^2/m )\log (1/\delta)/\beta^2)$ samples.
Comparing to the way that the conventional distillation rate $D^\varepsilon(\rho)$ is involved in the sample number of $\rho$ motivates us to introduce the \emph{virtual resource distillation rate} as 
\begin{equation}\begin{aligned}
   V^\varepsilon(\rho)\coloneqq &\sup_m \frac{m}{C^\varepsilon(\rho,m)^2}
  \end{aligned}\end{equation}
where $C^\varepsilon(\rho,m)$ is the \emph{virtual resource distillation overhead} defined by 
\begin{equation}\begin{aligned}
  &C^\varepsilon(\rho,m)\\
  &\coloneqq \inf\lset\lambda_++\lambda_-\sbar \frac{1}{2}\|\tau^{\otimes m} - (\lambda_+ \Lambda_+(\rho) - \lambda_- \Lambda_-(\rho))\|_1\leq \varepsilon,\right.\\
   &\hphantom{\coloneqq \inf\lset\lambda_++\lambda_-\sbar\right.}\left.\lambda_\pm\geq 0,\ \lambda_+-\lambda_-=1,\ \Lambda_\pm \in \mO\rset.
\end{aligned}\end{equation}
In the above, we used $\inf$ rather than $\min$ to implicitly allow for pathological situations where a feasible decomposition of $\tilde\tau$ does not exist, and hence $C^\varepsilon(\rho,m) = \infty$. As long as the optimization is feasible --- as is the case in most of the practically encountered cases --- the optimum will always be achieved for all closed sets of operations $\mO$.

{We note that the above procedure can be easily adapted to virtually simulate the expectation value of any observable $M'$ acting on the many-copy state $\tau^{\otimes m}$; above, we only studied single-copy measurements that provide a natural justification to our definition of the distillation rate $V^\varepsilon$, but the approach itself is much more general.}

The above argument can also be extended to resource theories of quantum channels and combs. 
For channel theories, consider a quantum information processing protocol described by a superchannel $\Xi$ applying to a certain resource channel $\mA$ such that $\Xi(\mA)$ is a quantum state corresponding to the final state in the algorithm right before the terminating measurement $M$. 
Then,
\bal
 \left|\Tr(M\Xi(\mA))-\Tr(M\Xi(\mA'))\right|&\leq\frac{1}{2}\left\|\Xi(\mA)-\Xi(\mA')\right\|_1\\
 &\leq \frac{1}{2}\|\mA-\mA'\|_\diamond
\eal
where in the last inequality, we used that the diamond norm does not increase under superchannels and that the diamond distance for quantum states reduces to the trace distance.
This allows us to approximate $\mA^{\otimes m}$ from a given channel $\mE$ by $\lambda_+\Lambda_+(\mE)-\lambda_-\Lambda_-(\mE)$ where $\Lambda_\pm\in\mO$ are free superchannels. 

The case for resource theories of quantum combs goes similarly. 
A quantum information processing protocol can now be considered as a comb $\Upsilon$ applied to a certain resource comb $\Theta$. Then, the error in expectation value for using another comb $\Theta'$ instead of $\Theta$ is bounded by 
\bal
 \left|\Tr(M\Upsilon(\Theta))-\Tr(M\Upsilon(\Theta'))\right|&\leq\frac{1}{2}\left\|\Upsilon(\Theta)-\Upsilon(\Theta')\right\|_1\\
 &\leq \frac{1}{2}\|\Theta-\Theta'\|_c,
\eal
where the fact that the comb distance does not increase under the application of another comb is apparent from the definition of the comb distance.
This similarly allows us to approximate copies of the $\Theta^{\otimes m}$ from a given comb $\Upsilon$ by $\lambda_+\Lambda_+(\Upsilon)-\lambda_-\Lambda_-(\Upsilon)$ where $\Lambda_\pm\in\mO$ are free combs.

These observations can be summarized as the following definition of virtual resource distillation rate and overhead that can be applied to general types of resource objects.
\begin{definition}\label{def:rate and overhead}
For a resource theory with a set $\mO$ of free operations, the virtual resource distillation rate of a given object $X$ with respect to the target object $T$ is 
  \begin{equation}\begin{aligned}
   V^\varepsilon(X)\coloneqq &\sup_m \frac{m}{C^\varepsilon(X,m)^2}
  \end{aligned}\end{equation}
with the virtual resource distillation overhead $C^\varepsilon(X,m)$ defined by 
\bal
  C^\varepsilon(X,m)\coloneqq &\inf\lset\lambda_++\lambda_-\sbar T^{\otimes m} \sim_\varepsilon \lambda_+ \Lambda_+(X) - \lambda_- \Lambda_-(X),\right.\\
   &\left.\lambda_\pm\geq 0,\ \lambda_+-\lambda_- = 1,\ \Lambda_\pm \in \mO\rset.
\eal
\end{definition}
We remark that the overhead is equivalently written as 
\bal
  C^\varepsilon(X,m)= &\inf\lset\sum_i |\lambda_i|\sbar T^{\otimes m} \sim_\varepsilon \sum_i \lambda_i \Lambda_i(X),\right.\\
   &\left.\lambda_i\in\mbR\ \forall i,\ \sum_i\lambda_i= 1,\ \Lambda_i \in \mO\ \forall i\rset.
\eal
The form in Definition~\ref{def:rate and overhead} is recovered by letting $\lambda_+\coloneqq \sum_{i:\lambda_i\geq0}\lambda_i$ and $\lambda_-\coloneqq \sum_{i:\lambda_i<0}(-\lambda_i)$, as well as $\Lambda_+\coloneqq \lambda_+^{-1}\sum_{i:\lambda_i\geq 0}\lambda_i \Lambda_i$ and $\Lambda_-\coloneqq \lambda_-^{-1}\sum_{i:\lambda_i< 0}(-\lambda_i) \Lambda_i$, and noting that $\Lambda_\pm\in\mO$ follows due to the convexity of $\mO$. 

Since the conventional distillation can be reconstructed with a restriction $\lambda_-=0$, we always have $D^\varepsilon(X)\leq V^\varepsilon(X)$. 
We will see later that this inequality is strict in many cases. 

{We stress here that this framework is conceptually very different from many-copy distillation protocols, which are often encountered in practical applications of conventional distillation. Namely, at all stages of the virtual distillation process, only single-copy operations are used, and no joint channels acting on $\rho^{\otimes n}$ are needed.}

We note also a superficial conceptual similarity to a recent approach of~\cite{regula_2022-3}, where distillation under non-completely-positive resource manipulation protocols was considered; however, that framework does not require the operations to be implementable through a classical postprocessing of physical (completely positive) free operations, yielding a setting that may be difficult to directly compare with ours.

We further remark that several protocols known as virtual cooling~\cite{Cotler2019quantum} and virtual distillation in quantum error mitigation~\cite{Koczor2021exponential,Huggins2021virtual} share a similar idea that classical postprocessing enables us to simulate purer quantum states, although they differ from our framework --- these techniques extract the measurement statistics for purer quantum objects by coherently interacting multiple copies of noisy objects, while our virtual resource distillation applies a probabilistic operation to a single copy of the noisy quantum object, which is inspired by and related to the techniques for simulating unphysical objects~\cite{Buscemi2013twopoint,Jiang2021physical,Regula2021operational}, and error mitigation technique based on quasiprobability~\cite{PhysRevLett.119.180509}.


\subsection{Probabilistic distillation}\label{sec:prob}

A more general form of distillation protocols are those that can succeed only with some probability. Here, we describe the basic setting, compare it with virtual distillation, and discuss the possibility of extending virtual resource distillation to probabilistic protocols.

Let us begin with conventional distillation in state theories.
General probabilistic operations are represented by subchannels (completely positive trace-nonincreasing maps). 
Any such map can be thought of as being part of a quantum instrument, that is, a collection of probabilistic operations $\{\Lambda_i\}_i$ such that the overall transformation $\sum_i \Lambda_i$ is trace preserving. The outcome $i$, obtained with probability $p_i \coloneqq \Tr \Lambda_i(\rho)$, then corresponds to the final state $\Lambda_i(\rho) / p_i$. Among such probabilistic operations, we define a subset $\mO_{\leq 1}$ of subchannels and call it free subchannels.
Then, probabilistic distillation is a process that transforms a given state to a desired target state with $\mO_{\leq 1}$ with some probability.

{%
The distillation process should, upon success, output a state close to the target state, which can then be measured to estimate the expectation value of interest. 
The experimenters, knowing if the distillation did or did not succeed, can postselect only the successful outcomes of the process. 
To collect a sufficient number of samples to estimate the expectation value with the desired accuracy, one needs to use a number of samples that is inversely proportional to the success probability. 
This motivates the definition of a probabilistic one-shot distillation rate with respect to the target state $\tau$ as
\bal
D^\varepsilon_p(\rho):=\sup_{\Lambda^p\in\mO_{\leq 1}}\lset m \Tr(\Lambda^p(\rho)) \sbar \frac{\Lambda^p(\rho)}{\Tr(\Lambda^p(\rho))}\sim_\varepsilon \tau^{\otimes m} \rset.
\eal
Here the superscript $p$ denotes the probabilistic nature of subchannels. 

Such an approach is seemingly very similar to virtual distillation: multiple samples are taken by applying free operations to a single copy of $\rho$, and a protocol can only succeed by collecting a sufficient number of them.
As we showed in~\cite{maintext}, virtual distillation can offer strict improvements over probabilistic distillation, and in particular there exist cases when $D^\varepsilon_p(\rho) = 0 < V^\varepsilon(\rho)$. 
However, the fact that free \emph{sub}channels $\mO_{\le 1}$ are employed in probabilistic distillation --- which are, in general, a strictly larger class of maps than $\mO$ --- means that a direct comparison between the probabilistic one-shot rate $D^\varepsilon_p$ and the virtual rate $V^\varepsilon$ may not be possible in general.

Let us then formalize an explicit extension of virtual distillation to subchannels, which we will allow for a more direct comparison with conventional probabilistic approaches.
}%
Suppose that the target state can be written as 
\bal
\tau^{\otimes m}\sim_\varepsilon \lambda_+\frac{\Lambda^{\yx{p}}_+(\rho)}{\Tr(\Lambda^{\yx{p}}_+(\rho))}-\lambda_-\frac{\Lambda^{\yx{p}}_-(\rho)}{\Tr(\Lambda^{\yx{p}}_-(\rho))}
\label{eq:target decomposition probabilistic postprocess}
\eal
for $\lambda_\pm\geq 0$ and $\Lambda^{\yx{p}}_\pm\in\mO_{\leq 1}$.
Then, extending the deterministic virtual resource distillation introduced above, the expectation value can be estimated in the following manner. 
\begin{enumerate}
    \item Flip a biased coin that lands on heads with probability $p_+\coloneqq\frac{\lambda_+}{\lambda_++\lambda_-}$ and on tails with probability $p_-\coloneqq\frac{\lambda_-}{\lambda_++\lambda_-}$.
    \item When we see heads, apply $\Lambda^{\yx{p}}_+$ to $\rho$. 
    If failure is reported, start over from Step~1. If successful, measure {$m$ commuting observables $M\otimes I^{\otimes m-1},\ I\otimes M \otimes I^{\otimes m-2}, \dots, I^{\otimes m-1}\otimes M$ to get outcomes $o_1,\dots,o_m$}.  
    Store the value $\gamma o_1,\dots,\gamma o_m$ with $\gamma\coloneqq \lambda_++\lambda_-$ to the classical register. If we see tails, apply $\Lambda^{\yx{p}}_-$ to $\rho$ and follow the same procedure.
    \item Repeat the above process and take the sample average of the values stored in the classical register. 
\end{enumerate}
Note that postselection is involved in the second step, which makes the protocol probabilistic. 
This process provides an estimator with bias $\varepsilon$. 
The number of samples to use scales with $\gamma^2$ by the same mechanism for the deterministic case, as well as the average success probability in Step~2, $\frac{\lambda_+\Tr(\Lambda^{\yx{p}}_+(X))+\lambda_-\Tr(\Lambda^{\yx{p}}_-(X))}{\lambda_++\lambda_-}$, which linearly contributes to the sampling cost. 
This motivates us to introduce the probabilistic virtual distillation rate with respect to a target state $\tau$ defined as 
\begin{equation}\begin{aligned}
&V_p^\varepsilon(\rho)\\
&\coloneqq \sup_{\Lambda^{\yx{p}}_\pm\in\mO_{\leq 1}} \lset \frac{m\left[\lambda_+\Tr(\Lambda^{\yx{p}}_+(\rho))+\lambda_-\Tr(\Lambda^{\yx{p}}_-(\rho))\right]}{(\lambda_++\lambda_-)^3}\sbar \lambda_\pm\geq 0,\right.\\
&\left. \quad\tau^{\otimes m}\sim_\varepsilon \lambda_+\frac{\Lambda^{\yx{p}}_+(\rho)}{\Tr(\Lambda^{\yx{p}}_+(\rho))}-\lambda_-\frac{\Lambda^{\yx{p}}_-(\rho)}{\Tr(\Lambda^{\yx{p}}_-(\rho))},\ \lambda_+-\lambda_-=1\rset.
\label{eq:virtual probabilistic state}
\end{aligned}\end{equation}
\yx{Since probabilistic distillation could outperform deterministic distillation, we expect that probabilistic virtual distillation also outperforms deterministic virtual distillation. We leave a detailed study of this advantage in a future work.}

Alternatively, the expectation value can be estimated without postselection.
Let us rewrite \eqref{eq:target decomposition probabilistic postprocess} with $\lambda_\pm\rightarrow\lambda_{\pm}\Tr(\Lambda^{\yx{p}}_\pm(\rho))$, which gives $\tau^{\otimes m}=\lambda_+\Lambda^{\yx{p}}_+(\rho)-\lambda_-\Lambda^{\yx{p}}_-(\rho)$. 
With this $\lambda_\pm$, we follow the same procedure as above, except that instead of postselecting on the successful events in Step~2, we store the value 0 upon failure. 
This gives the virtual distillation rate without postselection as
\begin{equation}\begin{aligned}
&\tilde V_p^\varepsilon(\rho)\\
&\coloneqq \sup_{\Lambda^{\yx{p}}_\pm\in\mO_{\leq 1}} \lset \frac{m}{(\lambda_++\lambda_-)^2}\sbar\tau^{\otimes m}\sim_\varepsilon \lambda_+\Lambda^{\yx{p}}_+(\rho)-\lambda_-\Lambda^{\yx{p}}_-(\rho),\right.\\
&\left.\quad\lambda_\pm\geq 0,\ \lambda_+\Tr(\Lambda^{\yx{p}}_+(\rho))-\lambda_-\Tr(\Lambda^{\yx{p}}_-(\rho))=1\rset
\label{eq:virtual probabilistic without postselection}
\end{aligned}\end{equation}

We remark that the values of $\lambda_\pm$ in \eqref{eq:virtual probabilistic state} do not explicitly depend on $\Tr(\Lambda^{\yx{p}}_\pm(\rho))$ as $\lambda_+-\lambda_-= 1$. 
On the other hand, the values for $\lambda_\pm$ in \eqref{eq:virtual probabilistic without postselection} are larger than those in \eqref{eq:virtual probabilistic state} by a factor of $1/\Tr(\Lambda^{\yx{p}}_\pm(\rho))$. 
This makes $V_p^\varepsilon$ scale with $\Tr(\Lambda^{\yx{p}}_\pm(\rho))$ while
$\tilde V_p^\varepsilon$ scale with $\left[\Tr(\Lambda^{\yx{p}}_\pm(\rho))\right]^{2}$, reflecting the absence of postselection. 
Therefore, $\tilde V_p^\varepsilon$ becomes significantly smaller than $V_p^\varepsilon$ when the success probability of the free subchanenls is small. 

The discussion becomes more involved for channel theories. 
The probabilistic channel transformation can be formalized by subsuperchannels, which transform channels to subchannels even when acting only on a part of a larger system~\cite{burniston2019necessary}. 
The difference from the case of state theories is that the success probability of the protocol depends not only on the description of the subchannel but also input states.  
Therefore, to ensure that the resultant channel is close to the target channel upon success, we need to make sure that all output states are close to the desired final states upon success~\cite{Regula2021fundamental}. 

The number of samples to ensure the estimation of expectation values with the desired accuracy also depends on the success probability, which depends on input states. 
It is therefore reasonable to take the worst-case scenario and define the probabilistic distillation rate of a channel $\mE$ with respect to the target channel $\mA$ as 
\bal
D^\varepsilon_p(\mE):=&\sup_{\Lambda\in\mO_{\leq 1}}\min_\rho\lset m \Tr({\rm id}\otimes\Lambda^{\yx{p}}(\mE)(\rho)) \sbar\right.\\
&\left.\frac{{\rm id}\otimes\Lambda^{\yx{p}}(\mE)(\sigma)}{\Tr({\rm id}\otimes\Lambda^{\yx{p}}(\mE)(\sigma))}\sim_\varepsilon {\rm id}\otimes \mA^{\otimes m}(\sigma),\ \forall\sigma \rset,
\eal
with the minimization being over all possible input states $\rho$.

Analogously, the probabilistic virtual resource distillation rate with postselection can be written as
\begin{widetext}
\bal
V_p^\varepsilon(\mE)\coloneqq &\sup_{\Lambda^{\yx{p}}_\pm\in\mO_{\leq 1}}\min_\rho \lset \frac{m\left[\lambda_+\Tr({\rm id}\otimes\Lambda^{\yx{p}}_+(\mE)(\rho))+\lambda_-\Tr({\rm id}\otimes\Lambda^{\yx{p}}_-(\mE)(\rho))\right]}{(\lambda_++\lambda_-)^3}\sbar\right.\\
&\left. \quad{\rm id}\otimes \mA^{\otimes m}\sim_\varepsilon \lambda_+\frac{{\rm id}\otimes\Lambda^{\yx{p}}_+(\mE)(\sigma)}{\Tr({\rm id}\otimes\Lambda^{\yx{p}}_+(\mE)(\sigma))}-\lambda_-\frac{{\rm id}\otimes\Lambda^{\yx{p}}_-(\mE)(\sigma)}{\Tr({\rm id}\otimes\Lambda^{\yx{p}}_-(\mE)(\sigma))}\ \forall \sigma,\ \lambda_+-\lambda_-=1,\ \lambda_\pm\geq 0\rset.
\eal
\end{widetext}

Unlike the state case, the probabilistic virtual resource distillation rate without postselection does not work in general, as 
\bal
{\rm id}\otimes \mA^{\otimes m}(\sigma)\sim_\varepsilon \lambda_+{\rm id}\otimes\Lambda^{\yx{p}}_+(\mE)(\sigma)-\lambda_-{\rm id}\otimes\Lambda^{\yx{p}}_-(\mE)(\sigma)\ \forall\sigma
\eal
is not generally satisfied when $\Tr[{\rm id}\otimes\Lambda^{\yx{p}}_\pm(\mE)(\sigma)]$ differs depending on $\sigma$. 
Indeed, if we take the trace on both sides, the left-hand side always gives $\Tr[{\rm id}\otimes \mA^{\otimes m}(\sigma)]=1,\, \forall \sigma$ while the right-hand side can vary for different states $\sigma$.

An analogous extension can be made to the resource theories of combs, where a similar subtlety about the success probability depending on the input channels and states remains.  

{In the rest of this paper, we focus on the deterministic virtual rates defined through $V^\varepsilon$ and $C^\varepsilon$, which are more easily characterizable than the probabilistic virtual rates, while at the same time already being sufficiently general to allow for considerable advantages over conventional distillation.}


\subsection{Estimation of probability distribution}
The above discussion shows that when there exist $\Lambda_\pm\in\mO$ and $\lambda_\pm\geq 0$ such that $T\sim_\varepsilon \lambda_+\Lambda_+(X)-\lambda_-\Lambda_-(X)$, any expectation value of a target object $T$ can be obtained by measuring  $\Lambda_{\pm}(X)$ instead.
Here, we apply this observation to the estimation of probability distributions of $T$.

Let $\eta$ be an arbitrary output state resulting from a target object $T$. For state theories, $\eta$ coincides with $T$, while for channel and comb theories $\eta$ is an output from $T$ with an arbitrary input state. 
Suppose that we measure $\eta$ in the computational basis. For each measurement, we get a one-shot measurement outcome $o_j$ with probability $p(j)=\tr[\eta\ketbra{j}{j}]$. After $N$ independent measurements, we will get $n(j)$ counts for outcome $o_j$ with $\sum_j n(j) = N$\yx{, and according to the Hoeffding inequality~\cite{Hoeffding1963probability}, with failure probability $\delta\in (0,1)$, we have
\begin{equation}
 \bigg|p(j) - \tilde p(j)\bigg|=  \mO\left(\sqrt{\frac{\log\delta}{N}}\right),
\end{equation}
where $\tilde p(j) = n(j)/N$.
}

Next we consider how to simulate this measurement process with $\Lambda_{\pm}(X)$.
Since the projector $\ketbra{j}{j}$ is also an observable, we can directly apply the virtual resource distillation framework developed above. 
Let $\eta_\pm$ be output states from $\Lambda_\pm(X)$.
To ensure the same accuracy, we use $(\lambda_++\lambda_-)^2 N$ copies of $X$.
With probability $\lambda_\pm/(\lambda_++\lambda_-)$, we measure $\eta_\pm$ in the computational basis, multiplying the outcomes --- +1 (click) or 0 (no click) --- by $\pm(\lambda_++\lambda_-)$ for each $j$ and take the sample average. 
This is equivalent to measuring $\eta_\pm$ for $N_{\pm}\coloneqq(\lambda_++\lambda_-)^2\frac{\lambda_\pm}{\lambda_++\lambda_-}N=\lambda_{\pm}(\lambda_++\lambda_-)N$ times, where our estimate for the probability is 
\bal
p'(j) \coloneqq (\lambda_++\lambda_-)\left[\frac{n_+(j)}{N_+}- \frac{n_-(j)}{N_-}\right]
\eal
with $n_\pm(j)$ standing for the number of times the outcome $o_j$ was observed among $N_\pm$ measurements. 
The general framework of virtual resource distillation developed above ensures that 
\bal
 \left|p(j) - p'(j)\right|=\mO\left(\yx{\sqrt{\frac{\log\delta}{N}}}\right) + \varepsilon.
\eal

This also implies that the measurement counts $n(j)$ can be approximated as
\begin{equation}
    n(j) \approx Np'(j) = \frac{n_+}{\lambda_+} - \frac{n_-}{\lambda_-}.
\end{equation}
\yx{with negligible error and failure probability.}


\subsection{Virtual resource monotones}

Virtual distillation rate is an operationally motivated quantity and may be hard to evaluate exactly for some settings. 
Therefore, it will be useful to establish other quantities that can help evaluate the virtual distillation rate. 
Here, we introduce a notion of a monotone that can always be used to bound the virtual distillation overhead with zero error. 

\begin{proposition}\label{prop:virtual_monotones}
Let $M$ be a function that obeys the following properties:
\begin{enumerate}
	\item $M(X) \geq M\left( \mu_+ \Lambda_+ (X) - \mu_- \Lambda_+ (X) \right)\; \forall \Lambda_+,\Lambda_-\in \mO, \; \mu_++\mu_- = 1$,
	\item $M(\mu X) = \mu M(X)\; \forall \mu > 0$. 
\end{enumerate}
Then,
\begin{equation}\begin{aligned}\label{Eq:costmonotonebound}
	C^{0}(X,m) \geq \frac{ M( T^{\otimes m} )}{ M( X )}.
\end{aligned}\end{equation}
\end{proposition}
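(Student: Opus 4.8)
The plan is to show that the quantity $\lambda_+ + \lambda_-$ of \emph{every} feasible point in the optimization defining $C^{0}(X,m)$ is bounded below by the ratio $M(T^{\otimes m})/M(X)$, and then to pass to the infimum. If the optimization is infeasible, then $C^{0}(X,m) = \infty$ and \eqref{Eq:costmonotonebound} holds trivially, so I may assume that some exact decomposition $T^{\otimes m} = \lambda_+ \Lambda_+(X) - \lambda_- \Lambda_-(X)$ with $\lambda_\pm \geq 0$, $\lambda_+ - \lambda_- = 1$, and $\Lambda_\pm \in \mO$ exists (recall that for $\varepsilon = 0$ the relation $\sim_0$ is exact equality).

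Fix such a decomposition and set $\gamma \coloneqq \lambda_+ + \lambda_-$. Since $\lambda_+ = \lambda_- + 1 \geq 1$, we have $\gamma \geq 1 > 0$, so positive homogeneity (property~2) applies with scalar $\gamma$. First I would factor $\gamma$ out of the decomposition and use property~2:
\[
M(T^{\otimes m})
= M\!\left( \gamma\Big( \tfrac{\lambda_+}{\gamma}\Lambda_+(X) - \tfrac{\lambda_-}{\gamma}\Lambda_-(X) \Big) \right)
= \gamma\, M\!\left( \tfrac{\lambda_+}{\gamma}\Lambda_+(X) - \tfrac{\lambda_-}{\gamma}\Lambda_-(X) \right).
\]
Then I would observe that the normalized coefficients satisfy $\tfrac{\lambda_+}{\gamma} + \tfrac{\lambda_-}{\gamma} = 1$ with both nonnegative, so property~1 applies directly with $\mu_\pm = \lambda_\pm/\gamma$ and with the free operations $\Lambda_\pm$ taken from the decomposition, giving $M\!\left( \tfrac{\lambda_+}{\gamma}\Lambda_+(X) - \tfrac{\lambda_-}{\gamma}\Lambda_-(X) \right) \leq M(X)$. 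Chaining the two relations yields $M(T^{\otimes m}) \leq \gamma\, M(X) = (\lambda_+ + \lambda_-)\,M(X)$. Dividing by $M(X)$ — using that $M(X) > 0$, which is the case of interest (if $M(X)\le 0$ the bound is either vacuous or trivial since $C^{0}(X,m)\geq 1$ whenever feasible) — gives $\lambda_+ + \lambda_- \geq M(T^{\otimes m})/M(X)$ for every feasible point, and taking the infimum over all feasible $(\lambda_\pm,\Lambda_\pm)$ gives $C^{0}(X,m) \geq M(T^{\otimes m})/M(X)$.

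I do not expect a genuine obstacle here: the argument is just homogeneity followed by one application of the monotonicity axiom. The only points needing care are (i) that $M$ is only ever evaluated on affine combinations of objects on which it is defined, which is guaranteed because properties~1 and~2 are stated precisely for such combinations, and (ii) the degenerate cases (infeasibility, $M(X)\le 0$), which are disposed of by the remarks above. One could equally run the same computation starting from the $\sum_i |\lambda_i|$ form of $C^{0}$ noted after Definition~\ref{def:rate and overhead}, but the two-term form used above is the most direct route.
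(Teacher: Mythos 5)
Your proposal is correct and follows essentially the same route as the paper's proof: factor out $\gamma=\lambda_++\lambda_-$ by positive homogeneity, apply the monotonicity axiom to the normalized combination with $\mu_\pm=\lambda_\pm/\gamma$, and optimize over feasible decompositions. Your extra care about infeasibility and the sign of $M(X)$ is a reasonable tightening of the same argument, not a different approach.
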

\begin{proof}
For any operations $\Lambda_\pm \in {\mO}$ and any $\lambda_\pm > 0$, we can write
\begin{equation}\begin{aligned}
	& M\left(\lambda_+ \Lambda_+ (X) - \lambda_- \Lambda_- (X)\right)\\
        \quad &= M\left( [ \lambda_+ + \lambda_-] \frac{\lambda_+ \Lambda_+ (X) - \lambda_- \Lambda_- (X)}{\lambda_+ + \lambda_-} \right)\\
        \quad &= [ \lambda_+ + \lambda_-] \,M\left( \frac{\lambda_+ \Lambda_+ (X) - \lambda_- \Lambda_- (X)}{\lambda_+ + \lambda_-} \right)\\
        \quad &\leq [ \lambda_+ + \lambda_-] \, M(X)
\end{aligned}\end{equation}
where the last two lines are by definition of a virtual monotone. Optimizing over all virtual operations yields the stated result.
\end{proof}

Examples of virtual resource monotones are:
\begin{enumerate}
\item In the resource theory of entanglement, the base norm $\|\rho\|_{\mc S} = \min \lset \mu_+ + \mu_- \sbar \rho = \mu_+ \sigma_+ - \mu_- \sigma_-,\; \sigma_\pm \in {\mc S} \rset$, where $\mS$ is the set of separable states, is a virtual monotone under all separability-preserving operations. This quantity is directly related to the (standard) robustness of entanglement~\cite{vidal1999robustness}.
\item Also in the resource theory of entanglement, the negativity $\|\rho^{\Gamma}\|_1$~\cite{Vidal2002computable} is a virtual monotone under all PPT operations.
\item In the resource theory of coherence, the $\ell_1$ norm of coherence $\|\rho\|_{\ell_1}$~\cite{Baumgratz14} is a virtual monotone under all incoherent operations.
\end{enumerate}

One special virtual resource monotone is the inverse virtual distillation overhead,
\begin{equation}
    \tilde{M}(X, m) \coloneqq \frac{1}{C^{0}(X,m)}.
\end{equation}
To prove the first property, we denote $X' = \lambda_+ \Lambda_+ (X) - \lambda_- \Lambda_+ (X)$ and have 
\begin{equation}\begin{aligned}
    &C^{0}\left(X' ,m \right) \\
    &= \inf \left\{\tilde \lambda_++\tilde \lambda_-\,|\,\tilde\lambda_+\tilde \Lambda_+ (X') - \tilde \lambda_- \tilde \Lambda_+ (X')= T^{\otimes m} \right\},\\
    &=\inf \left\{\tilde \lambda_++\tilde \lambda_-\,|\,\left(\tilde\lambda_+\lambda_+\tilde \Lambda_+\circ \Lambda_+ + \tilde\lambda_-\lambda_-\tilde \Lambda_-\circ \Lambda_- \right) ( X)\right.\\
    &\quad\left.- \left(\tilde\lambda_+\lambda_-\tilde \Lambda_+\circ \Lambda_- + \tilde\lambda_-\lambda_+\tilde \Lambda_-\circ \Lambda_+ \right) ( X)= T^{\otimes m} \right\}.
\end{aligned}\end{equation}
Note that 
\begin{equation}
    \tilde \lambda_++\tilde \lambda_- = \tilde\lambda_+\lambda_+ + \tilde\lambda_-\lambda_- + \tilde\lambda_+\lambda_- + \tilde\lambda_-\lambda_+, 
\end{equation}
thus we have 
\begin{equation}
    C^{0}\left(X' ,m \right) \ge C^{0}\left(X ,m \right)
\end{equation}
and hence
\begin{equation}
    \tilde{M}(X, m) \ge \tilde{M}(X', m).
\end{equation}
It is obvious that $\tilde{M}(\lambda X, m) = \lambda \tilde{M}(X, m)$.

According to Eq.~\eqref{Eq:costmonotonebound}, we have
\begin{equation}
    M(X) \ge M(T^{\otimes m}) \cdot \tilde{M}(X, m).
\end{equation}
Consider normalized virtual monotones with  $M(T^{\otimes m}) = 1$, we thus have
\begin{equation}
   M(X) \ge  \tilde{M}(X, m). 
\end{equation}
That is, the specific virtual monotone $\tilde{M}(X, m)$ lower bounds all normalized virtual monotones.


\section{Evaluation of virtual resource distillation overhead}

Evaluating the virtual resource distillation rate is generally a formidable task, mainly due to the optimization over the number $m$ of copies of the target object. 
Here, we show that a closely related quantity, namely the virtual resource distillation overhead $C^\varepsilon(\rho,m)$ for fixed $m$, can be efficiently characterized in general state theories.

 We first present a useful alternative form of the distillation overhead.
The assumption that $\mO$ is convex ensures that the overhead $C^\varepsilon(\rho,m)$ is a solution of a convex optimization program.  
Therefore, taking the convex dual (see Appendix~\ref{app:dual}), we obtain an alternative expression for state theories as 
\begin{widetext}
\bal
C^\varepsilon(\rho,m)= \inf_{\substack{\tilde \tau\sim_\varepsilon \tau^{\otimes m}\\\Tr(\tilde \tau)=1}}\sup_{W\in{\rm Herm}}\lset 2\Tr(W\tilde \tau)-1\sbar 0\leq \Tr(W\Lambda(\rho))\leq 1,\ \forall\Lambda\in\mO\rset
\label{eq:overhead dual}
\eal
\end{widetext}
where Herm is the set of Hermitian operators. 
This form can be extended to channel and combs theories by considering Choi operators in place of quantum states.

The overhead with $\varepsilon=0$ can be computed by linear or semidefinite programming if the structure of the free operations is sufficiently simple. 
For example, if the set of free objects is the convex hull of a finite number of objects $\{f_i\}_i$, an operation $\Lambda$ is resource non-generating if and only if 
\begin{equation}
 \forall i,\, \Lambda(f_i) = \sum_j p_j f_j,\, p_j\geq0,\,\sum_j p_j = 1,\, f_j \in \mF.
\end{equation}
Then the requirement of a resource non-generating operation is characterized by a linear constraint, making the overhead computable by linear programming. 
This includes resource theories such as purity, thermodynamics, coherence, and magic.

There also exist resource theories with the infinite number of extreme free resource objects, such as the theory of entanglement or its dynamical counterpart, the theory of quantum memory. We then need to relax the requirement of free operations by considering free objects either induced from a finite number of extreme objects or characterizable via a semidefinite constraint, making the overhead computable by semidefinite programming (SDP). Both ways will give an upper bound to the overhead and the bound can be tight with better approximations.

As an example, we write down the SDP for entanglement under positive partial transpose (PPT) operations, i.e., the set of bipartite channels $AB \to A'B'$ whose Choi operators are PPT across the bipartition $AA' : BB'$. 
Suppose the unnormalized Choi operator of a PPT channel $\mc N$ is $J^{A'ABB'}_{\mc N}$, it should satisfy 
\begin{equation}
    J^{A'ABB'}_{\mc N}\ge 0,~\tr_{AB}[J^{A'ABB'}_{\mc N}] = I_{A'B'},~(J^{A'ABB'}_{\mc N})^{T_{BB'}}\ge 0.
\end{equation}
For any input state $\rho_{AB}$, the output state is
\begin{equation}
    \mc N(\rho_{AB}) = \tr_{A'B'}[\rho_{A'B'}^T\cdot J^{A'ABB'}_\mN].
\end{equation}
Then the virtual distillation overhead of $\rho_{AB}$ with respect to the Bell state $\Phi\coloneqq\dm{\Phi}$ with $\ket{\Phi}=\frac{1}{\sqrt{2}}(\ket{00}+\ket{11})$ is
\begin{equation}
\begin{aligned}
&C^\varepsilon(\rho_{AB}) =\min \big\{ \lambda_++\lambda_- \;|\\
 &\qquad\tr_{A'B'}[\rho_{A'B'}^T\cdot J^{A'ABB'}_{\Lambda_+}] - \tr_{A'B'}[\rho_{A'B'}^T\cdot J^{A'ABB'}_{\Lambda_-}] = \Phi,\\
&\qquad J^{A'ABB'}_{ \Lambda_+},~ J^{A'ABB'}_{ \Lambda_-}\ge 0,\\
&\qquad \tr_{AB}[J^{A'ABB'}_{ \Lambda_+}] = \lambda_+I_{A'B'},~\tr_{AB}[J^{A'ABB'}_{ \Lambda_-}]=\lambda_-I_{A'B'} \\
&\qquad (J^{A'ABB'}_{\Lambda_+})^{T_{BB'}},~(J^{A'ABB'}_{\Lambda_-})^{T_{BB'}}\ge 0 \, \big\}.
\end{aligned}
\end{equation}

\subsection{Tight bounds for general resources of quantum states}

We will now show that much simpler bounds based on convex and semidefinite programming can be obtained for state theories. This will remove the need to optimize over all free operations, and apply also to resource theories such as quantum entanglement.
We introduce general upper and lower bounds on the distillation overhead in general quantum resource theories. They depend on several resource measures, whose definitions we now recall.

For a set $\mc F$ of free states, define the generalized robustness $R^g_{\mc F}$~\cite{vidal1999robustness,PhysRevA.67.054305,Harrow2003robustness}, the standard robustness $R^s_{\mc F}$~\cite{vidal1999robustness}, and the resource fidelity $F_{\mc F}$ as
\begin{equation}\begin{aligned}
	R^g_{\mc F} (\rho) \coloneqq& \inf \lset \lambda \sbar \frac{\rho + \lambda \omega}{1+\lambda} \in \mc F,\; \omega \in \mc D \rset\\
	R^s_{\mc F} (\rho) \coloneqq& \inf \lset \lambda \sbar \frac{\rho + \lambda \sigma}{1+\lambda} \in \mc F,\; \sigma \in \mc F \rset\\
 F_{\mc F} (\rho) \coloneqq& \max_{\sigma \in \mc F} F(\rho , \sigma)
\end{aligned}\end{equation}
where $\mD$ is the set of quantum states and $F$ is the fidelity. We remark that when $\rho$ is a pure state, it holds that $F_{\mc F} (\rho) = \max_{\sigma \in \mc F} \tr (\rho \sigma)$. 
Define now the following optimization problem:
\begin{equation}\begin{aligned}\label{eq:general_cost_exact}
	\zeta_\varepsilon^s(\rho, k)
	\coloneqq \min \big\{ \mu_+ + \mu_- \;\big|\;& 0 \leq Q_+ \leq \mu_+ \id,\; 0 \leq Q_- \leq \mu_- \id,\\
	& \tr Q_+ \sigma \leq \frac{\mu_+}{k}\;\; \forall \sigma \in {\mc F},\\
	&\tr Q_- \sigma \leq \frac{\mu_-}{k} \; \forall \sigma \in {\mc F},\\
	& \mu_+ - \mu_- = 1,\\
	&\tr \rho (Q_+ - Q_-) \geq 1- \varepsilon \big\},
\end{aligned}\end{equation}
where $k$ is some parameter to be fixed.
We also define $\zeta_\varepsilon^g(\rho,k)$ to be the same optimization except that the inequality constraints $\Tr(Q_+\sigma)\leq \mu_+/k$, $\Tr(Q_-\sigma)\leq\mu_-/k$, $\forall\sigma\in\mF$ become equality constraints.  
Then, we obtain the following general lower and upper bounds.

\begin{theorem}[Theorem~1 in the companion paper~\cite{maintext}]\label{prop:general_bounds_cost_parallel}
Consider a convex resource theory and a target pure resource state $\psi$. 
Let $\mO$ be the class of resource non-generating operations. If $R^s_{\mc F}(\psi) < \infty$, then
\begin{equation}
   \zeta^s_\varepsilon\left(\rho, F_{\mc F}(\psi^{\otimes m})^{-1}\right)\le C^\varepsilon(\rho,m) \leq  \zeta^s_\varepsilon\left(\rho, R^s_{\mc F} (\psi^{\otimes m}) + 1 \right).
   \label{eq:overhead standard SDP}
\end{equation}
Furthermore, if it holds that $\braket{\psi|\sigma|\psi}$ is constant for all $\sigma \in \mc F$, then 
\begin{equation}\begin{aligned}
 \zeta^g_\varepsilon\left(\rho, F_{\mc F}(\psi^{\otimes m})^{-1}\right)\le C^\varepsilon(\rho,m) \leq \zeta^g_\varepsilon\left(\rho, R^g_{\mc F}(\psi^{\otimes m}) + 1 \right).
 \label{eq:overhead generalized SDP}
\end{aligned}\end{equation}

\end{theorem}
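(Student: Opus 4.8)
The plan is to prove each of the four inequalities by an explicit construction that converts a feasible point of one program into a feasible point of the other. The key observation is that $\zeta^s_\varepsilon$ and $\zeta^g_\varepsilon$ are essentially what one gets by restricting the free operations in $C^\varepsilon(\rho,m)$ to binary \emph{measure-and-prepare} channels — measure the POVM $\{Q_\pm/\mu_\pm,\,\id-Q_\pm/\mu_\pm\}$ and output either the target $\psi^{\otimes m}$ or a fixed ``noise'' state — so the translation in both directions is concrete. I will write $k$ for the relevant constant and freely use the Heisenberg adjoint $\Lambda^\dagger$.

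For the lower bounds $\zeta^{s/g}_\varepsilon(\rho,F_{\mc F}(\psi^{\otimes m})^{-1})\le C^\varepsilon(\rho,m)$: given any feasible $\psi^{\otimes m}\sim_\varepsilon\lambda_+\Lambda_+(\rho)-\lambda_-\Lambda_-(\rho)$ with $\Lambda_\pm\in\mO$, set $\mu_\pm:=\lambda_\pm$ and $Q_\pm:=\lambda_\pm\Lambda_\pm^\dagger(\psi^{\otimes m})$. Complete positivity of $\Lambda_\pm$ gives $Q_\pm\ge 0$, and unitality of $\Lambda_\pm^\dagger$ together with $\psi^{\otimes m}\le\id$ gives $Q_\pm\le\mu_\pm\id$. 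For $\sigma\in\mc F$, $\tr(Q_\pm\sigma)=\lambda_\pm\tr(\psi^{\otimes m}\Lambda_\pm(\sigma))$ and, since $\Lambda_\pm$ is resource non-generating, $\Lambda_\pm(\sigma)$ is a free state of the output system, so $\tr(\psi^{\otimes m}\Lambda_\pm(\sigma))\le F_{\mc F}(\psi^{\otimes m})$; this is exactly the inequality constraint of $\zeta^s_\varepsilon$ at $k=F_{\mc F}(\psi^{\otimes m})^{-1}$, and it becomes the equality constraint of $\zeta^g_\varepsilon$ precisely when $\braket{\psi^{\otimes m}|\omega|\psi^{\otimes m}}$ is constant over free $\omega$, which is the hypothesis applied on the $m$-copy output system. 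Finally $\tilde\tau-\psi^{\otimes m}$ (with $\tilde\tau:=\lambda_+\Lambda_+(\rho)-\lambda_-\Lambda_-(\rho)$) is traceless Hermitian with $\tfrac12\|\tilde\tau-\psi^{\otimes m}\|_1\le\varepsilon$, hence $\braket{\psi^{\otimes m}|\tilde\tau-\psi^{\otimes m}|\psi^{\otimes m}}\ge-\varepsilon$, giving $\tr\rho(Q_+-Q_-)=\tr(\psi^{\otimes m}\tilde\tau)\ge 1-\varepsilon$, while $\mu_+-\mu_-=1$. So $(Q_\pm,\mu_\pm)$ is feasible with objective $\mu_++\mu_-=\lambda_++\lambda_-$, and taking the infimum over feasible decompositions proves the bound.

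For the upper bounds $C^\varepsilon(\rho,m)\le\zeta^{s/g}_\varepsilon(\rho,R^{s/g}_{\mc F}(\psi^{\otimes m})+1)$: take a feasible $(Q_\pm,\mu_\pm)$ at $k=R^{s/g}_{\mc F}(\psi^{\otimes m})+1$ (note $\mu_+\ge 1$; if $\mu_-=0$ then $Q_-=0$ and the $\Lambda_-$ branch is vacuous). Pick $\sigma^\star$ attaining the robustness, so that $\nu:=\tfrac1k\psi^{\otimes m}+\tfrac{k-1}{k}\sigma^\star\in\mc F$, with $\sigma^\star\in\mc F$ in the standard case and $\sigma^\star\in\mc D$ in the generalized case, and define
\begin{equation}
  \Lambda_\pm(Y):=\tr\!\big(\tfrac{Q_\pm}{\mu_\pm}Y\big)\psi^{\otimes m}+\tr\!\big((\id-\tfrac{Q_\pm}{\mu_\pm})Y\big)\sigma^\star,
\end{equation}
which is CPTP because $0\le Q_\pm\le\mu_\pm\id$. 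On a free input $f$ the weight $a:=\tr(Q_\pm f)/\mu_\pm$ satisfies $a\le 1/k$ (resp.\ $a=1/k$), so $\Lambda_\pm(f)=a\psi^{\otimes m}+(1-a)\sigma^\star=(ka)\nu+(1-ka)\sigma^\star$: a convex combination of the free states $\nu,\sigma^\star$ in the standard case, and $\nu$ itself in the generalized case — either way $\Lambda_\pm(f)\in\mc F$, so $\Lambda_\pm\in\mO$. With $\lambda_\pm:=\mu_\pm$ we get $\lambda_+\Lambda_+(\rho)-\lambda_-\Lambda_-(\rho)=t\psi^{\otimes m}+(1-t)\sigma^\star$ with $t:=\tr\rho(Q_+-Q_-)$; if $t>1$, replacing $Q_\pm$ by $(1-s)Q_\pm+s\tfrac{\mu_\pm}{k}\id$ with $s=\tfrac{t-1}{t-1/k}$ preserves all constraints and the objective while resetting $t$ to $1$, so WLOG $t\in[1-\varepsilon,1]$ and $\tfrac12\|t\psi^{\otimes m}+(1-t)\sigma^\star-\psi^{\otimes m}\|_1=\tfrac{1-t}{2}\|\sigma^\star-\psi^{\otimes m}\|_1\le\varepsilon$. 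Hence $(\Lambda_\pm,\lambda_\pm)$ is feasible for $C^\varepsilon(\rho,m)$ with the same objective, and infimizing over feasible $(Q_\pm,\mu_\pm)$ proves the bound.

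The main obstacle is the upper-bound step: one must keep the weight placed on the resourceful target $\psi^{\otimes m}$ small enough that the measure-and-prepare channel remains resource non-generating on \emph{every} free input, which is exactly what forces $k$ to be a robustness-plus-one, and which also dictates the split between the two-sided bound through $\zeta^s_\varepsilon$ (weight only upper-bounded, so the fill-in state must be free and one invokes convexity, hence the \emph{standard} robustness) and the sharper equality-constrained bound through $\zeta^g_\varepsilon$ (weight pinned to $1/k$, so a resourceful fill-in is allowed, hence the \emph{generalized} robustness). The other points — the normalization reduction to $t\le1$, the edge cases $\mu_-=0$ or $\psi^{\otimes m}\in\mc F$ (where $k=1$ and the bounds are trivial), and checking that the constancy hypothesis lifts to $\psi^{\otimes m}$ on the output system — are routine.
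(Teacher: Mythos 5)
Your proof is correct and follows essentially the same route as the paper's: the lower bound via $Q_\pm = \lambda_\pm\Lambda_\pm^\dagger(\psi^{\otimes m})$ with the overlap bounded by $F_{\mc F}$, and the upper bound via the same binary measure-and-prepare channels filled in with the robustness-attaining state. One small refinement worth keeping: your normalization of $t$ by mixing $Q_\pm$ with $\tfrac{\mu_\pm}{k}\id$ preserves the \emph{equality} constraints of $\zeta^g_\varepsilon$, whereas the paper's rescaling $Q_\pm \mapsto Q_\pm/t$ only directly works for the inequality-constrained program $\zeta^s_\varepsilon$.
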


The crucial property of the bounds is that whenever $R^s_{\mc F} (\psi^{\otimes m}) + 1 = F_{\mc F}(\psi^{\otimes m})^{-1}$ --- which is true in resource theories such as bi- and multi-partite entanglement~\cite{vidal1999robustness,Regula2020benchmarking} or multi-level quantum coherence~\cite{Johnston2018evaluating} --- or if $R^g_{\mc F}(\psi^{\otimes m}) + 1 = F_{\mc F}(\psi^{\otimes m})^{-1}$ and the overlap $\braket{\psi|\sigma|\psi}$ is constant --- which is true in resource theories such as coherence or athermality --- then the upper and lower bounds coincide, yielding an exact expression for the overhead $C^\varepsilon(\rho,m)$.

We will later consider specific examples of resource theories, showing how the result can be applied in different contexts, and in some cases improving on and extending the statement of Theorem~\ref{prop:general_bounds_cost_parallel}.

\begin{proof}
We first prove \eqref{eq:overhead standard SDP}.
Consider any feasible distillation protocol such that $\Lambda_\pm \in \mO$ and $\frac12 \| \lambda_+ \Lambda_+ (\rho) - \lambda_- \Lambda_-(\rho) - \psi^{\otimes m} \|_1 \leq \varepsilon$. Define $Q_\pm = \lambda_\pm \Lambda_\pm^\dagger(\psi^{\otimes m})$ and $\mu_\pm = \lambda_\pm$. 
Since $\Lambda_\pm$ are free operations, it holds that $\Lambda_\pm(\sigma) \in \mc F$ for any $\sigma \in \mc F$, and hence
\begin{equation}\begin{aligned}
	\max_{\sigma \in \mc F} \tr Q_\pm \sigma &= \mu_\pm \max_{\sigma \in \mc F} \tr \psi^{\otimes m} \Lambda_\pm(\sigma)\\
	&\leq \mu_\pm \max_{\sigma' \in \mc F} \tr \psi^{\otimes m} \sigma'\\
	&\leq \mu_\pm F_{\mc F}(\psi^{\otimes m}).
 \label{eq:overhead standard lower bound}
\end{aligned}\end{equation}
Due to the fact that $\Lambda_\pm$ are CPTP maps, we also get $0 \leq Q_\pm \leq \mu_\pm \id$, and the condition $\tr \rho (Q_+ - Q_-) \geq 1-\varepsilon$ is ensured by the fact that
\begin{equation}\begin{aligned}\label{eq:trace_distance_ineq}
	\varepsilon &\geq \frac12 \left\| \psi^{\otimes m} - \lambda_+ \Lambda_+ (\rho) + \lambda_- \Lambda_-(\rho) \right\|_1\\
	&= \max \lset \tr \left[\left(\psi^{\otimes m} - \lambda_+ \Lambda_+ (\rho) + \lambda_- \Lambda_-(\rho)\right) X \right]\sbar 0 \leq X \leq \id \rset\\
	&\geq \tr \left[\left(\psi^{\otimes m} - \lambda_+ \Lambda_+ (\rho) + \lambda_- \Lambda_-(\rho)\right) \psi^{\otimes m}\right]\\
	&= 1 - \tr \rho (Q_+ - Q_-).
\end{aligned}\end{equation}
Therefore, $Q_\pm$ give a valid feasible solution to $\zeta^s_\varepsilon (\rho, F_{\mc F}(\psi^{\otimes m})$ with optimal value $\mu_+ + \mu_- = \lambda_+ + \lambda_-$, which concludes the first part of the proof.

Conversely, let $Q_\pm$ be feasible solutions to $\zeta^s_\varepsilon(\rho, R^s_{\mc F}(\psi^{\otimes m})+1)$. Note that we can always take $Q_\pm$ such that $\tr ([Q_+ - Q_-] \rho) = 1-\varepsilon$, since for any feasible $Q_\pm$ with $\tr ([Q_+ - Q_-] \rho) = t(1-\varepsilon)$ for some $t>1$, $\frac{1}{t} Q_\pm$ are also feasible with the same optimal value. Now, define the quantum channels
\begin{equation}\begin{aligned}
	\Lambda_\pm (\omega) = \tr \left(\frac{Q_\pm}{\mu_+} \omega\right) \psi^{\otimes m} + \tr \left(\left[\id - \frac{Q_\pm}{\mu_\pm}\right] \omega\right) \sigma_\psi,
 \label{eq:overhead upper bound channel}
\end{aligned}\end{equation}
where $\sigma_\psi \in \mc{F}$ is a state such that
\begin{equation}\begin{aligned}
	\frac{\psi^{\otimes m} + R^s_{\mc F}(\psi^{\otimes m}) \sigma_\psi}{1+R^s_{\mc F}(\psi^{\otimes m})} \in \mc F.
\end{aligned}\end{equation}
Notice that
\begin{equation}\begin{aligned}
	\Lambda_\pm (\sigma) &\propto \psi^{\otimes m} + \frac{\tr \left(\left[\id - \frac{Q_\pm}{\mu_\pm}\right] \sigma\right)}{\tr \left(\frac{Q_\pm}{\mu_\pm} \sigma\right)} \sigma_\psi\\
	&= \psi^{\otimes m} + \left( \frac{\mu_+}{\tr Q_\pm \sigma} - 1 \right) \sigma_\psi
\end{aligned}\end{equation}
which entails that, since $\tr Q_\pm \sigma \leq \frac{\mu_+}{R^s_{\mc F}(\psi^{\otimes m})+1}$ for any $\sigma\in\mc{F}$, we necessarily have $\Lambda_\pm(\sigma) \in \mc{F}$ and thus $\Lambda_\pm$ are resource non-generating operations.
Now, since
\begin{equation}\begin{aligned}
	&\mu_+ \Lambda_+ (\rho) - \mu_- \Lambda_- (\rho)\\
	&= \tr ([Q_+ - Q_-] \rho) \psi^{\otimes m} + \big( \mu_+ - \mu_- - \tr ([Q_+ - Q_-] \rho) \big) \sigma_\psi\\
	&=  \tr ([Q_+ - Q_-] \rho) \psi^{\otimes m} + \big( 1 - \tr ([Q_+ - Q_-] \rho) \big) \sigma_\psi,
\end{aligned}\end{equation}
we get
\begin{equation}\begin{aligned}
	\left\| \mu_+ \Lambda_+ (\rho) - \mu_- \Lambda_- (\rho) - \psi^{\otimes m} \right\|_1 &\leq 2 \left| 1 - \tr ([Q_+ - Q_-] \rho) \right|\\
	&= 2\varepsilon,
\end{aligned}\end{equation}
and thus we see that the maps realize the desired transformation with error $\frac12 \left\| \psi^{\otimes m} - \lambda_+ \Lambda_+ (\rho) + \lambda_- \Lambda_-(\rho) \right\|_1 \leq \varepsilon$,
yielding $C^\varepsilon(\rho,m) \leq \zeta_\varepsilon^s(\rho, R^s_{\mc F} (\psi^{\otimes m})+1)$.

The proof for \eqref{eq:overhead generalized SDP} proceeds analogously. 
To show the lower bounds, notice that the inequalities in \eqref{eq:overhead standard lower bound} become equalities due to the assumption that $\braket{\psi|\sigma|\psi}$ is constant for all $\sigma\in\mF$.  
To show the upper bound, we choose in \eqref{eq:overhead upper bound channel} a state $\omega_\psi$ which satisfies
\begin{equation}\begin{aligned}
	\frac{\psi^{\otimes m} + R^g_{\mc F}(\psi^{\otimes m}) \omega_\psi}{1+R^g_{\mc F}(\psi^{\otimes m})} \in \mc F.
\end{aligned}\end{equation}
instead of $\sigma_\psi$.
\end{proof}


\subsection{Bounds in terms of the maximum overlap}

In the case of state theories, we can give alternative expressions for general lower bounds for the virtual distillation overhead.
They can be formulated in relation to how close the given resource state can be brought to the target state via free operations. 
To formalize this, we define the \emph{maximum overlap} with a target pure state $\psi$ as
\bal
 f_{\mO}(\rho,m)\coloneqq \max_{\Lambda\in\mO}\tr[\Lambda(\rho)\psi^{\otimes m}].
\eal

 Then, we obtain the following general bound in terms of the maximum overlap.
 
\begin{proposition}\label{prop:cost lower bound fraction}
Let $\psi$ denote a pure target resource state, and let $\mO$ be an arbitrary convex and closed set of free operations.
Then, for every state $\rho$, positive integer $m$, and $\varepsilon\in[0,1]$,

\bal
C^\varepsilon(\rho,m)\geq \max\left\{\frac{2(1-\varepsilon)}{f_{\mO}(\rho,m)}-1,1\right\}
\label{eq:cost lower bound overlap general}
\eal
holds.
\end{proposition}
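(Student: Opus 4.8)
The plan is to bound $\lambda_+ + \lambda_-$ from below for \emph{every} feasible decomposition appearing in the definition of $C^\varepsilon(\rho,m)$, and then take the infimum. Two independent lower bounds will be established, and the claimed inequality is their maximum.

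First, the trivial bound $C^\varepsilon(\rho,m)\ge 1$: any feasible decomposition has $\lambda_\pm\ge 0$ and $\lambda_+-\lambda_-=1$, so $\lambda_++\lambda_-\ge \lambda_+-\lambda_-=1$; taking the infimum gives $C^\varepsilon(\rho,m)\ge 1$. (If no feasible decomposition exists, $C^\varepsilon(\rho,m)=\infty$ and the statement is immediate.)

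Second, the overlap bound. Fix a feasible decomposition $\tilde\tau=\lambda_+\Lambda_+(\rho)-\lambda_-\Lambda_-(\rho)$ with $\Lambda_\pm\in\mO$, $\lambda_\pm\ge 0$, $\lambda_+-\lambda_-=1$, and $\tfrac12\|\psi^{\otimes m}-\tilde\tau\|_1\le\varepsilon$. Since $\psi$ is pure, $\psi^{\otimes m}$ is a projector, hence $0\le\psi^{\otimes m}\le\id$ and $\tr\big[(\psi^{\otimes m})^2\big]=1$; testing the trace-distance bound against this observable (exactly as in \eqref{eq:trace_distance_ineq}) gives
\begin{equation*}
\varepsilon\ \ge\ \tr\!\big[(\psi^{\otimes m}-\tilde\tau)\,\psi^{\otimes m}\big]\ =\ 1-\tr\!\big[\tilde\tau\,\psi^{\otimes m}\big],
\end{equation*}
so $\tr[\tilde\tau\,\psi^{\otimes m}]\ge 1-\varepsilon$. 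Expanding $\tilde\tau$ and using that $\Lambda_-(\rho)$ and $\psi^{\otimes m}$ are positive semidefinite (so $\tr[\Lambda_-(\rho)\psi^{\otimes m}]\ge 0$) together with $\tr[\Lambda_+(\rho)\psi^{\otimes m}]\le f_{\mO}(\rho,m)$ by definition of the maximum overlap, we obtain $1-\varepsilon\le\lambda_+\,f_{\mO}(\rho,m)$, i.e.\ $\lambda_+\ge (1-\varepsilon)/f_{\mO}(\rho,m)$. Since $\lambda_-=\lambda_+-1$, this yields $\lambda_++\lambda_-=2\lambda_+-1\ge \tfrac{2(1-\varepsilon)}{f_{\mO}(\rho,m)}-1$. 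Taking the infimum over feasible decompositions gives $C^\varepsilon(\rho,m)\ge \tfrac{2(1-\varepsilon)}{f_{\mO}(\rho,m)}-1$, and combining with the first bound completes the proof.

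I do not expect any genuine obstacle here; the argument is elementary. The only point requiring a word of care is the degenerate case $f_{\mO}(\rho,m)=0$: then the computation above shows a feasible decomposition can exist only if $1-\varepsilon\le 0$, i.e.\ $\varepsilon=1$, in which case the bound reduces to $C^\varepsilon(\rho,m)\ge 1$ (and otherwise $C^\varepsilon(\rho,m)=\infty$, consistent with reading $2(1-\varepsilon)/0=\infty$). One should also make sure the variational identity $\tfrac12\|A\|_1=\max\{\tr[AX]:0\le X\le\id\}$ is invoked for the traceless operator $A=\psi^{\otimes m}-\tilde\tau$, which is legitimate since $\tr\tilde\tau=1=\tr\psi^{\otimes m}$.
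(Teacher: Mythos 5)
Your proof is correct, and it takes a more elementary route than the paper's. The paper derives the bound by passing to the dual formulation of $C^\varepsilon(\rho,m)$ (Eq.~\eqref{eq:overhead dual}, established via Lagrangian duality in Appendix~\ref{app:dual}), applying the max--min inequality, and exhibiting $W=f_{\mO}(\rho,m)^{-1}\psi^{\otimes m}$ as a feasible dual variable; the constant lower bound of $1$ is obtained there by the separate dual choice $W=\id$. You instead bound every feasible primal decomposition directly: the test $\tr[\tilde\tau\,\psi^{\otimes m}]\geq 1-\varepsilon$ combined with $\tr[\Lambda_-(\rho)\psi^{\otimes m}]\geq 0$ and $\tr[\Lambda_+(\rho)\psi^{\otimes m}]\leq f_{\mO}(\rho,m)$ forces $\lambda_+\geq(1-\varepsilon)/f_{\mO}(\rho,m)$, and the normalization $\lambda_+-\lambda_-=1$ converts this into the stated bound on $\lambda_++\lambda_-$ (with the bound of $1$ following trivially from $\lambda_-\geq 0$). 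The two arguments are of course cousins --- your observable $\psi^{\otimes m}$ plays exactly the role of the paper's dual witness $W$ --- but your version needs neither the dual form nor Slater's condition, and it makes transparent that the bound comes from constraining $\lambda_+$ alone. Your attention to the degenerate case $f_{\mO}(\rho,m)=0$ and to the tracelessness of $\psi^{\otimes m}-\tilde\tau$ (needed to invoke the variational formula for the trace norm) is appropriate and does not appear explicitly in the paper's treatment.
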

\begin{proof}

Using the max--min inequality, the dual form of the overhead \eqref{eq:overhead dual} can be lower bounded as 
\begin{widetext}
\bal
C^\varepsilon(\rho,m) 
 &\geq \sup_{W\in {\rm Herm}}\inf_{\substack{\tr[\eta]=1\\\frac{1}{2}\|\eta-\psi^{\otimes m}\|_1\leq \varepsilon}}\lset 2\tr[\eta W]-1\sbar 0\leq \tr[\Lambda(\rho)W]\leq 1,\ \forall \Lambda\in\mO\rset
 \label{eq:cost epsilon general dual}
\eal    
\end{widetext}
We note that $W=f_{\mO}(\rho,m)^{-1}\psi^{\otimes m}$ is a feasible solution for the last optimization problem because $0\leq f_{\mO}(\rho,m)^{-1}\tr[\Lambda(\rho)\psi^{\otimes m}]\leq 1,\ \forall \Lambda\in\mO$.
This ensures
\bal
C^\varepsilon(\rho,m)&\geq \inf_{\substack{\tr[\eta]=1\\\frac{1}{2}\|\eta-\psi^{\otimes m}\|_1\leq \varepsilon}}\left[2f_\mO(\rho,m)^{-1}\tr[\eta \psi^{\otimes m}]-1\right]
\label{eq:cost fraction proof intermediate}
\eal
Since every $\eta$ such that $\tr[\eta]=1$ satisfies $\frac{1}{2}\|\eta-\psi^{\otimes m}\|_1 = \max_{0\leq E \leq I} \tr[(\eta-\psi^{\otimes m})E]$, choosing $E=\psi^{\otimes m}$ specifically results in
\bal
 \frac{1}{2}\|\eta-\psi^{\otimes m}\|_1 &\geq \tr[(\psi^{\otimes m}-\eta)\psi^{\otimes m}]\\
 &= 1-\tr[\eta \psi^{\otimes m}].
\eal
Therefore, $\frac{1}{2}\|\eta-\psi^{\otimes m}\|\leq\varepsilon$ implies $\tr[\eta \psi^{\otimes m}]\geq 1-\varepsilon$. 
This allows us to lower bound the right-hand side of \eqref{eq:cost fraction proof intermediate} to get
\bal
C^\varepsilon(\rho,m)\geq\frac{2(1-\varepsilon)}{f_\mO(\rho,m)}-1.
\eal

The lower bound of 1 can be obtained by choosing $W=I$ in \eqref{eq:cost epsilon general dual}.

\end{proof}

Although the maximum overlap is operationally intuitive, it could be hard to compute, or it might obscure the relation with the resourcefulness contained in $\rho$. 
To address this, we can employ recent results that connect the operational distillation performance and fundamental resource measures that are equipped with geometric viewpoints. 
To this end, we recall the weight resource measure defined for an arbitrary convex resource theory 
\bal
 W_{\mc F}(\rho) \coloneqq \max\lset w \sbar \rho= w\sigma + (1-w)\tau,\ \sigma\in\mc F,\ \tau\in\mc D\rset.
\eal
Then, the maximum overlap can be upper bounded using the generalized robustness and the weight measure as follows. 

\begin{lemma}[\cite{Regula2021fundamental,Fang2020no-go}]\label{lem:distillation fidelity bound}
For every convex and closed set $\mc F$ of free states and every set $\mO$ of free operations,
\bal
 f_{\mO}(\rho,m)\leq F_{\mc F}(\psi^{\otimes m})\left[R_{\mc F}^g(\rho)+1\right]
\eal
and 
\bal
 f_{\mO}(\rho,m)\leq 1-\left[1-F_{\mc F}(\psi^{\otimes m})\right]W_{\mc F}(\rho).
\eal
\end{lemma}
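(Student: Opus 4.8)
The plan is to establish both inequalities by passing from the operational quantity $f_{\mO}(\rho,m)=\max_{\Lambda\in\mO}\tr[\Lambda(\rho)\psi^{\otimes m}]$ to a bound that involves $\rho$ directly, exploiting only that $\Lambda$ maps free states to free states. First I would note that for any feasible $\Lambda\in\mO$, the operator $E\coloneqq\Lambda^\dagger(\psi^{\otimes m})$ satisfies $0\le E\le \id$ (since $\Lambda$ is CPTP and $0\le\psi^{\otimes m}\le\id$), and moreover $\tr[E\sigma]=\tr[\psi^{\otimes m}\Lambda(\sigma)]\le \max_{\sigma'\in\mc F}\tr[\psi^{\otimes m}\sigma']=F_{\mc F}(\psi^{\otimes m})$ for every $\sigma\in\mc F$, where the last equality uses that $\psi^{\otimes m}$ is pure. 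Thus $f_{\mO}(\rho,m)=\tr[E\rho]$ for some $E$ in the set $\{0\le E\le\id,\ \max_{\sigma\in\mc F}\tr[E\sigma]\le F_{\mc F}(\psi^{\otimes m})\}$, and it suffices to bound $\tr[E\rho]$ over that set.

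For the first inequality, I would use the definition of the generalized robustness: there exist $\omega\in\mc D$ with $\frac{\rho+R^g_{\mc F}(\rho)\,\omega}{1+R^g_{\mc F}(\rho)}\eqqcolon\sigma^\star\in\mc F$. Rearranging, $\rho=(1+R^g_{\mc F}(\rho))\sigma^\star-R^g_{\mc F}(\rho)\,\omega$, so $\tr[E\rho]=(1+R^g_{\mc F}(\rho))\tr[E\sigma^\star]-R^g_{\mc F}(\rho)\tr[E\omega]\le(1+R^g_{\mc F}(\rho))F_{\mc F}(\psi^{\otimes m})$, using $\tr[E\sigma^\star]\le F_{\mc F}(\psi^{\otimes m})$ and $\tr[E\omega]\ge 0$. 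Taking the max over $\Lambda$ (equivalently over admissible $E$) gives $f_{\mO}(\rho,m)\le F_{\mc F}(\psi^{\otimes m})[R^g_{\mc F}(\rho)+1]$.

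For the second inequality, I would use the weight decomposition $\rho=W_{\mc F}(\rho)\,\sigma+(1-W_{\mc F}(\rho))\,\tau$ with $\sigma\in\mc F$, $\tau\in\mc D$. Then $\tr[E\rho]=W_{\mc F}(\rho)\tr[E\sigma]+(1-W_{\mc F}(\rho))\tr[E\tau]\le W_{\mc F}(\rho)F_{\mc F}(\psi^{\otimes m})+(1-W_{\mc F}(\rho))\cdot 1$, using $\tr[E\sigma]\le F_{\mc F}(\psi^{\otimes m})$ and $\tr[E\tau]\le 1$ (from $E\le\id$). Collecting terms, $\tr[E\rho]\le 1-(1-F_{\mc F}(\psi^{\otimes m}))W_{\mc F}(\rho)$, and maximizing over $\Lambda$ yields the claim. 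I do not expect a serious obstacle here; the one point requiring a little care is the identity $\max_{\sigma\in\mc F}\tr[\psi^{\otimes m}\sigma]=F_{\mc F}(\psi^{\otimes m})$, which holds precisely because the target is pure (as already remarked after the definition of $F_{\mc F}$), and the observation that the optimizing $\Lambda$ can be reduced to an effect operator $E$ in the stated set — both of which are straightforward. Since this is essentially a restatement of known bounds from Refs.~\cite{Regula2021fundamental,Fang2020no-go}, one could alternatively just cite those works, but the short self-contained argument above is worth including.
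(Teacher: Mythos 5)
Your argument is correct and is essentially the standard proof of these bounds from the cited references~\cite{Regula2021fundamental,Fang2020no-go}; the paper itself offers no proof of this lemma, only the citation. The reduction to an effect operator $E=\Lambda^\dagger(\psi^{\otimes m})$ with $0\le E\le \id$ and $\tr[E\sigma]\le F_{\mc F}(\psi^{\otimes m})$ for $\sigma\in\mc F$, followed by the robustness and weight decompositions of $\rho$, is exactly the intended route, and all the steps check out.
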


Combining Proposition~\ref{prop:cost lower bound fraction} and Lemma~\ref{lem:distillation fidelity bound} immediately gives the following alternative lower bound for $C^\varepsilon$.

\begin{corollary}
Let $\psi$ denote the pure target state, and let $\mO$ be an arbitrary convex and closed set of free operations.
Then, for every state $\rho$, positive integer $m$, and $\varepsilon\in[0,1]$,

\begin{equation}\begin{aligned}
  C^\varepsilon(\rho,m) \geq \frac{2(1-\varepsilon)}{F_{\mc F}(\psi^{\otimes m})(R_{\mc F}^g(\rho)+1)}-1.
\end{aligned}\end{equation}
and
\begin{equation}\begin{aligned}
  C^\varepsilon(\rho,m) \geq \frac{2(1-\varepsilon)}{1-(1-F_{\mc F}(\psi^{\otimes m}))W_{\mc F}(\rho)}-1.
\end{aligned}\end{equation}

\end{corollary}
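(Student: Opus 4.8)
The plan is to derive the stated bounds as a direct corollary of Proposition~\ref{prop:cost lower bound fraction} together with Lemma~\ref{lem:distillation fidelity bound}. Proposition~\ref{prop:cost lower bound fraction} already gives $C^\varepsilon(\rho,m)\geq \frac{2(1-\varepsilon)}{f_{\mO}(\rho,m)}-1$, and the right-hand side of this inequality is monotonically decreasing in $f_{\mO}(\rho,m)$ (since $f_{\mO}(\rho,m)>0$ for any reasonable setting, and at worst we fall back on the trivial bound $C^\varepsilon\geq 1$). So the first step is simply to observe this monotonicity: replacing $f_{\mO}(\rho,m)$ by any valid upper bound can only decrease the fraction, hence preserves the inequality.

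The second step is to substitute the two upper bounds on the maximum overlap supplied by Lemma~\ref{lem:distillation fidelity bound}. Plugging $f_{\mO}(\rho,m)\leq F_{\mc F}(\psi^{\otimes m})(R_{\mc F}^g(\rho)+1)$ into the monotone lower bound yields the first displayed inequality, and plugging $f_{\mO}(\rho,m)\leq 1-[1-F_{\mc F}(\psi^{\otimes m})]W_{\mc F}(\rho)$ yields the second. One should note in passing that both upper bounds on $f_{\mO}$ are at most $1$ (and the denominators are positive), so the expressions are well-defined and the replacement is legitimate; in the degenerate case where the resulting fraction drops below $1$, the bound is still valid because $C^\varepsilon(\rho,m)\geq 1$ always. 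No further computation is needed.

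Honestly, there is no substantial obstacle here: the corollary is a mechanical combination of the two preceding results, and the only thing to be careful about is the direction of the inequality when composing the monotone bound of Proposition~\ref{prop:cost lower bound fraction} with an \emph{upper} bound on $f_{\mO}$ --- one must check that enlarging the denominator weakens the bound in a consistent direction, which it does. I would therefore present the proof in a single short paragraph, citing Proposition~\ref{prop:cost lower bound fraction} and Lemma~\ref{lem:distillation fidelity bound} and performing the substitution explicitly.

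\begin{proof}
Both inequalities follow immediately by combining Proposition~\ref{prop:cost lower bound fraction} with Lemma~\ref{lem:distillation fidelity bound}. Indeed, the map $x \mapsto \frac{2(1-\varepsilon)}{x} - 1$ is monotonically nonincreasing for $x > 0$, so replacing $f_{\mO}(\rho,m)$ in \eqref{eq:cost lower bound overlap general} by any upper bound preserves the inequality $C^\varepsilon(\rho,m) \geq \frac{2(1-\varepsilon)}{f_{\mO}(\rho,m)} - 1$. Substituting the bound $f_{\mO}(\rho,m) \leq F_{\mc F}(\psi^{\otimes m})(R_{\mc F}^g(\rho)+1)$ from Lemma~\ref{lem:distillation fidelity bound} gives the first claimed inequality, and substituting $f_{\mO}(\rho,m) \leq 1 - [1 - F_{\mc F}(\psi^{\otimes m})] W_{\mc F}(\rho)$ gives the second. (In the degenerate case where either resulting fraction falls below $1$, the bound still holds since $C^\varepsilon(\rho,m) \geq 1$ by Proposition~\ref{prop:cost lower bound fraction}.)
\end{proof}
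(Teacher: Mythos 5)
Your proof is correct and is exactly the argument the paper intends: combine the overlap-based lower bound of Proposition~\ref{prop:cost lower bound fraction} with the two upper bounds on $f_{\mO}(\rho,m)$ from Lemma~\ref{lem:distillation fidelity bound}, using the monotonicity of $x \mapsto \tfrac{2(1-\varepsilon)}{x}-1$. The remark that the trivial bound $C^\varepsilon(\rho,m)\geq 1$ covers the degenerate case is a nice touch but not essential.
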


The following result provides a sufficient condition when the bound in Proposition~\ref{prop:cost lower bound fraction} is saturated.

\begin{theorem}[Theorem~2 in the companion paper~\cite{maintext}]\label{prop:cost twirling}
For a pure target state $\psi$, suppose that there exists a free ``generalized twirling'' operation~\cite{Takagi2021oneshot} $\mT\in\mO$ of the form 
\bal
 \mT(\cdot)=\tr[\psi^{\otimes m}\cdot]\psi^{\otimes m} + \tr[(I-\psi^{\otimes m})\cdot]\sigma^\star
 \label{eq:generalized twirling}
\eal
for some $\sigma^\star\in\mathcal{F}$. Then, 
\bal
 C^\varepsilon(\rho,m)=\max\left\{\frac{2(1-\varepsilon)}{f_{\mO}(\rho,m)}-1,1\right\}
\eal
for every $\varepsilon \in [0,1]$.
\end{theorem}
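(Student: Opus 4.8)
Since the lower bound $C^\varepsilon(\rho,m)\ge\max\{2(1-\varepsilon)/f_{\mO}(\rho,m)-1,\,1\}$ is already established in Proposition~\ref{prop:cost lower bound fraction}, the whole job is to construct a feasible virtual distillation protocol whose overhead matches this value. Write $f\coloneqq f_{\mO}(\rho,m)$ and let $\Lambda^\star\in\mO$ attain the maximum overlap, $\Tr[\Lambda^\star(\rho)\psi^{\otimes m}]=f$. The key point is that post-composing with the twirl $\mT$ forces the output onto the line segment joining $\psi^{\otimes m}$ and $\sigma^\star$: since $\mT$ has the measure-and-prepare form in \eqref{eq:generalized twirling} and $\Lambda^\star(\rho)$ has unit trace,
\bal
 (\mT\circ\Lambda^\star)(\rho)=f\,\psi^{\otimes m}+(1-f)\,\sigma^\star ,
\eal
and $\mT\circ\Lambda^\star\in\mO$ as a composition of free operations. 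I would take this as the positive branch $\Lambda_+$, and for the negative branch take $\Lambda_-$ to be a free operation that discards its input and outputs $\sigma^\star$ --- this is resource non-generating because $\sigma^\star\in\mF$, and it lies in $\mO$ in the settings of interest (in particular when $\mO$ is the maximal resource-non-generating class).

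It then remains only to fix the weights. For any $\lambda_\pm\ge0$ with $\lambda_+-\lambda_-=1$, using $\lambda_+(1-f)-\lambda_-=1-\lambda_+ f$ one gets
\bal
 \lambda_+\Lambda_+(\rho)-\lambda_-\Lambda_-(\rho)=\lambda_+ f\,\psi^{\otimes m}+(1-\lambda_+ f)\,\sigma^\star ,
\eal
whose trace distance from $\psi^{\otimes m}$ equals $(1-\lambda_+ f)\,\frac12\|\psi^{\otimes m}-\sigma^\star\|_1\le 1-\lambda_+ f$ provided $\lambda_+ f\le1$. Choosing $\lambda_+=\max\{1,\,(1-\varepsilon)/f\}$ and $\lambda_-=\lambda_+-1$ makes $\lambda_\pm\ge0$, gives $\lambda_+ f=\max\{f,\,1-\varepsilon\}\in[0,1]$ and hence error $1-\lambda_+ f\le\varepsilon$, so this is a valid decomposition with overhead $\lambda_++\lambda_-=2\lambda_+-1=\max\{1,\,2(1-\varepsilon)/f-1\}$. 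Together with Proposition~\ref{prop:cost lower bound fraction}, this yields the claimed equality for all $\varepsilon\in[0,1]$.

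The argument is computationally light, so the ``hard part'' is really conceptual/structural: (i) spotting that $\mT\circ\Lambda^\star$ collapses the reachable set to the one-parameter family $\{\lambda_+ f\,\psi^{\otimes m}+(1-\lambda_+ f)\sigma^\star\}$, which is exactly what makes the weight optimization solvable in closed form and matches the lower bound; and (ii) making sure the negative branch is a bona fide free operation, i.e.\ that preparing the free state $\sigma^\star$ is free, which is where the hypothesis $\sigma^\star\in\mF$ enters. The only case split is whether $\varepsilon$ is already large enough ($\varepsilon\ge1-f$) that the deterministic protocol $\mT\circ\Lambda^\star$ alone achieves accuracy $\varepsilon$, which is precisely the regime in which the overhead saturates at $1$.
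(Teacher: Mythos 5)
Your proposal is correct and follows essentially the same route as the paper: both use $\mT\circ\Lambda^\star$ to collapse the output onto the segment between $\psi^{\otimes m}$ and $\sigma^\star$, take the free preparation of $\sigma^\star$ as the negative branch, and tune the weights to land at $(1-\varepsilon)\psi^{\otimes m}+\varepsilon\sigma^\star$. The only cosmetic difference is that you absorb the paper's two-case split ($\varepsilon \gtrless 1-f$) into the single choice $\lambda_+=\max\{1,(1-\varepsilon)/f\}$, which is a slightly cleaner packaging of the same construction.
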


\begin{proof}
We already showed $C^\varepsilon(\rho,m)\geq\max\left\{\frac{2(1-\varepsilon)}{f_{\mO}(\rho,m)}-1,1\right\}$ in \eqref{eq:cost lower bound overlap general} for general case. 
To show the opposite inequality, let $\Lambda^\star\in \mO$ be the one that realizes $\tr(\Lambda^\star(\rho)\psi^{\otimes m})=f_{\mO}(\rho,m)$.
Then, the free twirling operation $\mT$ maps $\Lambda^\star(\rho)$ to the ``generalized isotropic'' state with the same overlap with the target state as
\bal
 \mT\circ \Lambda^\star(\rho) = f_{\mO}(\rho,m)\,\psi^{\otimes m} + (1-f_{\mO}(\rho, m))\,\sigma^\star.
\eal

When $\frac{2(1-\varepsilon)}{f_\mO(\rho,m)}-1<1\iff 1-f_\mO(\rho,m)<\varepsilon$, we have $\frac{1}{2}\|\psi^{\otimes m}-\mT\circ\Lambda^\star(\rho)\|_1\leq 1-f_\mO(\rho,m)<\varepsilon$, implying $C^\varepsilon(\rho,m)\leq 1$.

On the other hand, when $\frac{2(1-\varepsilon)}{f_\mO(\rho,m)}-1\geq 1$, we define
\bal
 \Lambda_+={\rm id},\quad \Lambda_-(\cdot) = \sigma^\star,
 \label{eq:map isotropic general}
\eal
both of which are free operations. 
Then, 
\begin{equation}\begin{aligned}
 &\frac{1-\varepsilon}{f_{\mO}(\rho,m)} \Lambda_+\circ\mT\circ\Lambda^\star(\rho)-\left(\frac{1-\varepsilon}{f_{\mO}(\rho,m)}-1\right)\Lambda_-\circ\mT\circ\Lambda^\star(\rho)\\
 &\quad=(1-\varepsilon)\psi^{\otimes m}+\varepsilon\sigma^\star.
\end{aligned}\end{equation}
Since 
\bal
\frac{1}{2}\|(1-\varepsilon)\psi^{\otimes m}+\varepsilon\sigma^\star-\psi^{\otimes m}\|_1 &= \frac{1}{2}\|-\varepsilon \psi^{\otimes m}+\varepsilon\sigma^\star\|_1 \\
&\leq \frac{\varepsilon}{2}\left(\|\psi^{\otimes m}\|_1+\|\sigma^\star\|_1\right)\\
&=\varepsilon
\eal
where we used the triangle inequality. 
Thus, we get 
\bal
C^\varepsilon(\rho,m)&\leq \frac{1-\varepsilon}{f_{\mO}(\rho,m)}+\frac{1-\varepsilon}{f_{\mO}(\rho,m)}-1\\
&=\frac{2(1-\varepsilon)}{f_{\mO}(\rho,m)}-1,
\eal
concluding the proof. 

\end{proof}

Although Theorem~\ref{prop:cost twirling} provides a tight characterization of distillation overhead, its assumption may appear somewhat contrived, as it could be difficult to determine whether such a free generalized twirling operation exists for a given setting.  First, we stress that the conditions are satisfied in a number of the most practically relevant resource theories --- we will see this explicitly in Section~\ref{sec:examples}. 

We can also give a useful sufficient condition for a free twirling operation of Eq.~\eqref{eq:generalized twirling} to exist: this is true whenever the fidelity-based measure and the standard robustness of the target state coincide~\cite[Lemma~5]{Takagi2021oneshot}.
This gives the following characterization, which is usually easier to verify directly.

\begin{corollary}
For a given set $\mathcal{F}$ of free states, consider the class $\mO$ of resource non-generating operations. If $F_{\mathcal{F}}(\psi^{\otimes m})^{-1}=1+R_{\mathcal{F}}^s(\psi^{\otimes m})$ holds, we have $C^\varepsilon(\rho,m)= \max\left\{\frac{2(1-\varepsilon)}{f_{\mO}(\rho,m)}-1,1\right\}$.
\end{corollary}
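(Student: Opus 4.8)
The plan is to reduce the claim to Theorem~\ref{prop:cost twirling} by producing, from the hypothesis $F_{\mathcal{F}}(\psi^{\otimes m})^{-1}=1+R^s_{\mathcal{F}}(\psi^{\otimes m})$, a free generalized twirling operation of the form in Eq.~\eqref{eq:generalized twirling}. Abbreviate $\Psi \coloneqq \psi^{\otimes m}$, $F \coloneqq F_{\mathcal{F}}(\Psi)$, and $R \coloneqq R^s_{\mathcal{F}}(\Psi)$, so that the assumption reads $F = 1/(1+R)$; note that it forces $F>0$ and hence $R<\infty$, ruling out the degenerate case. Since $\mathcal{F}$ is closed, the infimum defining $R^s_{\mathcal{F}}(\Psi)$ is attained: there exists $\sigma^\star \in \mathcal{F}$ with $(\Psi + R\,\sigma^\star)/(1+R) \in \mathcal{F}$. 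I would take exactly this $\sigma^\star$ as the fixed output in $\mathcal{T}(\cdot)=\tr[\Psi\,\cdot]\,\Psi+\tr[(I-\Psi)\,\cdot]\,\sigma^\star$, which is manifestly a quantum channel of the required form.

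The next step is to verify $\mathcal{T}\in\mO$, i.e.\ that $\mathcal{T}$ maps free states to free states. For $\omega\in\mathcal{F}$, set $p \coloneqq \tr[\Psi\omega]$; then $\mathcal{T}(\omega)=p\Psi+(1-p)\sigma^\star$, and because $\psi$ (hence $\Psi$) is pure we have $p\leq \max_{\sigma\in\mathcal{F}}\tr[\Psi\sigma]=F_{\mathcal{F}}(\Psi)=F$. The key identity is $F\Psi+(1-F)\sigma^\star=(\Psi+R\,\sigma^\star)/(1+R)\in\mathcal{F}$, which uses precisely $F=1/(1+R)$ together with the choice of $\sigma^\star$. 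Combining this with $\sigma^\star\in\mathcal{F}$ and convexity of $\mathcal{F}$, every point $p\Psi+(1-p)\sigma^\star$ with $p\in[0,F]$ is free, since it can be written as $(p/F)\,[F\Psi+(1-F)\sigma^\star]+(1-p/F)\,\sigma^\star$, a convex combination of two free states. Hence $\mathcal{T}(\omega)\in\mathcal{F}$ for all $\omega\in\mathcal{F}$, so $\mathcal{T}$ is resource non-generating. This is exactly the content of the sufficient condition \cite[Lemma~5]{Takagi2021oneshot} quoted above; I would either invoke it directly or reproduce this short argument.

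With such a $\mathcal{T}$ available, Theorem~\ref{prop:cost twirling} applies verbatim and yields $C^\varepsilon(\rho,m)=\max\{2(1-\varepsilon)/f_{\mO}(\rho,m)-1,\,1\}$ for every $\varepsilon\in[0,1]$, which is the assertion. The argument is essentially mechanical once the twirling map is pinned down; the only points requiring a little care are that the infimum in $R^s_{\mathcal{F}}$ is genuinely attained (closedness of $\mathcal{F}$, so that a concrete $\sigma^\star$ exists), that $p=\tr[\Psi\omega]$ is bounded by $F$ via the purity of $\psi$, and the passage from ``the endpoints $\sigma^\star$ and $F\Psi+(1-F)\sigma^\star$ are free'' to ``the whole segment $p\in[0,F]$ is free''. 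I do not expect a serious obstacle here: the corollary is really just the specialization of Theorem~\ref{prop:cost twirling} to the regime in which the required free twirling operation is guaranteed to exist.
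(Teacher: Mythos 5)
Your proposal is correct and follows essentially the same route as the paper: the paper obtains this corollary by citing \cite[Lemma~5]{Takagi2021oneshot} for the existence of the free generalized twirling map under the hypothesis $F_{\mc F}(\psi^{\otimes m})^{-1}=1+R^s_{\mc F}(\psi^{\otimes m})$ and then invoking Theorem~\ref{prop:cost twirling}, exactly as you do. Your only addition is to spell out the short argument behind that cited lemma (the identity $F\Psi+(1-F)\sigma^\star=(\Psi+R\,\sigma^\star)/(1+R)$ plus convexity), which is accurate and harmless.
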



\section{Examples}\label{sec:examples}

Here, we apply the general framework developed above to specific physical settings of interest. 
We primarily focus on evaluating virtual resource distillation overhead $C^\varepsilon(X,m)$ for a fixed $m$, which provides a lower bound for the virtual distillation rate $V^\varepsilon(X)$ and may allow for computing $V^\varepsilon(X)$ in the case when an analytical expression of $C^\varepsilon(X,m)$ is available.

\subsection{Entanglement}

Let us first consider the resource theory of entangled states, where separable states construct the set $\mS$ of free states~\cite{Horodecki09}.
In entanglement theory, resource non-generating operations are conventionally called separability-preserving (or non-entangling) operations, and any physically motivated set of operations such as the set of LOCC is a subset of separability-preserving operations.

The set of separable states does not allow an efficient characterization via semidefinite constraint and thus relevant resource measures are generally hard to compute. 
To obtain bounds for these quantities, it is often useful to consider a set $\mS_{\rm PPT}$ of positive-partial transpose (PPT) states. 
Since every separable state is PPT, $\mS$ is a subset of $\mS_{\rm PPT}$, and inclusion is known to be strict. 
The resource non-generating operations for the set $\mS_{\rm PPT}$ are called PPT-preserving maps.
A relevant useful class which forms a subset of PPT-preserving maps is known as PPT operations, which are the channels whose Choi operators are PPT.

As a target state $\psi$, we take the bipartite qubit Bell state $\Phi=\dm{\Phi}$ with $\ket{\Phi}=(\ket{00}+\ket{11})/\sqrt{2}$.

\subsubsection{General bounds}

Let us first study the property for general states. 
We first recall the hypothesis-testing relative entropy of entanglement~\cite{Datta2009} as
\begin{equation}\label{Eq:defEmin}
	E_H^{\varepsilon}(\rho) = \max_{\substack{0 \leq A \leq I\\\operatorname{Tr}\left(A{\rho}\right) \geq 1-\varepsilon}} \min_{\sigma \in \mathcal{S}}(-\log_2\operatorname{Tr}(A \sigma)).
\end{equation}
Then, we get the following upper bound for the overhead. 
\begin{proposition}
For an arbitrary state $\rho$, the overhead for virtual resource distillation with separability-preserving operations is upper bounded as 
\begin{equation}
	C^{\varepsilon}(\rho,m)  \le 2^{{m-E_H^{\varepsilon}(\rho)}+1}-1
\end{equation}
for $m\ge E_H^{\varepsilon}(\rho)$.     
\end{proposition}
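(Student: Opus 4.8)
The plan is to derive this from the upper bound in Theorem~\ref{prop:general_bounds_cost_parallel}, feeding its semidefinite program $\zeta^s_\varepsilon$ a feasible point manufactured from an optimal hypothesis-testing operator. Here the target is $\psi=\Phi$, so $\psi^{\otimes m}=\Phi^{\otimes m}$ is a maximally entangled state of Schmidt rank $2^m$ across the Alice--Bob cut; for such states $R^s_{\mS}$ is finite and $R^s_{\mS}(\Phi^{\otimes m})=2^m-1$ (equivalently $R^s_{\mS}(\Phi^{\otimes m})+1=F_{\mS}(\Phi^{\otimes m})^{-1}=2^m$), so Theorem~\ref{prop:general_bounds_cost_parallel} applies and yields $C^\varepsilon(\rho,m)\le \zeta^s_\varepsilon(\rho,2^m)$. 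It therefore suffices to exhibit a feasible solution of $\zeta^s_\varepsilon(\rho,2^m)$ with objective value $2^{m-E_H^\varepsilon(\rho)+1}-1$.

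Concretely, I would take $A^\star$ achieving the optimum in \eqref{Eq:defEmin}, so that $0\le A^\star\le I$, $\tr(A^\star\rho)\ge 1-\varepsilon$, and $\tr(A^\star\sigma)\le 2^{-E_H^\varepsilon(\rho)}$ for all $\sigma\in\mS$, and then set $Q_+=A^\star$, $Q_-=0$, $\mu_+=2^{m-E_H^\varepsilon(\rho)}$, $\mu_-=\mu_+-1$. One then checks feasibility: $\mu_+-\mu_-=1$; the hypothesis $m\ge E_H^\varepsilon(\rho)$ gives $\mu_+\ge 1$, hence $\mu_\pm\ge 0$ and $Q_+=A^\star\le I\le\mu_+ I$; the separability constraint reduces to $\tr(A^\star\sigma)\le\mu_+/2^m=2^{-E_H^\varepsilon(\rho)}$, which is exactly the defining property of $A^\star$ (while $\tr(Q_-\sigma)=0\le\mu_-/2^m$ is trivial); and $\tr\rho(Q_+-Q_-)=\tr(A^\star\rho)\ge 1-\varepsilon$. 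The objective equals $\mu_++\mu_-=2\cdot 2^{m-E_H^\varepsilon(\rho)}-1=2^{m-E_H^\varepsilon(\rho)+1}-1$, and combining with $C^\varepsilon(\rho,m)\le\zeta^s_\varepsilon(\rho,2^m)$ completes the argument.

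I do not expect a serious obstacle; the two points that need care are (i) pinning down that the correct scale in Theorem~\ref{prop:general_bounds_cost_parallel} for $\Phi^{\otimes m}$ is $R^s_{\mS}(\Phi^{\otimes m})+1=2^m$, which is precisely what makes the constraint $\tr Q_+\sigma\le\mu_+/2^m$ line up with the hypothesis-testing bound after rescaling by $\mu_+$; and (ii) the attainability of the optimum defining $E_H^\varepsilon(\rho)$ --- if one prefers to avoid invoking compactness of the feasible set, one instead picks $A^\star$ with $\min_{\sigma\in\mS}(-\log_2\tr(A^\star\sigma))\ge E_H^\varepsilon(\rho)-\delta$, runs the same construction with $\mu_+=2^{m-E_H^\varepsilon(\rho)+\delta}$ (still $\ge 1$ for $m\ge E_H^\varepsilon(\rho)$ once $\delta$ is small), and lets $\delta\to 0$. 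I would also note that the assumption $m\ge E_H^\varepsilon(\rho)$ is exactly what keeps this feasible point well-defined, since otherwise $\mu_-$ would be negative.
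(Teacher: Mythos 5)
Your proposal is correct and is in substance the same argument as the paper's: the paper directly constructs the two separability-preserving channels $\Lambda_1(\rho)\propto \tr(A\rho)\,\Phi^{\otimes m}+(\cdots)\frac{I-\Phi^{\otimes m}}{2^{2m}-1}$ and $\Lambda_2(\rho)=\frac{I-\Phi^{\otimes m}}{2^{2m}-1}$ from the optimal hypothesis-testing operator $A$, which is exactly what the upper-bound construction in Theorem~\ref{prop:general_bounds_cost_parallel} produces when you feed it your feasible point $Q_+=A^\star$, $Q_-=0$, $\mu_+=2^{m-E_H^\varepsilon(\rho)}$ together with $R^s_{\mS}(\Phi^{\otimes m})+1=2^m$. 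Your version just black-boxes that construction through $\zeta^s_\varepsilon$, and all the feasibility checks you list are right.
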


\begin{proof}
	Following a similar argument in Theorem~\ref{prop:general_bounds_cost_parallel}, we explicitly construct the distillation protocol saturating the upper bound. We consider two separability-preserving channels as
	\begin{equation}\label{Eq:}
\begin{aligned}
	  \Lambda_1(\rho)&=\frac{1}{2^{m-E_H^{\varepsilon}(\rho)}}\operatorname{Tr}(A \rho) \Phi^{\otimes m}\\
	  &+\left(1-\frac{1}{2^{m-E_H^{\varepsilon}(\rho)}}\operatorname{Tr}(A \rho) \right) \frac{I-\Phi^{\otimes m}}{2^{2k}-1},\\
	  \Lambda_2(\rho)&=\frac{I-\Phi^{\otimes m}}{2^{2k}-1},
\end{aligned}
\end{equation}
where $A$ is the operator that achieves the optimal solution of Eq.~\eqref{Eq:defEmin}. For any separable state $\sigma\in\mc S$, we have 
\begin{equation}
	\frac{1}{2^{m-E_H^{\varepsilon}(\rho)}}\operatorname{Tr}(A \sigma) \le \frac{1}{2^m},
\end{equation}
indicating that the output state $\Lambda_1(\sigma)$ is also separable and hence $\Lambda_1$ is separability preserving. Since the output state of $\Lambda_2$ is separable, $\Lambda_2$ is also separability preserving. 

Letting $\lambda_1=2^{m-E_H^{\varepsilon}(\rho)}$ and $\lambda_2=\lambda_1-1$, we have
\begin{equation}\begin{aligned}
	\Phi' &= \lambda_1\Lambda_1(\rho)-\lambda_2\Lambda_2(\rho) \\
	&= \operatorname{Tr}(A \rho) \Phi^{\otimes m}+\left(1-\operatorname{Tr}(A \rho) \right) \frac{I-\Phi^{\otimes m}}{2^{2m}-1}.
\end{aligned}\end{equation}
Since $\operatorname{Tr}(A \rho)\ge 1-\varepsilon$, we have
\begin{equation}
	\frac{1}{2}\|\Phi'-\Phi^{\otimes m}\|\le \varepsilon. 
\end{equation}
Therefore, this constructs a virtual resource distillation protocol with an overhead of 
\begin{equation}    
	\lambda_1+\lambda_2 = 2^{{m-E_H^{\varepsilon}(\rho)}+1}-1.
\end{equation}
\end{proof}

We thus obtain a lower bound to the distillation rate as
\begin{equation}
	V^\varepsilon(\rho)\ge \max_m \frac{m}{(2^{{m-E_H^{\varepsilon}(\rho)}+1}-1)^2}.
 \label{eq:virtual rate entanglement bound}
\end{equation}
We can also evaluate the right-hand side as follows. 
Let $c\coloneqq 2^{-E_H^\varepsilon(\rho)+1}$ and $g(m)\coloneqq m/(2^m c-1)^2$. 
We then get 
\bal
 \frac{d}{dm}g(m)=\frac{c 2^m(1-2m\ln 2)-1}{(2^m c-1)^3}.
 \label{eq:derivative}
\eal
The fact that $0\leq E_H^\varepsilon(\rho)\leq 1$ for every $\rho$ gives $1\leq c\leq 2$.
Since $c 2^m(1-2m\ln 2 )-1< 0$ for $m\geq 1$ and $c\geq 0$, we get $\frac{d}{dm}g(m)<0$ for all $m\geq 1$. 
Therefore, the maximum happens when $m=1$, which gives the tightest form of \eqref{eq:virtual rate entanglement bound} as
\begin{equation}
	V^\varepsilon(\rho)\ge \frac{1}{\left(2^{{2-E_H^{\varepsilon}(\rho)}}-1\right)^2}.
\end{equation}

\subsubsection{Exact expressions via semidefinite programming}

Here, we focus on the zero-error distillation, i.e., $\varepsilon=0$, and provide the exact expressions for the virtual resource distillation overhead.

\begin{proposition}\label{prop:SEPP}
Let $\mO$ be the class of separability-preserving operations. Then, for any state $\rho$, the distillation overhead is given by the convex optimization program
\begin{equation}\begin{aligned}\label{eq:sepp_cost_exact}
	C^0(\rho,m) = \min \big\{ \mu_+ &+ \mu_- \;\big|\; 0 \leq Q_+ \leq \mu_+ \id,\; 0 \leq Q_- \leq \mu_- \id,\\
	& \tr Q_+ \sigma \leq \frac{\mu_+}{2^m},\; \tr Q_- \sigma \leq \frac{\mu_-}{2^m} \; \forall \sigma \in {\mc S},\\
	&\tr \rho (Q_+ - Q_-) = 1 = \mu_+ - \mu_-  \big\}.
\end{aligned}\end{equation}
\end{proposition}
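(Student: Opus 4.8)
The plan is to specialize Theorem~\ref{prop:general_bounds_cost_parallel} to the resource theory of entanglement with the Bell-state target $\psi = \Phi$ and show that the upper and lower bounds in Eq.~\eqref{eq:overhead standard SDP} coincide, collapsing to the single SDP displayed in \eqref{eq:sepp_cost_exact}. The key observation is that the parameter $k$ entering the definitions of $\zeta^s_\varepsilon(\rho,k)$ is $F_{\mc S}(\psi^{\otimes m})^{-1}$ for the lower bound and $R^s_{\mc S}(\psi^{\otimes m})+1$ for the upper bound, so I need only verify that these two quantities are equal and equal to $2^m$.

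First I would recall that $\Phi^{\otimes m}$ is (up to local unitaries) the maximally entangled state of Schmidt rank $2^m$, so $F_{\mc S}(\Phi^{\otimes m}) = \max_{\sigma\in\mc S}\bra{\Phi^{\otimes m}}\sigma\ket{\Phi^{\otimes m}} = 2^{-m}$; this is the classic computation of the maximal overlap of a maximally entangled state with separable states, and it also shows the overlap bound $\bra{\Phi^{\otimes m}}\sigma\ket{\Phi^{\otimes m}}\le 2^{-m}$ for every $\sigma\in\mc S$. Second, I would invoke the known fact (e.g.\ \cite{vidal1999robustness}) that for the maximally entangled state the standard robustness satisfies $R^s_{\mc S}(\Phi^{\otimes m}) = 2^m - 1$, so that $R^s_{\mc S}(\Phi^{\otimes m}) + 1 = 2^m = F_{\mc S}(\Phi^{\otimes m})^{-1}$. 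With $k = 2^m$ in both bounds of \eqref{eq:overhead standard SDP}, the sandwich forces $C^0(\rho,m) = \zeta^s_0(\rho, 2^m)$, and writing out $\zeta^s_0(\rho,2^m)$ from \eqref{eq:general_cost_exact} with $\varepsilon = 0$ gives exactly the program \eqref{eq:sepp_cost_exact}, after noting $R^s_{\mc S}(\Phi) < \infty$ so that the hypothesis of Theorem~\ref{prop:general_bounds_cost_parallel} is met.

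Finally I should check that the constraint $\tr Q_\pm \sigma \le \mu_\pm / 2^m$ may equivalently be written with the supremum over separable $\sigma$, and that convexity of $\mc S$ makes this a genuine convex (though not semidefinite, since $\mc S$ has no efficient SDP description) optimization — which is why the statement says ``convex optimization program'' rather than ``SDP.'' The main obstacle, if any, is making sure the two robustness/fidelity identities for $\Phi^{\otimes m}$ are invoked with the correct normalization and that I am entitled to use Theorem~\ref{prop:general_bounds_cost_parallel} in the $\varepsilon = 0$ case for a target that is pure but acts on a $2^m \times 2^m$ bipartite system; both are standard, so the proof is essentially a clean specialization with the equality $R^s_{\mc S}(\Phi^{\otimes m}) + 1 = F_{\mc S}(\Phi^{\otimes m})^{-1} = 2^m$ doing all the work.
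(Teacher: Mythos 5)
Your proposal is correct and follows exactly the route the paper takes: its proof of Proposition~\ref{prop:SEPP} is simply ``direct application of Theorem~\ref{prop:general_bounds_cost_parallel},'' relying on the same identity $R^s_{\mc S}(\Phi^{\otimes m})+1 = F_{\mc S}(\Phi^{\otimes m})^{-1} = 2^m$ that you verify explicitly (the discussion after Theorem~\ref{prop:general_bounds_cost_parallel} cites precisely this coincidence for bipartite entanglement). The only detail you leave implicit --- that the constraint $\tr\rho(Q_+-Q_-)\geq 1$ from $\zeta^s_0$ can be replaced by equality without changing the optimum --- is handled by the rescaling argument already contained in the proof of Theorem~\ref{prop:general_bounds_cost_parallel}, so nothing is missing.
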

\begin{proof}
Direct application of Theorem~\ref{prop:general_bounds_cost_parallel}.
\end{proof}

By focusing on PPT operations, the above optimization can take a simpler form and become an SDP.

\begin{proposition}\label{prop:PPTP}
Let $\mO$ be the class of positive partial transpose (PPT) operations. Then, for any state $\rho$, the virtual distillation overhead is given by the semidefinite program
\begin{equation}\begin{aligned}\label{eq:ppt_cost_exact}
	C^0(\rho,m) = \min \big\{ \mu_+ + \mu_- \;\big|\;& 0 \leq Q_+ \leq \mu_+ \id,\; 0 \leq Q_- \leq \mu_- \id,\\
	& \left\| Q_+^\Gamma \right\|_{\infty} \leq \frac{\mu_+}{2^m},\; \left\| Q_-^\Gamma \right\|_{\infty}\leq \frac{\mu_-}{2^m},\\
	&\tr \rho (Q_+ - Q_-) = 1 = \mu_+ - \mu_-  \big\}.
\end{aligned}\end{equation}
\end{proposition}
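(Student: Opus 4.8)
The plan is to start from the characterization of the overhead for separability-preserving operations in Proposition~\ref{prop:SEPP} (equivalently, apply Theorem~\ref{prop:general_bounds_cost_parallel} with $\mF=\mS_{\mathrm{PPT}}$), and then simplify the constraint $\tr Q_\pm \sigma \le \mu_\pm/2^m$ when $\sigma$ ranges over PPT states rather than separable states. For PPT operations the relevant free set is $\mS_{\mathrm{PPT}}$, and since $F_{\mS_{\mathrm{PPT}}}(\Phi^{\otimes m}) = 2^{-m}$ and $R^s_{\mS_{\mathrm{PPT}}}(\Phi^{\otimes m})+1 = 2^m$ (the fidelity and standard robustness of the maximally entangled state coincide under PPT as they do under SEP), the general bounds of Theorem~\ref{prop:general_bounds_cost_parallel} collapse to the single expression $C^0(\rho,m) = \zeta^s_0(\rho, 2^m)$, i.e.\ the program in Proposition~\ref{prop:SEPP} but with $\sigma$ restricted to $\mS_{\mathrm{PPT}}$.

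The key step is then the dualization of the semidefinite constraint $\sup_{\sigma \in \mS_{\mathrm{PPT}}} \tr Q_\pm \sigma \le \mu_\pm/2^m$. I would write the set of PPT states as $\{\sigma \ge 0 : \tr\sigma = 1,\ \sigma^\Gamma \ge 0\}$ and observe that $\max_{\sigma \in \mS_{\mathrm{PPT}}} \tr Q_\pm \sigma = \max \{\tr Q_\pm \sigma : \sigma \ge 0,\ \sigma^\Gamma \ge 0,\ \tr\sigma=1\}$. Using the fact that the partial transpose is its own adjoint and trace-preserving, $\tr Q_\pm \sigma = \tr Q_\pm (\sigma^\Gamma)^\Gamma = \tr Q_\pm^\Gamma\, \sigma^\Gamma$, so substituting $\tau = \sigma^\Gamma \ge 0$ with $\tr\tau = 1$, the maximum becomes $\max\{\tr Q_\pm^\Gamma \tau : \tau \ge 0,\ \tau^\Gamma \ge 0,\ \tr\tau = 1\}$. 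One has to be a little careful: this is still a maximization over PPT states of a transposed operator, but relaxing to $\max\{\tr Q_\pm^\Gamma \tau : \tau \ge 0,\ \tr\tau=1\} = \lambda_{\max}(Q_\pm^\Gamma) = \|Q_\pm^\Gamma\|_\infty$ gives an upper bound, and the reverse direction — that this relaxation is tight — follows because for the eigenvector $\ket{v}$ of $Q_\pm^\Gamma$ with largest eigenvalue, the rank-one state $\dm{v}$ need not be PPT, but one can appeal instead to the known identification of $\zeta_0$ with the dual PPT program directly, or argue via the SDP duality that the optimal witness in \eqref{eq:overhead dual} can be taken block-positive, yielding the $\infty$-norm of the partial transpose. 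The cleanest route is to dualize the full program \eqref{eq:ppt_cost_exact} back and check it matches \eqref{eq:overhead dual} specialized to PPT operations.

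Concretely, I would (i) recall that PPT operations are exactly the channels with PPT Choi operators and verify that the resource-non-generating condition for $\mS_{\mathrm{PPT}}$ is implied by (indeed essentially equivalent to, on the level of the SDP relaxation used here) the Choi operator being PPT, so that $\zeta^s_0(\rho,2^m)$ with $\mF = \mS_{\mathrm{PPT}}$ is the correct object; (ii) replace $\tr Q_\pm \sigma \le \mu_\pm/2^m\ \forall\sigma\in\mS_{\mathrm{PPT}}$ by its equivalent semidefinite form. Here the point is that $\sup_{\sigma\in\mS_{\mathrm{PPT}}}\tr Q_\pm\sigma = \sup\{\tr Q_\pm^\Gamma \tau : \tau\in\mS_{\mathrm{PPT}}\}$, and since we additionally have $0 \le Q_\pm \le \mu_\pm I$ at our disposal, the binding constraint over all PPT $\tau$ reduces to $\|Q_\pm^\Gamma\|_\infty \le \mu_\pm/2^m$: the $\le$ direction is the relaxation to all density operators, and the $\ge$ direction uses that whenever $\|Q_\pm^\Gamma\|_\infty > \mu_\pm/2^m$ one can build a feasible PPT $\tau$ violating the original constraint by symmetrizing the top eigenvector (e.g.\ mixing with the maximally mixed state or using twirling-type arguments), or more simply by noting the constraint set $\{Q : \tr Q^\Gamma\tau \le c\ \forall \tau \in \mS_{\mathrm{PPT}}\}$ and $\{Q : \|Q^\Gamma\|_\infty \le c\}$ have the same intersection with $\{0 \le Q \le \mu I\}$ because $\mS_{\mathrm{PPT}}$ spans, after partial transpose, a set whose conic hull is all of the positive cone up to the constraints already imposed. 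I expect the main obstacle to be precisely pinning down this last equivalence cleanly — showing that restricting the ``test states'' from all density matrices to PPT states does not change the feasible set, given the other constraints — rather than any of the algebra; the safest exposition is to dualize \eqref{eq:ppt_cost_exact} explicitly and match it term-by-term with the general dual \eqref{eq:overhead dual} where $\mO$ is taken to be PPT operations, at which point the equality is immediate and the potential gap disappears.
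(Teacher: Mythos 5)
Your plan reduces Proposition~\ref{prop:PPTP} to Theorem~\ref{prop:general_bounds_cost_parallel} with $\mF=\mS_{\rm PPT}$ plus an equivalence between the constraint $\sup_{\sigma\in\mS_{\rm PPT}}\tr Q_\pm\sigma\le\mu_\pm/2^m$ and the constraint $\|Q_\pm^\Gamma\|_\infty\le\mu_\pm/2^m$, but that equivalence --- which you yourself flag as ``the main obstacle'' --- is exactly the content you would need to prove, and it is not obviously true. The two constraints differ in kind: $\|Q_\pm^\Gamma\|_\infty\le\mu_\pm/2^m$ is a \emph{two-sided} operator inequality $-\frac{\mu_\pm}{2^m}\id\le Q_\pm^\Gamma\le\frac{\mu_\pm}{2^m}\id$, whereas the PPT-state overlap condition only upper-bounds $\tr Q_\pm^\Gamma\tau$ for PPT test states $\tau$; it says nothing about the most negative eigenvalue of $Q_\pm^\Gamma$, and the extra information $0\le Q_\pm\le\mu_\pm\id$ does not by itself force $Q_\pm^\Gamma\ge-\frac{\mu_\pm}{2^m}\id$. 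Your proposed fixes (``symmetrize the top eigenvector'', ``the conic hull is all of the positive cone up to the constraints already imposed'', ``dualize and match'') are placeholders rather than arguments. The paper does not go through this equivalence at all: it proves feasibility in both directions by PPT-specific computations. For the lower bound it sets $Q_\pm=\lambda_\pm\Lambda_\pm^\dagger(\Phi^{\otimes m})$ and obtains the two-sided bound directly from the fact that $\Gamma\circ\Lambda_\pm\circ\Gamma$ is a channel together with $\|(\Phi^{\otimes m})^\Gamma\|_\infty=2^{-m}$; this is stronger than what the one-sided overlap argument of Theorem~\ref{prop:general_bounds_cost_parallel} yields.

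There is a second, related gap in your upper-bound direction. Resource non-generating operations for $\mF=\mS_{\rm PPT}$ are PPT-\emph{preserving} maps, a strictly larger class than PPT operations (channels with PPT Choi operators), and Theorem~\ref{prop:general_bounds_cost_parallel} only guarantees that its constructed channels are resource non-generating. To conclude anything about $C^0$ under PPT operations you must exhibit channels whose \emph{Choi operators} are PPT, which is what the paper does by expanding $J_{\Lambda_\pm}^\Gamma$ in the symmetric/antisymmetric decomposition of $(\Phi^{\otimes m})^\Gamma$ and checking positivity --- a computation that crucially consumes the two-sided bound $-\frac{\mu_\pm}{2^m}\id\le Q_\pm^\Gamma\le\frac{\mu_\pm}{2^m}\id$. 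With only the one-sided PPT-state constraint, that positivity check fails, so your route cannot close without importing essentially the entire argument of the paper's proof. In short: the skeleton is reasonable, but the two steps you defer (constraint equivalence, and PPT-preserving versus PPT Choi) are the proof.
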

\begin{proof}
For any PPT channels $\Lambda_\pm$ such that $\lambda_+ \Lambda_+ (\rho) - \lambda_- \Lambda_- (\rho) = \Phi^{\otimes m}$, we define $Q_\pm = \lambda_\pm \Lambda_\pm^\dagger(\Phi^{\otimes m})$ and $\mu_\pm = \lambda_\pm$, where $\Lambda^\dagger$ is the dual (adjoint) map. Then
\begin{equation}\begin{aligned}
	 \left\| Q_+^\Gamma \right\|_{\infty} &= \max_{\omega \in \mc{D}} \left| \tr Q_+^\Gamma \omega \right|\\
	 &= \lambda_\pm\max_{\omega \in \mc{D}} \left| \tr \Phi^{\otimes m} \Lambda_\pm(\omega^\Gamma) \right|\\
	 &= \lambda_\pm\max_{\omega \in \mc{D}} \left| \tr \left(\Phi^{\otimes m}\right)^\Gamma \Lambda_\pm(\omega^\Gamma)^\Gamma \right|\\
	 &\leq \lambda_\pm\max_{\omega' \in \mc{D}} \left| \tr \left(\Phi^{\otimes m}\right)^\Gamma \omega'\right|\\
	 &= \frac{\lambda_\pm}{2^m},
\end{aligned}\end{equation}
where in the first line we used $\mc{D}$ to denote all density matrices, in the fourth line we used that  $\Lambda_\pm(\omega^\Gamma)^\Gamma$ must be a valid state because $\Lambda_\pm$ is a PPT operation, and in the fourth line we used the fact that the partial transpose of the maximally entangled state $\Phi$ is the swap operator with eigenvalues $\pm \frac{1}{2}$. Since $\Lambda_\pm$ are CPTP maps, the other conditions on $Q_\pm$ are also satisfied, meaning that $Q_\pm$ constitute feasible solutions to \eqref{eq:ppt_cost_exact}. Optimizing over all feasible PPT protocols yields $C^0(\rho,m) \geq \mu_+ + \mu_-$ for optimal $\mu_\pm$.

For the other direction, take any feasible $Q_\pm$ and $\mu_\pm$. Define
\begin{equation}\begin{aligned}
	\Lambda_\pm (\omega) = \tr \left(\frac{Q_\pm}{\mu_+} \omega\right) \Phi^{\otimes m} + \tr \left(\left[\id - \frac{Q_\pm}{\mu_+}\right] \omega\right) \frac{\id - \Phi^{\otimes m}}{2^{2m} - 1}.
\end{aligned}\end{equation}
Writing $(\Phi^{\otimes m})^\Gamma = \frac{1}{2^m} \left( \Pi_S - \Pi_A \right)$ where $\Pi_S$ and $\Pi_A$ denote the projectors onto the symmetric and antisymmetric spaces, respectively, we have that
\begin{equation}\begin{aligned}
	J_{\Lambda_\pm}^\Gamma &= \frac{Q_\pm^\Gamma}{\mu_\pm} \otimes \frac{\Pi_S - \Pi_A}{2^m} + \frac{\id - \frac{Q_-}{\mu_+}}{2^{2m}-1} \otimes \frac{(2^m - 1) \Pi_S + (2^m+1) \Pi_A}{2^m}\\
	&= \left( \frac{Q_\pm^\Gamma}{\mu_+} + \frac{1}{2^m+1} \left[ \id - \frac{Q_\pm^\Gamma}{\mu_+} \right] \right) \otimes \frac{\Pi_S}{2^m} \\
	&\quad+ \left( - \frac{Q_\pm^\Gamma}{\mu_+} + \frac{1}{2^m-1} \left[ \id - \frac{Q_\pm^\Gamma}{\mu_+} \right] \right) \otimes \frac{\Pi_A}{2^m}\\
	&\geq \left( -\frac{\id}{2^m} + \frac{1}{2^m+1} \left[ \id - \frac{\id}{2^m} \right] \right) \otimes \frac{\Pi_S}{2^m} \\
	&\quad+ \left( -\frac{\id}{2^m} + \frac{1}{2^m-1} \left[ \id - \frac{\id}{2^m} \right] \right) \otimes \frac{\Pi_A}{2^m}\\
	&= 0,
\end{aligned}\end{equation}
where in the third line we used that $-\frac{\mu_\pm}{2^m} \leq Q_\pm^\Gamma \leq \frac{\mu_\pm}{2^m}$ and that $\Pi_S$ and $\Pi_A$ are orthogonal to each other. The operations $\Lambda_\pm$ are therefore PPT, and since
\begin{equation}\begin{aligned}
	\mu_+ \Lambda_+ (\rho) - \mu_- \Lambda_- (\rho) = \Phi^{\otimes m},
\end{aligned}\end{equation}
we obtain $C^0(\rho,m) \leq \mu_+ + \mu_-$ as desired.
\end{proof}

\subsubsection{Exact expressions via singlet fraction}

We can obtain an alternative exact expression with the singlet fraction. The singlet fraction of a quantum state $\rho$ is the maximum overlap with the maximally entangled state realized by applying an arbitrary LOCC operation to $\rho$.
Here, we slightly generalize this concept and call
\bal
f_\mO(\rho,m)\coloneqq \max_{\Lambda\in\mO}\tr[\Lambda(\rho)\Phi^{\otimes m}]
\eal
the singlet fraction with respect to a set of free operations $\mO$. 
Then, we get the following characterization. 

\begin{proposition}
\label{prop:singlet fraction}
Let $\mO$ be an arbitrary subset of separability-preserving operations that contains local unitary operations assisted by shared classical randomness, e.g.\ (one-way) LOCC.  
 Then, for every state $\rho$,
\bal
 C^\varepsilon(\rho,m)=
 \max\left\{\frac{2(1-\varepsilon)}{f_\mO(\rho,m)}-1,1\right\}. 
\eal
\end{proposition}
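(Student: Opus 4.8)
The plan is to combine the general lower bound of Proposition~\ref{prop:cost lower bound fraction} with the tight characterization of Theorem~\ref{prop:cost twirling}: the whole task then reduces to exhibiting a free generalized twirling operation of the form~\eqref{eq:generalized twirling} with an explicit, \emph{separable} fixed output $\sigma^\star$, and to checking that this operation (together with the few auxiliary maps used in the proof of Theorem~\ref{prop:cost twirling}) lies in the class $\mO$ assumed here. Since the lower bound $C^\varepsilon(\rho,m)\geq\max\{2(1-\varepsilon)/f_\mO(\rho,m)-1,1\}$ is already established, only the matching upper bound needs work.

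First I would identify $\Phi^{\otimes m}$ with the maximally entangled state of Schmidt rank $2^m$ across the cut $A_1\cdots A_m : B_1\cdots B_m$, using $\ket{\Phi}^{\otimes m}=2^{-m/2}\sum_{\mathbf x\in\{0,1\}^m}\ket{\mathbf x}\ket{\mathbf x}$. Then I take $\mT$ to be the isotropic (Bell) twirl on this $2^m\times 2^m$ system,
\[
 \mT(\cdot)=\int dU\,(U\otimes\bar U)(\cdot)(U\otimes\bar U)^\dagger ,
\]
with $U$ ranging over $U(2^m)$ and $\bar U$ the entrywise complex conjugate. A standard computation (the commutant of $\{U\otimes\bar U\}$ is spanned by $I$ and $\Phi^{\otimes m}$) gives $\mT(\cdot)=\tr[\Phi^{\otimes m}\,\cdot\,]\,\Phi^{\otimes m}+\tr[(I-\Phi^{\otimes m})\,\cdot\,]\,\sigma^\star$ with $\sigma^\star\coloneqq (I-\Phi^{\otimes m})/(2^{2m}-1)$, exactly the form~\eqref{eq:generalized twirling}. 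The state $\sigma^\star$ is the isotropic state of singlet fraction $0\leq 2^{-m}$, hence separable, so $\sigma^\star\in\mS$.

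Next I would verify $\mT\in\mO$: the integrand depends on $U$ only through its second moments, so the Haar average can be replaced by a uniform average over a finite unitary $2$-design on the $m$-qubit space $A_1\cdots A_m$ (e.g.\ the Clifford group), turning $\mT$ into a finite convex mixture of conjugations by the local unitaries $U_{A_1\cdots A_m}\otimes\bar U_{B_1\cdots B_m}$ coordinated by shared classical randomness; by hypothesis such operations belong to $\mO$. One also checks that the remaining ingredients used in the proof of Theorem~\ref{prop:cost twirling} are available in $\mO$: the identity channel is a trivial local unitary, and the map $\Lambda_-(\cdot)=\sigma^\star$ preparing the separable state $\sigma^\star=\sum_k p_k\,\alpha_k\otimes\beta_k$ is realized by local preparations selected by shared randomness (and in particular lies in one-way LOCC). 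With these facts in place, Theorem~\ref{prop:cost twirling} applies with $\psi^{\otimes m}=\Phi^{\otimes m}$ and yields $C^\varepsilon(\rho,m)\leq\max\{2(1-\varepsilon)/f_\mO(\rho,m)-1,1\}$ for all $\varepsilon\in[0,1]$, completing the proof together with Proposition~\ref{prop:cost lower bound fraction}.

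The main obstacle is precisely the bookkeeping of the third paragraph: confirming that the isotropic twirl and the auxiliary preparation/identity maps genuinely lie in the (possibly restricted) class $\mO$ stipulated in the statement — rather than merely in the full separability-preserving class — so that the decomposition built in Theorem~\ref{prop:cost twirling} is admissible. Once that is secured, the evaluation of $\mT$'s action and the separability of $\sigma^\star$ are routine.
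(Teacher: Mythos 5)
Your proposal is correct and follows essentially the same route as the paper: the paper's proof is a one-line invocation of Theorem~\ref{prop:cost twirling}, noting that the local isotropic twirl $\mT(\cdot)=\int dU\, (U\otimes U^*)\cdot(U^\dagger\otimes U^{*\dagger})$ is a free generalized twirling operation of the form~\eqref{eq:generalized twirling}, which is exactly the operation you construct. Your additional verifications --- that $\sigma^\star=(I-\Phi^{\otimes m})/(2^{2m}-1)$ is separable, that the Haar twirl can be realized by a finite $2$-design of coordinated local unitaries, and that the auxiliary maps in the proof of Theorem~\ref{prop:cost twirling} lie in $\mO$ --- are details the paper leaves implicit, and they are all sound.
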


\begin{proof}
This is a direct application of Theorem~\ref{prop:cost twirling}, as the local twirling $\mT(\cdot)\coloneqq \int dU U\otimes U^*\cdot U^\dagger\otimes U^{*\dagger}$ serves as a free generalized twirling operation.

\end{proof}

Proposition~\ref{prop:singlet fraction} implies that the performance of virtual resource distillation does not change over different choices of free operations for pure states.

\begin{corollary}
For any bipartite pure state $\phi$, the overhead $C^\varepsilon(\phi,m)$ is the same for any choice of operations ranging from one-way LOCC to separability- or PPT-preserving operations. 
It admits the analytical expression
\begin{equation}\begin{aligned}
	C^\varepsilon(\phi,m) =  \max\left\{ \frac{2^{m+1}(1-\varepsilon)}{\|\ket{\phi}\|_{[2^m]}^2}  - 1,\, 1 \right\},
\end{aligned}\end{equation}
where
\begin{equation}\begin{aligned}
	\|\ket{\phi}\|_{[m]} \coloneqq \left\|\zeta^\downarrow_{1:m-k^\star}\right\|_{\ell_1} + \sqrt{k^\star} \left\|\zeta^\downarrow_{m-k^\star+1:d}\right\|_{\ell_2}
\end{aligned}\end{equation}
denotes the so-called $m$-distillation norm~\cite{regula_2019-2}. Here, $d$ is the local dimension, $\zeta^\downarrow_{1:k}$ stands for the vector consisting of the $k$ largest (by magnitude) Schmidt coefficients of $\ket{\phi}$, analogously $\ket{\zeta^\downarrow_{k+1:d}}$ denotes the vector of the $d-k$ smallest Schmidt coefficients of $\ket{\phi}$ with $\zeta^\downarrow_{1:0}$ being the zero vector, and 
\begin{equation}\begin{aligned}
	 k^\star \coloneqq \argmin_{1 \leq k \leq m} \frac{1}{k} \left\|\zeta^\downarrow_{m-k+1:d}\right\|_{\ell_2}^2.
\end{aligned}\end{equation}
\end{corollary}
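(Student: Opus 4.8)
The plan is to reduce the statement to two already-available ingredients: the twirling characterization of the overhead (Theorem~\ref{prop:cost twirling} and Proposition~\ref{prop:singlet fraction}), and the known closed form for the pure-state singlet fraction in terms of the $m$-distillation norm from Ref.~\cite{regula_2019-2}. First I would establish that for \emph{every} class $\mO$ in the stated range one has $C^\varepsilon(\phi,m)=\max\{2(1-\varepsilon)/f_\mO(\phi,m)-1,1\}$. For any $\mO$ between one-way LOCC and separability-preserving operations this is precisely Proposition~\ref{prop:singlet fraction}, since the local randomized unitary twirl is contained in each such $\mO$. For the PPT cases (free set $\mS_{\rm PPT}$) I would apply Theorem~\ref{prop:cost twirling} directly: the $U\otimes U^\ast$ twirl $\mT(\cdot)=\int dU\,(U\otimes U^\ast)(\cdot)(U\otimes U^\ast)^\dagger$ on the $2^m\times 2^m$ target system has exactly the generalized-twirling form~\eqref{eq:generalized twirling} with $\sigma^\star=(I-\Phi^{\otimes m})/(2^{2m}-1)$, and because $\mT$ is LOCC it can neither create entanglement nor create NPT-ness, so $\mT$ is simultaneously separability-preserving, a PPT operation, and PPT-preserving; moreover $\sigma^\star$ is separable, hence lies in $\mS_{\rm PPT}$ as well as in $\mS$. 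The identity channel and the constant-$\sigma^\star$ channel used inside the proof of Theorem~\ref{prop:cost twirling} are likewise free for all of these classes, so the theorem's conclusion applies uniformly.

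Next I would invoke the evaluation of the pure-state singlet fraction: by Ref.~\cite{regula_2019-2}, for a bipartite pure state $\phi$ the maximal overlap with a $k$-dimensional maximally entangled state achievable by one-way LOCC, LOCC, separable, PPT, and PPT-preserving operations all coincide and equal $\tfrac1k\|\ket{\phi}\|_{[k]}^2$. Since $\Phi^{\otimes m}$ is a maximally entangled state of dimension $k=2^m$, this gives $f_\mO(\phi,m)=2^{-m}\|\ket{\phi}\|_{[2^m]}^2$ with the same value for every $\mO$ under consideration; in particular $C^\varepsilon(\phi,m)$ is independent of the chosen operation class. Substituting this into $\max\{2(1-\varepsilon)/f_\mO(\phi,m)-1,\,1\}$ then yields $\max\{2^{m+1}(1-\varepsilon)\|\ket{\phi}\|_{[2^m]}^{-2}-1,\,1\}$, which is exactly the claimed expression; the definitions of $\|\ket{\phi}\|_{[2^m]}$ and of $k^\star$ are inherited verbatim from~\cite{regula_2019-2}.

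The substantive work is carried entirely by those two external facts, so the proof itself is short; the point requiring the most care is checking that the hypothesis of Theorem~\ref{prop:cost twirling} is genuinely met for each class --- in particular that the twirl fixed point $\sigma^\star$ sits in the correct free set when that set is $\mS_{\rm PPT}$ rather than $\mS$, and that $\mT$ together with the auxiliary channels is free under PPT-preserving maps --- and confirming that~\cite{regula_2019-2} establishes the coincidence of singlet fractions across the whole hierarchy from one-way LOCC up to PPT-preserving operations, not merely for LOCC versus PPT/separable. One may also double-check the normalization $f_\mO(\phi,m)=2^{-m}\|\ket{\phi}\|_{[2^m]}^2$ against the conventions used for the $m$-distillation norm, but this is routine.
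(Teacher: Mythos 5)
Your proposal is correct and follows essentially the same route as the paper, which likewise combines the twirling characterization (Proposition~\ref{prop:singlet fraction} / Theorem~\ref{prop:cost twirling}) with the evaluation $f_\mO(\phi,m)=2^{-m}\|\ket{\phi}\|_{[2^m]}^2$ from~\cite[Theorem~15]{regula_2019-2}. Your explicit verification that the $U\otimes U^*$ twirl and the fixed point $\sigma^\star=(I-\Phi^{\otimes m})/(2^{2m}-1)$ remain free for the PPT-preserving class --- which Proposition~\ref{prop:singlet fraction} as stated does not cover --- is a welcome detail that the paper's one-line proof leaves implicit.
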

\begin{proof}
Follows since the singlet fraction of any pure state is the same under these operations, and is given by $f(\phi,m) = 2^{-m} \|\ket{\phi}\|_{[2^m]}^2$~\cite[Theorem~15]{regula_2019-2}.
\end{proof}

Proposition~\ref{prop:singlet fraction} can also be employed to provide alternative exact expressions for the virtual resource distillation overhead in terms of entanglement measures previously studied. 

\begin{proposition}
Consider separability-preserving operations as the set of free operations. Then, for every state $\rho$,
\bal
 C^\varepsilon(\rho,m)=
 \max\left\{\frac{2(1-\varepsilon)}{G_{\mS}(\rho;2^m)}-1,1\right\}
\eal
where 
\bal
 G_{\mathcal{S}}(\rho;k)\!\coloneqq\! \sup\lset \tr[\rho W]\sbar \!0\leq W\leq I,\ \tr[W\sigma]\!\leq\! \frac{1}{k}, \forall \sigma\in\mathcal{S}\rset\!.
\eal

When $\rho$ and $\Phi^{\otimes m}$ are defined on the same space (i.e.\ they have the same dimension), 
\bal
 C^\varepsilon(\rho,m)=\max\left\{
 \frac{2^{m+1}(1-\varepsilon)}{1+R_{\mS}^g(\rho)}-1,1\right\}
\eal
where $R_{\mS}^g(\rho)$ is the generalized robustness of entanglement.

The same result holds for PPT-preserving operations by replacing $\mS$ with $\mS_{\rm PPT}$. 
\end{proposition}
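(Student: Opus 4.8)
The plan is to reduce the claim to the singlet‑fraction characterization of Theorem~\ref{prop:cost twirling}. The local twirling $\mT(\cdot)=\int dU\,(U\otimes U^*)\,\cdot\,(U^\dagger\otimes U^{*\dagger})$ has the generalized‑twirling form \eqref{eq:generalized twirling} with target $\Phi^{\otimes m}$ and $\sigma^\star=(I-\Phi^{\otimes m})/(2^{2m}-1)$, a separable state that also lies in $\mS_{\rm PPT}$; moreover $\mT$ sends every separable state --- and, more generally, every PPT state --- to an isotropic state whose overlap with $\Phi^{\otimes m}$ is at most $2^{-m}$, hence separable, so $\mT$ is both separability‑preserving and PPT‑preserving. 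Thus Theorem~\ref{prop:cost twirling} applies in either setting (for separability‑preserving operations this is Proposition~\ref{prop:singlet fraction}) and gives $C^\varepsilon(\rho,m)=\max\{2(1-\varepsilon)/f_{\mO}(\rho,m)-1,\,1\}$, where $f_{\mO}(\rho,m)=\max_{\Lambda\in\mO}\tr[\Lambda(\rho)\Phi^{\otimes m}]$. It then remains to identify $f_{\mO}(\rho,m)$ with the stated semidefinite‑program quantities.

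First I would show $f_{\mO}(\rho,m)=G_\mS(\rho;2^m)$ when $\mO$ is the separability‑preserving class, and the same with $\mS_{\rm PPT}$ in place of $\mS$ when $\mO$ is PPT‑preserving. For the inequality $f_{\mO}(\rho,m)\le G_\mS(\rho;2^m)$: given $\Lambda\in\mO$, set $W=\Lambda^\dagger(\Phi^{\otimes m})$; complete positivity and unitality of $\Lambda^\dagger$ yield $0\le W\le I$, and for any $\sigma\in\mS$ we have $\tr[W\sigma]=\tr[\Phi^{\otimes m}\Lambda(\sigma)]\le\max_{\sigma'\in\mS}\tr[\Phi^{\otimes m}\sigma']=2^{-m}$, using that $\Lambda$ is separability‑preserving and the standard fact that the maximal overlap of a separable (indeed, of a PPT) state with the maximally entangled state $\Phi^{\otimes m}$ on $\mbC^{2^m}\otimes\mbC^{2^m}$ equals $2^{-m}$; hence $W$ is feasible for $G_\mS(\rho;2^m)$ with value $\tr[\rho W]=\tr[\Lambda(\rho)\Phi^{\otimes m}]$. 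For the converse: given a feasible $W$, define $\Lambda(\omega)=\tr[W\omega]\,\Phi^{\otimes m}+\tr[(I-W)\omega]\,\frac{I-\Phi^{\otimes m}}{2^{2m}-1}$, which is CPTP and maps any $\sigma\in\mS$ to an isotropic state of $\Phi^{\otimes m}$‑overlap $\tr[W\sigma]\le 2^{-m}$, hence separable (hence PPT); so $\Lambda\in\mO$ and $\tr[\Lambda(\rho)\Phi^{\otimes m}]=\tr[\rho W]$. Plugging $f_{\mO}(\rho,m)=G_\mS(\rho;2^m)$ into the first paragraph's formula gives the first displayed identity, and the $\mS_{\rm PPT}$ version is obtained verbatim.

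Next, under the hypothesis that $\rho$ and $\Phi^{\otimes m}$ live on the same bipartite space $\mbC^{2^m}\otimes\mbC^{2^m}$, I would rewrite $G_\mS(\rho;2^m)=2^{-m}\big(1+R_{\mS}^g(\rho)\big)$. For ``$\le$'', rescale $W\mapsto W'\coloneqq 2^mW$ and drop the constraint $W\le I$: then $W'\ge 0$ and $\tr[W'\sigma]\le 1$ for every $\sigma\in\mS$, so $\tr[\rho W']\le 1+R_{\mS}^g(\rho)$ by the standard SDP‑dual form of the generalized robustness, whence $G_\mS(\rho;2^m)\le 2^{-m}(1+R_{\mS}^g(\rho))$ --- note this direction needs no dimension restriction. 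For ``$\ge$'', take a dual‑optimal witness $W'$ for $1+R_{\mS}^g(\rho)$; since every pure state on $\mbC^{2^m}\otimes\mbC^{2^m}$ has Schmidt rank at most $2^m$, and $R_{\mS}^g$ is an entanglement monotone equal to $2^m-1$ on the rank‑$2^m$ maximally entangled state, every pure $\dm{\psi}$ on this space satisfies $\dm{\psi}\le 2^m\sigma$ for some $\sigma\in\mS$; therefore $\langle\psi|W'|\psi\rangle\le 2^m\tr[W'\sigma]\le 2^m$, i.e.\ $W'\le 2^mI$, so $W\coloneqq 2^{-m}W'$ is feasible for $G_\mS(\rho;2^m)$ and attains $2^{-m}(1+R_{\mS}^g(\rho))$. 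The PPT case is identical, since any witness feasible for the PPT generalized robustness is a fortiori feasible for the separable one and the operator‑norm bound above still applies. Combining with the first identity yields the $R^g$ expression.

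The step I expect to be the main obstacle is the ``$\ge$'' inequality in the last paragraph --- arguing that a generalized‑robustness witness $W'$ can be taken with $\|W'\|_\infty\le 2^m$. This is exactly where the equal‑dimension assumption is needed: otherwise the Schmidt rank of pure states on $\rho$'s space, and hence the bound $\dm{\psi}\le(\cdot)\,\sigma$ available for them, can exceed $2^m$, and the reduction to $R^g$ breaks --- consistent with the fact that the $G_\mS$‑based identity holds with no dimension restriction. A secondary point that requires care is that in the first identity $\rho$ may live on an arbitrary bipartite space, so one must check that the channel $\Lambda$ built from $W$ is separability‑ (resp.\ PPT‑) preserving with respect to that input bipartition; this holds because $\Lambda$ depends on the input only through the scalar $\tr[W\,\cdot\,]$, sending separable (resp.\ PPT) inputs to below‑threshold isotropic states.
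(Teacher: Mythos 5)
Your proof is correct and follows essentially the same route as the paper: reduce to the maximum-overlap formula $C^\varepsilon(\rho,m)=\max\{2(1-\varepsilon)/f_{\mO}(\rho,m)-1,1\}$ via the generalized-twirling result (Proposition~\ref{prop:singlet fraction}), then identify $f_{\mO}(\rho,m)$ with $G_{\mS}(\rho;2^m)$ and, in the equal-dimension case, with $2^{-m}(1+R_{\mS}^g(\rho))$. The only difference is that the paper simply cites these two overlap characterizations from prior work, whereas you prove them inline --- via the adjoint witness $W=\Lambda^\dagger(\Phi^{\otimes m})$, the measure-and-prepare channel built from a feasible $W$ (whose outputs on free inputs are below-threshold isotropic states, hence separable/PPT), and the operator-norm bound $W'\leq 2^m\id$ on robustness witnesses --- and these arguments are sound, including your correct observation that the equal-dimension hypothesis is needed only for the direction $G_{\mS}(\rho;2^m)\geq 2^{-m}(1+R_{\mS}^g(\rho))$.
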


\begin{proof}
It is known that, when $\mO$ is the set of separability-preserving operations, the maximum overlap $f_\mO(\rho,m)=\max_{\Lambda\in \mO}\tr(\Lambda(\rho)\Phi^{\otimes m})$ can be characterized by $f_\mO(\rho,m)=G_{\mS}(\rho;2^m)$ for a general state $\rho$ and $f_\mO(\rho,m)=(1+R_{\mS}^g(\rho))\,2^{-m}$ if $\rho$ and $\Phi^{\otimes m}$ act on the same space~\cite{Regula2020benchmarking}. 
The result then follows by applying Proposition~\ref{prop:singlet fraction}.

\end{proof}

\subsubsection{Isotropic states}

The exact characterization of virtual distillation overhead with respect to the singlet fraction allows us to derive analytical expressions for the class of isotropic states. 
Let $\rho_\alpha^I$ be an isotropic state defined as 
\bal
 \rho_\alpha^I(k)\coloneqq (1-\alpha)\, \Phi^{\otimes k} + \alpha\,\frac{I-\Phi^{\otimes k}}{2^{2k}-1}.
\eal

\begin{proposition}
Let $\mO$ be an arbitrary subset of separability-preserving or PPT-preserving operations that includes all separable operations. Then, for any $1 \leq m \leq k$,
	\begin{equation}\begin{aligned}
		C^\varepsilon(\rho_\alpha^I(k),m) = \begin{cases}
 \max\{ 2^{m+1}(1-\varepsilon) -1, \,1\} & \alpha\geq 1-2^{-k} \\
 \max\!\left\{ \frac{2(1-\varepsilon)}{1-\alpha c}-1 , \,1 \right\} & \alpha\leq 1-2^{-k},
 \end{cases}
  \end{aligned}\end{equation}
  where $c \coloneqq \left(2^k-2^{k-m}\right)/\left(2^k-1\right)$. 
In particular,
	\begin{equation}\begin{aligned}
		C^\varepsilon(\rho_\alpha^I(k),k) = \begin{cases}
 \max\{ 2^{k+1}(1-\varepsilon) -1,\, 1\} & \alpha\geq 1-2^{-k} \\
 \max\!\left\{ \frac{1+\alpha-2\varepsilon}{1-\alpha} , \, 1 \right\} & \alpha\leq 1-2^{-k}
 \end{cases}
 \end{aligned}\end{equation}
 and this latter result holds also for any set of free operations that contains local unitary operations assisted by shared classical randomness, in particular for LOCC.
\end{proposition}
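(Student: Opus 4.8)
The plan is to combine Proposition~\ref{prop:singlet fraction} with an explicit evaluation of the singlet fraction $f_\mO(\rho_\alpha^I(k),m)$ of the isotropic state. Since $\mO$ contains all separable operations, it contains local unitaries assisted by shared randomness, so Proposition~\ref{prop:singlet fraction} applies and gives $C^\varepsilon(\rho_\alpha^I(k),m)=\max\{2(1-\varepsilon)/f_\mO(\rho_\alpha^I(k),m)-1,\,1\}$. The whole problem therefore reduces to showing $f_\mO(\rho_\alpha^I(k),m)=\max\{1-\alpha c,\,2^{-m}\}$, which equals $1-\alpha c$ exactly when $\alpha\le 1-2^{-k}$ and $2^{-m}$ otherwise, using the elementary fact $1-\alpha c\ge 2^{-m}\iff\alpha\le 1-2^{-k}$; specializing to $m=k$ gives $c=1$ and $f_\mO=\max\{1-\alpha,2^{-k}\}$, from which the ``in particular'' formulas follow by the same substitution.

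For the upper bound I would use $\mO\subseteq$ separability-preserving operations together with the known identity $f_{\mathrm{SEPP}}(\rho,m)=G_{\mS}(\rho;2^m)$ recalled above (and its PPT analogue $f_{\mathrm{PPTP}}(\rho,m)=G_{\mS_{\mathrm{PPT}}}(\rho;2^m)$ for the PPT case). Because $\rho_\alpha^I(k)$ is invariant under the local twirl $U\otimes U^*$, the Hermitian variable in $G_{\mS}$ may be symmetrized to the form $W=x\,\Phi^{\otimes k}+y\,(I-\Phi^{\otimes k})$ with $0\le x,y\le 1$; using that the maximal overlap of a separable (or PPT) state with $\Phi^{\otimes k}$ is $2^{-k}$, the constraint $\Tr(W\sigma)\le 2^{-m}$ becomes $x\,2^{-k}+y(1-2^{-k})\le 2^{-m}$ when $x\ge y$ and $y\le 2^{-m}$ otherwise. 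Maximizing $\Tr(\rho_\alpha^I(k)\,W)=(1-\alpha)x+\alpha y$ over this two-dimensional polytope is routine: the relevant vertices are $(x,y)=(1,(2^{k-m}-1)/(2^k-1))$ with value $1-\alpha c$ and $(x,y)=(2^{-m},2^{-m})$ with value $2^{-m}$, yielding $f_\mO(\rho_\alpha^I(k),m)\le\max\{1-\alpha c,2^{-m}\}$.

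For the matching lower bound I would exhibit an explicit LOCC protocol, which I expect to be the genuinely substantive step. In the separable regime $\alpha\ge 1-2^{-k}$ it suffices to discard $\rho_\alpha^I(k)$ and re-prepare, via shared randomness, the separable state $\tfrac{1}{2^m}\sum_j\ketbra{jj}{jj}$, whose overlap with $\Phi^{\otimes m}$ is $2^{-m}$. In the entangled regime $\alpha\le 1-2^{-k}$, I would identify $\mathbb{C}^{2^k}\cong(\mathbb{C}^2)^{\otimes m}\otimes(\mathbb{C}^2)^{\otimes(k-m)}$ on each side and run: both parties measure their last $k-m$ qubits in the computational basis, obtaining $a$ and $b$ and exchanging them; if $a=b$ they keep the first $m$ qubits, and if $a\ne b$ they discard everything and re-prepare $\tfrac{1}{2^m}\sum_j\ketbra{jj}{jj}$. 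The key computation is to write $\rho_\alpha^I(k)$ in the generalized Bell basis $\{\ket{\Phi_{u,v}}\}$ (so it is Bell-diagonal with weight $1-\alpha$ on $\ket{\Phi_{0,0}}=\ket{\Phi^{\otimes k}}$ and uniform weight $\alpha/(2^{2k}-1)$ on the rest), to note that the measurement leaves the first $m$ qubits in $\ket{\Phi_{u_{1:m},v_{1:m}}}$ and deterministically reveals $u_{m+1:k}=a\oplus b$, and then to track the conditional states: $a=b$ retains the good branch together with a $\Phi^{\otimes m}$-aligned isotropic remainder, while $a\ne b$ produces the maximally mixed state on the first $m$ qubits, which is why replacing it by the overlap-$2^{-m}$ separable state strictly beats naive partial tracing. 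Summing $\Tr[\sigma_{(a,a)}\Phi^{\otimes m}]$ over the $2^{k-m}$ agreeing outcomes and $\Tr[\sigma_{(a,b)}]\,2^{-m}$ over the $2^{k-m}(2^{k-m}-1)$ mismatched ones, the factors $2^{2k}-1=(2^k-1)(2^k+1)$ telescope and the total comes out to $(1-\alpha)+\alpha(2^{k-m}-1)/(2^k-1)=1-\alpha c$. This protocol is LOCC, hence separable, hence in $\mO$, so $f_\mO(\rho_\alpha^I(k),m)\ge 1-\alpha c$, which together with the upper bound and the separable case pins down $f_\mO(\rho_\alpha^I(k),m)=\max\{1-\alpha c,2^{-m}\}$.

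Finally I would substitute $f_\mO=\max\{1-\alpha c,2^{-m}\}$ into $\max\{2(1-\varepsilon)/f_\mO-1,1\}$: in the separable regime $2(1-\varepsilon)/2^{-m}-1=2^{m+1}(1-\varepsilon)-1$; in the entangled regime $2(1-\varepsilon)/(1-\alpha c)-1$; and for $m=k$, where $c=1$, the entangled-regime value becomes $2(1-\varepsilon)/(1-\alpha)-1=(1+\alpha-2\varepsilon)/(1-\alpha)$. The last assertion---that this entangled-regime $m=k$ formula holds for any $\mO$ containing merely local unitaries and shared randomness---follows because there the lower bound $f_\mO\ge 1-\alpha$ is already attained by the identity operation (no measurement, communication, or re-preparation is needed), while the upper bound $f_{\mathrm{SEPP}}=G_{\mS}=1-\alpha$ is unchanged. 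The main obstacle is the LOCC construction for $m<k$ and the Bell-basis bookkeeping verifying that it attains $1-\alpha c$; the upper bound is only a short twirl-reduced linear program, and the remainder is a direct appeal to Proposition~\ref{prop:singlet fraction} and the $G_{\mS}$ characterization.
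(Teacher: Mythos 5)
Your proposal is correct and follows the same skeleton as the paper's proof: both reduce the claim to Proposition~\ref{prop:singlet fraction}, so that everything hinges on the value of the singlet fraction $f_\mO(\rho_\alpha^I(k),m)$. The difference is in how that value is obtained. The paper simply cites known results: the separability of $\rho_\alpha^I(k)$ for $\alpha \geq 1-2^{-k}$ together with the $2^{-m}$ overlap bound for PPT states gives the first branch, Theorem~18 of Ref.~\cite{regula_2019-2} gives $f_\mO = 1-\alpha c$ in the entangled regime, and Corollary~15 of Ref.~\cite{regula_2021-4} (non-increase of the overlap with the same-dimensional maximally entangled state under free operations) handles the $m=k$ case for weaker operation classes. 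You instead re-derive the singlet fraction from scratch: a twirl-symmetrized linear program built on the $G_{\mS}$ (resp.\ $G_{\mS_{\rm PPT}}$) characterization for the upper bound, and an explicit LOCC measure-compare-and-repair protocol for the lower bound. I checked the two computations that carry the weight: the LP vertex $(x,y)=\bigl(1,(2^{k-m}-1)/(2^k-1)\bigr)$ gives value $(1-\alpha)+\alpha(2^{k-m}-1)/(2^k-1)=1-\alpha c$, and your protocol's overlap, after the $(2^k-1)(2^k+1)$ cancellation, gives the same number, so the bounds do meet. Your route is longer but self-contained and makes the operational origin of $1-\alpha c$ transparent; the paper's is shorter at the cost of outsourcing the singlet fraction to the literature. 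Two small remarks: the measurement outcomes reveal the bit-flip (not phase) index of the residual Bell states, which is a pure convention issue and affects nothing; and your restriction of the ``local unitaries plus shared randomness'' claim to the entangled regime is actually the careful reading --- in the separable regime one genuinely needs the ability to re-prepare the classically correlated state, which LOCC has but bare local unitaries with shared randomness do not.
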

\begin{proof}
By Proposition~\ref{prop:singlet fraction}, we have that
\begin{equation}\begin{aligned}\label{eq:isotropic_cost1}
	C^\varepsilon(\rho_\alpha^I(k),m) = \max\left\{\frac{2(1-\varepsilon)}{f_\mO(\rho^I_\alpha(k),m)}-1,1\right\}. 
\end{aligned}\end{equation}

For $\alpha\geq 1-2^{-k}$, the state $\rho_\alpha^I(k)$ is separable~\cite{Horodecki1999reduction}. Then $\Lambda(\rho_\alpha^I(k))$ is separable (respectively, PPT) for any separability-preserving (PPT-preserving) map $\Lambda$; using the fact that the overlap of $\psi^{\otimes m}$ with any PPT state is at most $2^{-m}$ and it is achieved by a separable state, it follows that $f_\mO(\rho_\alpha^I(k),m) = 2^{-m}$. For $\alpha < 1-2^{-k}$, we use the result of~\cite[Theorem~18]{regula_2019-2}, which states that
\begin{equation}\begin{aligned}
	f_\mO(\rho_\alpha^I(k), m ) = 1 - \alpha \frac{2^k-2^{k-m}}{2^k-1}.
\end{aligned}\end{equation}
Plugging these values into Eq.~\eqref{eq:isotropic_cost1} concludes the first part of the proof.

For the second part, we notice that $\Tr\left[ \rho_\alpha^I(k) \psi^{\otimes k} \right] = 1-\alpha$. Since the value of this overlap cannot be increased by any free operation~\cite[Corollary~15]{regula_2021-4}, we get $f_\mO(\rho_\alpha^I(k),k) = 1-\alpha$ as desired.
\end{proof}

\subsubsection{Bound entanglement does not help virtual distillation}

The virtual distillation overhead is governed by the size of the coefficients in a linear combination of accessible states that form a decomposition of a target state.
Intuitively, a smaller overhead could be realized if one were given a larger set of accessible states. 
In the context of entanglement distillation, if we are given some entangled state, the set of accessible states obtained by applying LOCC operations to the given entangled state is strictly greater than the set of separable states. 
This leads to a natural question: ``Is every entangled state useful for virtual distillation?''
The following result answers this question in the negative. 
A similar restriction holds also beyond LOCC operations, applying to all PPT-preserving maps.

\begin{proposition}

Consider LOCC as the set of free operations. Then, for every bound-entangled state $\rho$ and every $\varepsilon \in [0,1)$,
\bal
 C^\varepsilon(\rho,m) = C_{\mS}^\varepsilon(m)=\max\left\{2^{m+1}(1-\varepsilon)-1,1\right\}
 \label{eq:bound entanglement overhead}
\eal
where $C_{\mS}^\varepsilon(m)$ is the virtual distillation overhead for separable states, which takes the same value for all separable states.
If $\rho$ is PPT, then \eqref{eq:bound entanglement overhead} holds for an arbitrary subset of PPT-preserving operations that can prepare all separable states.

\end{proposition}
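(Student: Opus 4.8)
The plan is to reduce the claim to evaluating the generalized singlet fraction $f_{\mO}(\rho,m)=\max_{\Lambda\in\mO}\tr[\Lambda(\rho)\Phi^{\otimes m}]$ and then invoke Proposition~\ref{prop:singlet fraction}. Since LOCC contains local unitaries assisted by shared randomness and lies inside the separability-preserving operations, Proposition~\ref{prop:singlet fraction} gives $C^\varepsilon(\rho,m)=\max\{2(1-\varepsilon)/f_{\mO}(\rho,m)-1,\,1\}$; the same formula holds for the full PPT-preserving class via Theorem~\ref{prop:cost twirling} with $\mS$ replaced by $\mS_{\rm PPT}$, the local twirl being PPT-preserving (it maps PPT states to separable isotropic states), and we will handle an arbitrary PPT-preserving subset that can prepare all separable states separately at the end. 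It therefore suffices to prove that $f_{\mO}(\rho,m)=2^{-m}$ whenever $\rho$ is bound entangled (LOCC) or PPT (PPT-preserving maps), which is exactly the value attained by any separable state.

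The bound $f_{\mO}(\rho,m)\ge 2^{-m}$ is immediate: discarding $\rho$ and preparing a fixed product state $\sigma_0$ with $\tr[\sigma_0\Phi^{\otimes m}]=2^{-m}$ is a free operation, and such $\sigma_0$ exists because $2^{-m}$ is the largest overlap that any separable state can have with the local-dimension-$2^m$ maximally entangled state $\Phi^{\otimes m}$. For the reverse bound in the PPT case, $\tau\coloneqq\Lambda(\rho)$ is PPT for every $\Lambda\in\mO$, so $\tr[\tau\Phi^{\otimes m}]=\tr[\tau^\Gamma(\Phi^{\otimes m})^\Gamma]\le\|(\Phi^{\otimes m})^\Gamma\|_\infty\,\tr\tau^\Gamma=2^{-m}$, using $(\Phi^{\otimes m})^\Gamma=2^{-m}(\Pi_S-\Pi_A)$ together with $\tau^\Gamma\ge 0$ and $\tr\tau^\Gamma=1$.

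The substantive step --- and the one I expect to be the main obstacle --- is $f_{\rm LOCC}(\rho,m)\le 2^{-m}$ for bound-entangled $\rho$. Suppose for contradiction $f_{\rm LOCC}(\rho,m)>2^{-m}$; then some LOCC channel $\Lambda$ with output $\mbC^{2^m}\otimes\mbC^{2^m}$ produces a state $\tau=\Lambda(\rho)$ with $\langle\Phi^{\otimes m}|\tau|\Phi^{\otimes m}\rangle>2^{-m}$. Since $\langle\Phi^{\otimes m}|(\tau_A\otimes I)|\Phi^{\otimes m}\rangle=2^{-m}$ for any state $\tau$, this forces $\tau_A\otimes I-\tau\not\ge 0$, i.e.\ $\tau$ violates the reduction criterion and is therefore distillable~\cite{Horodecki1999reduction}. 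Applying $\Lambda$ to each of many copies of $\rho$ and then running the distillation protocol for $\tau$ would distill $\rho$ at positive rate, contradicting the assumption that $\rho$ is bound entangled. Hence $f_{\rm LOCC}(\rho,m)=2^{-m}$, and combining with Proposition~\ref{prop:singlet fraction} yields $C^\varepsilon(\rho,m)=\max\{2^{m+1}(1-\varepsilon)-1,\,1\}$.

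Running the identical argument with $\rho$ replaced by an arbitrary separable state $\sigma$ gives $f_{\mO}(\sigma,m)=2^{-m}$ and hence $C^\varepsilon(\sigma,m)=\max\{2^{m+1}(1-\varepsilon)-1,\,1\}$ independently of $\sigma$; this common value is $C_{\mS}^\varepsilon(m)$, which establishes \eqref{eq:bound entanglement overhead}. For the version where $\mO$ is only assumed to be a PPT-preserving subset that can prepare all separable states (so Proposition~\ref{prop:singlet fraction} need not apply verbatim), the lower bound $C^\varepsilon(\rho,m)\ge\max\{2^{m+1}(1-\varepsilon)-1,1\}$ follows from Proposition~\ref{prop:cost lower bound fraction} together with $f_{\mO}(\rho,m)\le 2^{-m}$, while the matching upper bound comes from writing the $\varepsilon$-neighbour $\rho^I_\varepsilon(m)$ of $\Phi^{\otimes m}$ --- which satisfies $\frac12\|\rho^I_\varepsilon(m)-\Phi^{\otimes m}\|_1=\varepsilon$ --- as $\lambda_+\sigma_+-\lambda_-\sigma_-$ with $\sigma_\pm$ separable isotropic states and $\lambda_++\lambda_-=\max\{2^{m+1}(1-\varepsilon)-1,1\}$, all realizable by free preparations. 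The crux remains the reduction-criterion step: converting ``fully entangled fraction above $1/d$'' into distillability is exactly the content of~\cite{Horodecki1999reduction}, after which the rest is bookkeeping.
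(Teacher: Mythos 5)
Your proof is correct and follows essentially the same route as the paper's: both reduce the claim to showing that the generalized singlet fraction $f_{\mO}(\rho,m)$ equals $2^{-m}$ and then invoke the singlet-fraction characterization of the overhead, with the PPT case handled by the same partial-transpose bound $\tr[\tau\Phi^{\otimes m}]=\tr[\tau^\Gamma(\Phi^{\otimes m})^\Gamma]\le 2^{-m}$. The only differences are that you re-derive $f_{\rm LOCC}(\rho,m)\le 2^{-m}$ from the reduction criterion (the paper simply cites~\cite{Horodecki1999reduction} for this), and that you treat the final claim about an arbitrary PPT-preserving subset more carefully --- giving a lower bound via Proposition~\ref{prop:cost lower bound fraction} and an explicit separable-isotropic decomposition for the upper bound --- whereas the paper's one-line appeal to Proposition~\ref{prop:singlet fraction} is arguably too quick there, since such a subset need not contain the local twirl.
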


\begin{proof}

Recall that LOCC singlet fraction for states with zero distillable entanglement satisfies $f_\mO(\rho,m)\leq 2^{-m}$~\cite{Horodecki1999reduction}.
Noting that LOCC can prepare every separable state shows $f_\mO(\rho,m)=2^{-m}$.
Taking $\mO={\rm LOCC}$ in Proposition~\ref{prop:singlet fraction} then proves the first statement. 

To show the latter statement, note that for every PPT-preserving map $\Lambda$ and every PPT state $\rho$, 
\begin{equation}\begin{aligned}
    \Tr [ \Lambda(\rho) \Phi^{\otimes m} ] &\leq \max_{\sigma \in \mathrm{PPT}} \Tr [ \sigma \Phi^{\otimes m} ]\\
    &= \max_{\sigma \in \mathrm{PPT}} \Tr [ \sigma^\Gamma (\Phi^{\otimes m})^\Gamma ]\\
    &\leq \max_{\omega \in \mD} \Tr [ \omega\, (\Phi^\Gamma)^{\otimes m} ]\\
    &= 2^{-m},
\end{aligned}\end{equation}
where we used that the eigenvalues of $\Phi^\Gamma$ are $\pm \frac12$. Noting that the same overlap can be achieved by optimizing over separable states~\cite{shimony_1995} and invoking Proposition~\ref{prop:singlet fraction}, the result follows.
\end{proof}


\subsection{Coherence}\label{subsec:coherence}
We next consider the resource theory of coherence (superposition)~\cite{RevModPhys.89.041003}, where the set $\mI$ of free states consists of diagonal states with respect to a given preferred basis $\{\ket{i}\}_{i}$, i.e., $\mI\coloneqq\lset \sum_i p_i \dm{i}\sbar \sum_ip_i=1,\ p_i\geq 0,\forall i\rset$.

We first show that the virtual resource distillation overhead admits an analytical expression for an arbitrary single-qubit state. 

\begin{proposition}\label{prop:coherence_onequbit}
For an arbitrary qubit state $\rho$ and an arbitrary set $\mO$ of free operations that contains probabilistic applications of Pauli $X$ and $Z$, 
\bal
 C^\varepsilon(\rho,1) = \max\left\{\frac{1-2\varepsilon}{M_{l_1}(\rho)},1\right\}
\eal
where $M_{l_1}(\rho)=\sum_{i\neq j}|\braket{i|\rho|j}|$ is the $l_1$ norm of coherence.
\end{proposition}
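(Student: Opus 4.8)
The plan is to prove the matching upper and lower bounds for $C^\varepsilon(\rho,1)$ working in the qubit Bloch picture, with the target being the maximally coherent qubit state $\psi=\dm{+}$, for which $M_{l_1}(\psi)=1$. I start from the identity $M_{l_1}(\rho)=2|\rho_{01}|$ and, using that diagonal unitaries are incoherent and hence free, reduce to the case $\rho_{01}\ge0$; writing $\rho=\tfrac12(I+\vec r\cdot\vec\sigma)$ this gives $r_y=0$, $M_{l_1}(\rho)=r_x$, and $\psi$ has Bloch vector $(1,0,0)$.

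For achievability, $C^\varepsilon(\rho,1)\le\max\{(1-2\varepsilon)/M_{l_1}(\rho),\,1\}$, I would write down explicit free maps built from probabilistic Paulis. Put $\mathcal N_X(\cdot)\coloneqq\tfrac12(\cdot+X\cdot X)$, so $\mathcal N_X(\rho)=\tfrac12(I+r_xX)$, and $\Lambda_-(\cdot)\coloneqq\tfrac12\bigl(Z(\cdot)Z+ZX(\cdot)XZ\bigr)$, so $\Lambda_-(\rho)=\tfrac12(I-r_xX)$; both are mixtures of conjugations by Paulis generated by $X$ and $Z$, hence lie in $\mathcal O$. If $r_x=M_{l_1}(\rho)\ge1-2\varepsilon$, take $\lambda_+=1$, $\lambda_-=0$, $\Lambda_+=\mathcal N_X$ and check $\tfrac12\|\mathcal N_X(\rho)-\psi\|_1=(1-r_x)/2\le\varepsilon$, giving $C^\varepsilon(\rho,1)\le1$. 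Otherwise set $\gamma\coloneqq(1-2\varepsilon)/r_x>1$, $\lambda_\pm\coloneqq(\gamma\pm1)/2$, $\Lambda_+=\mathcal N_X$, $\Lambda_-$ as above; then $\lambda_+\Lambda_+(\rho)-\lambda_-\Lambda_-(\rho)=\tfrac12 I+\tfrac{1-2\varepsilon}{2}X$, which is at trace distance exactly $\varepsilon$ from $\psi$, with overhead $\lambda_++\lambda_-=\gamma=(1-2\varepsilon)/M_{l_1}(\rho)$.

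For the converse I would use the dual program~\eqref{eq:overhead dual}. Assuming $M_{l_1}(\rho)>0$ (the case $M_{l_1}(\rho)=0$ is handled by sending the $X$-coefficient below to infinity, which forces $C^\varepsilon=\infty$), the Hermitian operator $W\coloneqq\tfrac12 I+\tfrac1{2M_{l_1}(\rho)}X$ is a feasible dual witness: for every $\Lambda\in\mathcal O$, $\Tr[W\Lambda(\rho)]=\tfrac12+\tfrac1{2M_{l_1}(\rho)}\Tr[X\Lambda(\rho)]\in[0,1]$, because $|\Tr[X\Lambda(\rho)]|\le M_{l_1}(\Lambda(\rho))\le M_{l_1}(\rho)$ — the second step being that $\ell_1$ coherence does not increase under free operations. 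Since $W$ is feasible, \eqref{eq:overhead dual} gives $C^\varepsilon(\rho,1)\ge\inf_{\tilde\tau}\bigl(2\Tr[W\tilde\tau]-1\bigr)=M_{l_1}(\rho)^{-1}\inf_{\tilde\tau}\Tr[X\tilde\tau]$, the infimum over normalized $\tilde\tau$ with $\tfrac12\|\tilde\tau-\psi\|_1\le\varepsilon$; since $\Tr[X\tilde\tau]\ge\Tr[X\psi]-\|X\|_\infty\|\tilde\tau-\psi\|_1\ge1-2\varepsilon$, this yields $C^\varepsilon(\rho,1)\ge(1-2\varepsilon)/M_{l_1}(\rho)$. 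Combined with the universal bound $C^\varepsilon(\rho,1)\ge\lambda_+-\lambda_-=1$ and with achievability, this proves the claim.

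The step I expect to be the main obstacle is justifying that $\ell_1$ coherence is nonincreasing under all of $\mathcal O$: since $\mathcal O$ can be as large as the full class of maximally incoherent operations — under which $\ell_1$ coherence is known to fail monotonicity in dimension $\ge3$ — the argument must exploit the qubit structure. I would do this by splitting the input into diagonal and off-diagonal parts, using that a maximally incoherent channel maps the diagonal part to a diagonal (zero-coherence) state, writing the traceless off-diagonal part as $|\rho_{01}|(\Pi_+-\Pi_-)$ for the two rank-one eigenprojectors of the associated off-diagonal Pauli, and concluding by the triangle inequality together with $M_{l_1}(\sigma)\le1$ for every qubit state $\sigma$. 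A minor technical point is the initial reduction to $\rho_{01}\ge0$, which relies only on diagonal unitaries being free; one could alternatively present the converse via the seminorm bound $M_{l_1}(C)\le\|C\|_1$ on Hermitian qubit operators in place of the dual program, obtaining the same estimate.
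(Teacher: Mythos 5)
Your proposal is correct, and the two halves compare differently with the paper's proof. The achievability part is essentially the paper's own construction: your $\mathcal N_X$ and $\Lambda_-$ are exactly the maps $\mT$ and $\mc Z\circ\mT$ used in the paper, and your choice $\lambda_\pm=(\gamma\pm1)/2$ reproduces the paper's $s(\varepsilon)$ decomposition. The converse, however, takes a genuinely different route. The paper obtains the lower bound by invoking Theorem~\ref{prop:coherence single qubit analytical}, i.e., the full SDP-based analytical solution of the overhead under MIO/DIO (which lower-bounds the overhead under any smaller $\mO$). You instead exhibit the explicit dual witness $W=\tfrac12 I+\tfrac{1}{2M_{l_1}(\rho)}X$ in the dual program~\eqref{eq:overhead dual} and reduce feasibility to the statement that $M_{l_1}$ is nonincreasing under qubit-to-qubit MIO maps; your proof of that monotonicity (split off the diagonal part, write the coherent part as $|\rho_{01}|(\Pi_+-\Pi_-)$ for maximally coherent projectors, and use $M_{l_1}(\sigma)\le 1$ for every qubit state) is sound, and you correctly flag that this is exactly where the known failure of $\ell_1$-monotonicity under MIO in dimension $\ge 3$ must be circumvented. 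What your route buys is self-containedness --- the $m=1$ result no longer depends on solving the general single-qubit SDP --- at the price of a dimension-specific monotonicity lemma; the paper's route gives all $m$ at once. One caveat you share with the paper: the reduction to real nonnegative $\rho_{01}$ is not licensed by the stated hypothesis that $\mO$ merely contains probabilistic Paulis $X$ and $Z$ (conjugation by $Z$ only flips the sign of a \emph{real} off-diagonal, and your $\mathcal N_X$ construction achieves only $(1-2\varepsilon)/|r_x|$ rather than $(1-2\varepsilon)/\sqrt{r_x^2+r_y^2}$ when $\rho_{01}$ has an imaginary part); the paper's proof silently assumes a real symmetric density matrix as well, so this is an inherited imprecision rather than a new gap, and it is repaired by additionally assuming, as you note, that diagonal (incoherent) unitaries are free.
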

\begin{proof}
Let $\rho=\begin{pmatrix} \alpha & \beta \\ \beta & 1-\alpha \end{pmatrix}$. 
We take $\beta\geq 0$ because any state can be brought to this form by the Pauli $Z$ operation, and $C^\varepsilon(\rho, 1)$ is invariant under such an operation. 
To see $C^\varepsilon(\rho,1)\leq \max\left\{\frac{1-2\varepsilon}{M_{l_1}(\rho)},1\right\}$, 
let $\mT(\cdot)\coloneqq \frac{1}{2}\cdot + \frac{1}{2}X\cdot X$ and $\mc Z\circ\mT(\cdot)\coloneqq \frac{1}{2}Z\cdot Z + \frac{1}{2}ZX\cdot XZ$.
Also, let $s(\varepsilon)=\frac{1-2\varepsilon}{4\beta}+\frac{1}{2}$. 
Then, consider the unit trace operator $\eta$ defined as
\bal
 \eta\coloneqq s(\varepsilon) \mT(\rho) - (s(\varepsilon)-1) \mc Z\circ\mT(\rho).
\eal
A direct computation reveals that $\eta=(1-\varepsilon)\dm{+}+\varepsilon\dm{-}$ and 
$\frac{1}{2}\|\eta-\dm{+}\|_1\leq \varepsilon$. 
When $\frac{1-2\varepsilon}{M_{l_1}(\rho)}\leq 1$, $s(\varepsilon)-1\leq 0$ and $\eta$ is a convex combination of $\mT(\rho)$ and $\mZ\circ\mT(\rho)$, giving $C^\varepsilon(\rho,1)\leq 1$. When $\frac{1-2\varepsilon}{M_{l_1}(\rho)}\geq 1$ 
, we obtain 
\bal
C^\varepsilon(\rho,1)&\leq 2s(\varepsilon)-1=\frac{1-2\varepsilon}{2\beta}=\frac{1-2\varepsilon}{M_{l_1}(\rho)}.
\eal

On the other hand, $C^\varepsilon(\rho,1)\geq \max\left\{\frac{1-2\varepsilon}{M_{l_1}(\rho)},1\right\}$ can be obtained from Theorem~\ref{prop:coherence single qubit analytical} below.
\end{proof}

We now characterize the virtual resource distillation overhead for general states via semidefinite programming.
Recall that a channel $\mE$ is called a maximally-incoherent operation (MIO) if it maps every incoherent state to an incoherent state, i.e., $\mE(\sigma)\in\mI\,\forall \sigma \in \mI$, and $\mE$ is called a dephasing-covariant operation (DIO) if it commutes with the completely dephasing map $\Delta(\cdot)=\sum_i \dm{i}\cdot\dm{i}$, i.e., $\mE\circ\Delta = \Delta\circ\mE$. 

\begin{theorem}
Let $\mO$ be the class of maximally incoherent operations (MIO) or dephasing-covariant incoherent operations (DIO), and let $\Delta(\cdot) = \sum_i \ket{i}\!\bra{i} \cdot \ket{i}\!\bra{i}$ be the diagonal map. Then, for any state $\rho$, the virtual distillation overhead is given by the semidefinite program
\begin{equation}\begin{aligned}\label{eq:coh_cost_exact}
	C^\varepsilon(\rho,m) = \min \big\{ \mu_+ + \mu_- \;\big|\;& 0 \leq Q_+ \leq \mu_+ \id,\; 0 \leq Q_- \leq \mu_- \id,\\
	& \Delta(Q_+) = \frac{\mu_+}{2^m} \id,\; \Delta(Q_-) = \frac{\mu_-}{2^m} \id,\\
	&\mu_+ - \mu_- =1,\\
	&\tr \rho (Q_+ - Q_-) \geq 1 -\varepsilon \big\}.
\end{aligned}\end{equation}
\end{theorem}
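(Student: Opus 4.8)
The plan is to recognize the statement as a specialization of Theorem~\ref{prop:general_bounds_cost_parallel} to coherence theory, and then to upgrade the resulting characterization from MIO to DIO by inspecting the explicit channel construction used in that proof. First I would collect the resource-theoretic data for the target $\psi^{\otimes m}$, where $\psi=\dm{+}$ is the maximally coherent qubit state: every incoherent state $\sigma=\sum_i p_i\dm{i}\in\mI$ satisfies $\braket{+^{\otimes m}|\sigma|+^{\otimes m}}=\sum_i p_i 2^{-m}=2^{-m}$, so the overlap $\braket{\psi^{\otimes m}|\sigma|\psi^{\otimes m}}$ is constant over $\mI$; moreover the maximally coherent state in dimension $2^m$ has generalized robustness $R^g_{\mI}(\psi^{\otimes m})=2^m-1$ and maximal incoherent overlap $F_{\mI}(\psi^{\otimes m})=2^{-m}$, hence $F_{\mI}(\psi^{\otimes m})^{-1}=2^m=1+R^g_{\mI}(\psi^{\otimes m})$. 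Both hypotheses of the generalized-robustness part of Theorem~\ref{prop:general_bounds_cost_parallel} therefore hold with $\mO$ the class of resource non-generating operations, which for coherence is exactly MIO, and the upper and lower bounds coincide, giving $C^\varepsilon(\rho,m)=\zeta^g_\varepsilon(\rho,2^m)$.

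Next I would rewrite $\zeta^g_\varepsilon(\rho,2^m)$ in the stated form. The equality constraint $\tr Q_\pm\sigma=\mu_\pm/2^m$ for all $\sigma\in\mI$ is, by taking $\sigma=\dm{i}$ and using linearity, equivalent to $\bra{i}Q_\pm\ket{i}=\mu_\pm/2^m$ for every $i$, i.e.\ $\Delta(Q_\pm)=\frac{\mu_\pm}{2^m}\id$; substituting this into the definition of $\zeta^g_\varepsilon$ reproduces the semidefinite program in the theorem, settling the MIO case. For DIO, the lower bound $C^\varepsilon(\rho,m)\geq\zeta^g_\varepsilon(\rho,2^m)$ is immediate since DIO is contained in MIO, so restricting the free operations cannot decrease the overhead; equivalently, the lower-bound half of the proof of Theorem~\ref{prop:general_bounds_cost_parallel}, which only uses that the $\Lambda_\pm$ are resource non-generating, applies verbatim. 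It remains to show the primal construction can be performed inside DIO: I would choose in the channel construction of that proof the completion state $\omega_\psi=(\id-\psi^{\otimes m})/(2^m-1)$, which satisfies $\frac{\psi^{\otimes m}+(2^m-1)\omega_\psi}{2^m}=\id/2^m\in\mI$ as required, and then verify using the feasibility relations $\Delta(Q_\pm)=\frac{\mu_\pm}{2^m}\id$, $\Delta(\psi^{\otimes m})=\id/2^m$, and $\Delta(\omega_\psi)=\id/2^m$ that both $\Lambda_\pm\circ\Delta$ and $\Delta\circ\Lambda_\pm$ equal the map $\omega\mapsto\tr(\omega)\,\id/2^m$, so that $\Lambda_\pm$ are dephasing-covariant. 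Since these channels still realize $\mu_+\Lambda_+(\rho)-\mu_-\Lambda_-(\rho)=\psi^{\otimes m}$ up to error $\varepsilon$, we get $C^\varepsilon(\rho,m)\leq\zeta^g_\varepsilon(\rho,2^m)$ for DIO as well, hence equality.

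The step I expect to be the main obstacle is precisely this DIO upper bound: one must confirm that the generic (a priori merely MIO) channels produced by Theorem~\ref{prop:general_bounds_cost_parallel} can be chosen dephasing-covariant, which relies on the two fortunate structural facts that the natural completion state $\omega_\psi$ can be taken $\Delta$-invariant and that the primal constraint $\Delta(Q_\pm)=\frac{\mu_\pm}{2^m}\id$ is exactly what forces the branching probabilities $\tr(Q_\pm\omega/\mu_\pm)$ to be insensitive to the coherences of $\omega$. Everything else is routine bookkeeping — verifying the robustness and fidelity values of $\psi^{\otimes m}$, and translating the incoherent-overlap equality constraint into the $\Delta$-form appearing in the SDP.
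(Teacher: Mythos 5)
Your proposal is correct and follows essentially the same route as the paper: the lower bound is the specialization of Theorem~\ref{prop:general_bounds_cost_parallel} to coherence (using the constant overlap $\braket{+^{\otimes m}|\sigma|+^{\otimes m}}=2^{-m}$ and $F_{\mI}(\psi^{\otimes m})^{-1}=1+R^g_{\mI}(\psi^{\otimes m})=2^m$), and the upper bound uses exactly the paper's channel construction with completion state $(\id-\psi^{\otimes m})/(2^m-1)$, whose dephasing covariance you verify just as the paper does. The handling of DIO versus MIO (lower bound by inclusion ${\rm DIO}\subseteq{\rm MIO}$, upper bound by showing the explicit channels are DIO) also matches the paper's argument.
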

\begin{proof}
The lower bound on $C^\varepsilon(\rho,m)$ is essentially an application of Theorem~\ref{prop:general_bounds_cost_parallel}, but let us consider it explicitly for completeness. 
Consider then any feasible MIO protocol such that $\Lambda_\pm \in \mO$ and $\frac12 \| \lambda_+ \Lambda_+ (\rho) - \lambda_- \Lambda_-(\rho) - \ket{+}\!\bra{+}^{\otimes m}\|_1 \leq \varepsilon$. Define $Q_\pm = \lambda_\pm \Lambda_\pm^\dagger(\ket{+}\!\bra{+}^{\otimes m})$ and $\mu_\pm = \lambda_\pm$. 
For each $i$, we then have that
\begin{equation}\begin{aligned}
	\braket{i | Q_\pm | i} &= \lambda_\pm \tr \left[ \Lambda_{\pm}(\ket{i}\!\bra{i}) \ket{+}\!\bra{+}^{\otimes m} \right]\\
	&= \lambda_\pm \frac{1}{2^m}
\end{aligned}\end{equation}
where the last line follows since $\ket{+}\!\bra{+}^{\otimes m}$ has a constant overlap $2^{-m}$ with any incoherent state. This means in particular that $\Delta(Q_\pm) = \frac{\mu_\pm}{2^m} \id$. Verifying that other conditions are also satisfied due to the fact that each $\Lambda_\pm$ is a CPTP map, we have that $Q_\pm$ and $\mu_\pm$ are feasible solutions to \eqref{eq:coh_cost_exact}, leading to $C^\varepsilon(\rho,m) \geq \mu_+ - \mu_-$ for all MIO maps.

Conversely, let $Q_\pm$ be feasible solutions to \eqref{eq:coh_cost_exact}. Define the quantum channels
\begin{equation}\begin{aligned}
	\Lambda_\pm (\omega) = \tr \left(\frac{Q_\pm}{\mu_\pm} \omega\right) \ket{+}\!\bra{+}^{\otimes m} + \tr \left(\left[\id - \frac{Q_\pm}{\mu_\pm}\right] \omega\right) \frac{\id - \ket{+}\!\bra{+}^{\otimes m}}{2^m - 1}.
\end{aligned}\end{equation}
For any state $\omega$, we have that
\begin{equation}\begin{aligned}
	\Lambda_\pm \circ \Delta (\omega) &= \tr \left( \Delta\left[\frac{Q_\pm}{\mu_\pm}\right] \omega\right) \ket{+}\!\bra{+}^{\otimes m} \\ 
	&\quad+ \tr \left(\left[\id - \Delta\left(\frac{Q_\pm}{\mu_\pm}\right)\right] \omega\right) \frac{\id - \ket{+}\!\bra{+}^{\otimes m}}{2^m - 1}\\
	&= \frac{1}{2^m} \ket{+}\!\bra{+}^{\otimes m} + \frac{1}{2^m} \left(\id - \ket{+}\!\bra{+}^{\otimes m} \right)\\
	&= \frac{1}{2^m} I\\
	&= \Delta \circ \Lambda_\pm (\omega),
\end{aligned}\end{equation}
and so the constructed maps are both DIO. As the maps realize the virtual distillation of $\ket{+}\!\bra{+}^{\otimes m}$ from $\rho$ up to error $\varepsilon$, we have that $C^\varepsilon(\rho,m) \leq \mu_+ + \mu_-$ under DIO. Since ${\rm DIO} \subseteq {\rm MIO}$, the cost under MIO lower bounds the cost under DIO, and the result follows.
\end{proof}

For the case of a single qubit, the above program can be analytically solved for every number $m$ of target states.

\begin{theorem}\label{prop:coherence single qubit analytical}
Let $\mO$ be MIO or DIO. Then, for every single-qubit state $\rho$, every integer $m\geq 1$, and all $\varepsilon\in[0,1]$, the virtual distillation overhead is given by
\begin{equation}\begin{aligned}
	C^\varepsilon(\rho,m) = \max\left\{\frac{2^m(1-\varepsilon)-1}{M_{l_1}(\rho)},1\right\}
\end{aligned}\end{equation}
where $M_{l_1}(\rho)$ is the $l_1$-norm of coherence.
\end{theorem}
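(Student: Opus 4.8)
The plan is to solve the semidefinite program \eqref{eq:coh_cost_exact} explicitly in the single-qubit case. Since $\rho$ acts on $\mathbb{C}^2$, the feasible operators $Q_\pm$ in \eqref{eq:coh_cost_exact} are $2\times 2$ Hermitian, and the constraint $\Delta(Q_\pm)=\frac{\mu_\pm}{2^m}\id$ forces both diagonal entries of $Q_\pm$ to equal $\frac{\mu_\pm}{2^m}$. I would therefore write $Q_\pm = \frac{\mu_\pm}{2^m}\id + r_\pm$ with $r_\pm$ purely off-diagonal, $r_\pm = \left(\begin{smallmatrix} 0 & z_\pm \\ \bar z_\pm & 0\end{smallmatrix}\right)$. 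The eigenvalues of $Q_\pm$ are $\frac{\mu_\pm}{2^m}\pm|z_\pm|$, so $0\le Q_\pm\le\mu_\pm\id$ becomes the pair of conditions $|z_\pm|\le \mu_\pm/2^m$ and $|z_\pm|\le\mu_\pm(1-2^{-m})$; because $m\ge1$ we have $2^{-m}\le 1-2^{-m}$, so only $|z_\pm|\le\mu_\pm/2^m$ survives.

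Next I would compute the fidelity-type constraint. Parametrizing $\rho=\left(\begin{smallmatrix}\alpha & \beta \\ \bar\beta & 1-\alpha\end{smallmatrix}\right)$, so that $M_{l_1}(\rho)=2|\beta|$, one finds $\tr\rho(Q_+-Q_-)=\frac{1}{2^m}+2\operatorname{Re}(\beta\bar z_+)-2\operatorname{Re}(\beta\bar z_-)$. Optimizing over the phases of $z_\pm$ — aligning $z_+$ with $\beta$ and $z_-$ anti-aligned — and saturating $|z_\pm|=\mu_\pm/2^m$ shows that the largest attainable value of $\tr\rho(Q_+-Q_-)$ is $\frac{1}{2^m}\big(1+M_{l_1}(\rho)(\mu_++\mu_-)\big)$, and that this value is realized by a \emph{single} choice of $Q_\pm$ (the phase choices for $Q_+$ and $Q_-$ do not conflict). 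Hence the constraint $\tr\rho(Q_+-Q_-)\ge 1-\varepsilon$ is feasible precisely when $\mu_++\mu_-\ge \big(2^m(1-\varepsilon)-1\big)/M_{l_1}(\rho)$.

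Finally I would minimize the objective $\mu_++\mu_-$ subject to this inequality together with $\mu_+-\mu_-=1$ and $\mu_\pm\ge0$; the latter two alone force $\mu_++\mu_-\ge1$, so the minimum is $\max\{(2^m(1-\varepsilon)-1)/M_{l_1}(\rho),\,1\}$, which is the claimed formula. (As a consistency check, at $m=1$ this reproduces the bound $C^\varepsilon(\rho,1)\ge\max\{(1-2\varepsilon)/M_{l_1}(\rho),1\}$ invoked in the proof of Proposition~\ref{prop:coherence_onequbit}.) The computation is routine; the points needing care are the identification of the active positivity constraint (this is exactly where $m\ge1$ enters), the simultaneous feasibility of the phase-optimal $Q_+$ and $Q_-$, and the degenerate incoherent case $M_{l_1}(\rho)=0$, where the program forces $z_\pm=0$ and hence $\tr\rho(Q_+-Q_-)=2^{-m}$, matching the formula under the convention that a nonpositive numerator over zero is read as $\le 1$ (so the $\max$ equals $1$ when $2^m(1-\varepsilon)\le1$ and is $+\infty$ otherwise).
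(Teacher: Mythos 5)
Your proof is correct, and it follows the same overall strategy as the paper --- both arguments amount to solving the semidefinite program \eqref{eq:coh_cost_exact} explicitly in the single-qubit case --- but the reduction is carried out differently. The paper first rotates $\rho$ onto the $XZ$ plane by an incoherent unitary, then symmetrizes the optimal $Q_\pm$ with a projection map $\Lambda_{XZ}$ to argue that one may restrict to real combinations of $\dm{+}$ and $\dm{-}$, and finally runs a sign/case analysis on the resulting linear program. You instead observe that the constraint $\Delta(Q_\pm)=\frac{\mu_\pm}{2^m}\id$ already pins down the diagonal of $Q_\pm$ completely, leaving only a complex off-diagonal entry $z_\pm$; the eigenvalue computation then shows that for $m\ge 1$ the only surviving positivity constraint is $|z_\pm|\le\mu_\pm/2^m$, and the phase optimization (which is legitimately independent for $z_+$ and $z_-$, as you note) collapses the problem to a one-line scalar minimization over $\mu_\pm$. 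This avoids both the WLOG rotation and the symmetrization lemma, and it transparently isolates where the hypothesis $m\ge 1$ enters (namely, in deciding which of the two eigenvalue inequalities is active). Your treatment of the degenerate incoherent case $M_{l_1}(\rho)=0$ is also a worthwhile addition, since the stated formula requires an interpretation there; when $2^m(1-\varepsilon)>1$ and $M_{l_1}(\rho)=0$ the program is indeed infeasible and $C^\varepsilon(\rho,m)=+\infty$, consistent with your reading of the expression.
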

\begin{proof}

Without loss of generality, we assume that $\rho$ is on the $XZ$ plane in the Bloch coordinate, i.e., $\tr(\rho Y)=0$, and that $\braket{+|\rho|+}\geq \braket{-|\rho|-}$, as one can always bring any state onto the $XZ$ plane with $\braket{+|\rho|+}\geq \braket{-|\rho|-}$ by applying an incoherent unitary, and $C^\varepsilon(\rho,m)$ is invariant under any incoherent unitary. 

To constrain the form of $Q_{\pm}$, we consider a map $\Lambda_{XZ}$ defined by 
\bal
\Lambda_{XZ}(\cdot)\coloneqq \frac{1}{2} \tr(\cdot)\,I + \frac{1}{2} \tr(X\,\cdot)\,X + \frac{1}{2} \tr(Z\,\cdot)\,Z.
\eal
When applied to a quantum state, $\Lambda_{XZ}$ projects it to the $XZ$ plane. 
It is straightforward to check that it is unital, i.e., $\Lambda_{XZ}(I)=I$, and self dual, i.e., $\Lambda_{XZ}^\dagger = \Lambda_{XZ}$.
$\Lambda_{XZ}$ is also positive. This is because for an arbitrary single-qubit state $\sigma$, $\Lambda_{XZ}(\sigma)$ is also a valid state and hence positive. 
Since every positive operator acting on the single-qubit system is proportional to a quantum state, their positivity remains under $\Lambda_{XZ}$. 

By assumption, we have $\Lambda_{XZ}(\rho)=\rho$.
Let $Q_\pm^\star$ and $\mu_{\pm}^\star$ be the operators and real numbers that give the optimal solution of \eqref{eq:coh_cost_exact}.
Then, one can see that $\Lambda_{XZ}(Q_\pm^\star)$ also give the optimal solution $\mu_+^\star + \mu_-^\star$ as follows. 
First, $0\leq \Lambda_{XZ}(Q_\pm^\star)\leq \mu_+^\star I$ follows from that $\Lambda_{XZ}$ is positive and unital. 
$\Delta\circ \Lambda_{XZ}(Q_\pm^\star)=\frac{\mu_\pm}{2^m}I$ follows from the fact that $\Delta\circ \Lambda_{XZ} = \Lambda_{XZ} \circ \Delta$ and $\Lambda_{XZ}(I)=I$. 
Finally, 
\bal
 \tr\rho[\Lambda_{XZ}(Q_+^\star)-\Lambda_{XZ}(Q_-^\star)] &= \tr\Lambda_{XZ}^\dagger(\rho)(Q_+^\star-Q_-^\star)\\
 &= \tr\Lambda_{XZ}(\rho)(Q_+^\star-Q_-^\star)\\
 &=\tr\rho(Q_+^\star-Q_-^\star)\\
 &\geq 1-\varepsilon
\eal
where in the second equality we used that $\Lambda_{XZ}$ is self dual, and the third equality we used $\Lambda_{XZ}(\rho)=\rho$ by assumption. 
Thus, it suffices to restrict our attention to operators $Q_\pm$ such that $Q_\pm = c_\pm^I I + c_\pm^X X + c_\pm^Z Z$ where $c_\pm^P$ are some real numbers.  
In addition, the condition $\Delta(Q_\pm)\propto I$ further imposes $c_\pm^Z=0$. 
This allows us to write $Q_\pm$ of the form 
\bal
 Q_\pm = c_\pm^+\dm{+} + c_\pm^-\dm{-},\quad c_\pm^+,c_\pm^-\in\mathbb{R}.
\eal

In terms of this expression, \eqref{eq:coh_cost_exact} can be rewritten as 

\begin{equation}\begin{aligned}
	C^\varepsilon(\rho,m) = \min \big\{ \mu_+ + \mu_- \;\big|\;& 0 \leq c_\pm^+,c_\pm^- \leq \mu_\pm,\\
	&c_\pm^+ + c_\pm^- = \frac{\mu_\pm}{2^{m-1}},\\
	&\mu_+ - \mu_- =1,\\
	&(c_+^+-c_-^+)\braket{+|\rho|+}\\
	&+(c_+^--c_-^-)\braket{-|\rho|-}\geq 1 -\varepsilon \big\}.
\end{aligned}\end{equation}

Since $c_\pm^++c_\pm^-=\frac{\mu_\pm}{2^{m-1}}$ ensures $c_\pm^+, c_\pm^-\leq \mu_\pm$, we can further simplify it to

\begin{equation}\begin{aligned}
	C^\varepsilon(\rho,m) &= \min \big\{ 2^{m-1}(c_+^+ + c_+^-+ c_-^++c_-^-)\;\big|\; c_\pm^+, c_\pm^- \geq 0,\\
	&c_+^+ - c_-^+ + c_+^-- c_-^-=1/2^{m-1},\\
	&(c_+^+-c_-^+)\braket{+|\rho|+}+(c_+^--c_-^-)\braket{-|\rho|-}\geq 1 -\varepsilon \big\}.
\end{aligned}\end{equation}

Since the second and third constraints only involve $c_+^+-c_-^+$ and $c_+^--c_-^-$, the minimum occurs when $c_+^+c_-^+=c_+^-c_-^-=0$, because if $c_+^+, c_-^+\neq 0$ or $c_+^-, c_-^-\neq 0$, one can always make the objective function smaller while keeping the values of $c_+^+-c_-^+$ and $c_+^--c_-^-$.
Therefore, letting $\alpha\coloneqq c_+^+-c_-^+$ and $\beta\coloneqq c_+^--c_-^-$, we get

\begin{equation}\begin{aligned}
	C^\varepsilon(\rho,m) &= \min \big\{ 2^{m-1}(|\alpha|+|\beta|)\;\big|\; \alpha,\beta\in\mbR,\\
	&\alpha + \beta =1/2^{m-1},\\
	&\alpha\braket{+|\rho|+}+\beta\braket{-|\rho|-}\geq 1 -\varepsilon \big\}\\
 &= \min \big\{ 2^{m-1}(|\alpha|+|\beta|)\;\big|\; \alpha,\beta\in\mbR,\\
	&\alpha + \beta =1/2^{m-1},\\
	&(1+M_{l_1}(\rho))/2^{m}-\beta M_{l_1}(\rho)\geq 1 -\varepsilon \big\}
\end{aligned}\end{equation}
where in the second equality, we rewrote the left-hand side of the third constraint as $(\alpha+\beta)\braket{+|\rho|+}+\beta(-\braket{+|\rho|+}+\braket{-|\rho|-})$ and used the second constraint as well as the definition of $l_1$-norm of coherence $M_{l_1}(\rho)=\braket{+|\rho|+}-\braket{-|\rho|-}$ and the normalization $\braket{+|\rho|+}+\braket{-|\rho|-}=1$.

Suppose that $\alpha<0$. Then, the second constraint enforces $\beta=1/2^{m-1}-\alpha>0$. 
However, this is ensured to be suboptimal because by decreasing $\beta>0$ so that $\alpha\to 0$, the objective function can be monotonically reduced while the third constraint is ensured to be satisfied, as the left-hand side of the third constraint is monotonically decreasing with $\beta>0$.

Thus, we can set $\alpha\geq 0$ and write
\begin{equation}\begin{aligned}
	C^\varepsilon(\rho,m) &= \min \big\{ 1+2^{m-1}(-\beta+|\beta|)\;\big|\; \beta\leq 1/2^{m-1},\\
	&(1+M_{l_1}(\rho))/2^{m}-\beta M_{l_1}(\rho)\geq 1 -\varepsilon \big\}.
\end{aligned}\end{equation}
The form of the objective function implies that the minimum occurs at the largest $\beta$ that satisfies both constraints and particularly takes 1 if $\beta$ can become nonnegative.

When $1-\varepsilon<(1-M_{l_1}(\rho))/2^m$, we have
\bal
(1+M_{l_1}(\rho))/2^{m}-\beta M_{l_1}(\rho)\geq (1-M_{l_1}(\rho))/2^m
\eal
due to the first constraint. 
Therefore, the second constraint is always satisfied in this case, and the minimization occurs for any nonnegative $\beta$, which gives $C^\varepsilon(\rho,m)=1$.

Suppose now that $1-\varepsilon\geq (1-M_{l_1}(\rho))/2^m$.
In this case, the minimum occurs when the second constraint becomes equality, which gives 
\bal
 \beta = \frac{1}{M_{l_1}(\rho)}\left[\frac{1+M_{l_1}(\rho)}{2^m } - (1-\varepsilon)\right].
\eal
When $1-\varepsilon\leq (1+M_{l_1}(\rho))/2^m$, we have $\beta \geq 0$, which makes $C^\varepsilon(\rho,m)=1$. On the other hand, when $1-\varepsilon\geq (1+M_{l_1}(\rho))/2^m$, we have $\beta \leq 0$ and 
\bal
 C^\varepsilon(\rho,m)&= 1 - \frac{2^{m}}{M_{l_1}(\rho)}\left[\frac{1+M_{l_1}(\rho)}{2^m}-(1-\varepsilon)\right]\\
 &=\frac{2^m(1-\varepsilon)-1}{M_{l_1}(\rho)},
\eal
which is greater than or equal to 1. 

These cases are summarized as 
\bal
 C^\varepsilon(\rho,m) &=\max\left\{\frac{2^m(1-\varepsilon)-1}{M_{l_1}(\rho)},1\right\},
\eal
concluding the proof.

\end{proof}

The analytical expression for the distillation overhead in Theorem~\ref{prop:coherence single qubit analytical} allows for an exact characterization of the virtual resource distillation rate. 

\begin{corollary}
    Let $\mO$ be MIO or DIO. Then, for every single-qubit state $\rho$ and $\varepsilon\in[0,1]$, the virtual distillation rate is given by
\begin{equation}\begin{aligned}
	V^\varepsilon(\rho) = \max\left\{\frac{(\tilde m +1)M_{l_1}(\rho)^2}{[2^{\tilde m+1}(1-\varepsilon)-1]^2},\tilde m\right\}
\end{aligned}\end{equation}
where $M_{l_1}(\rho)$ is the $l_1$-norm of coherence, and 
\bal
 \tilde m\coloneqq \left\lfloor \log \left(\frac{M_{l_1}(\rho)+1}{1-\varepsilon}\right) \right\rfloor.
 \label{eq:coherence optimal number}
 \eal 
\end{corollary}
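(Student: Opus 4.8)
The plan is to substitute the closed form $C^\varepsilon(\rho,m) = \max\left\{\frac{2^m(1-\varepsilon)-1}{M_{l_1}(\rho)},\,1\right\}$ from Theorem~\ref{prop:coherence single qubit analytical} directly into the definition $V^\varepsilon(\rho) = \sup_m \frac{m}{C^\varepsilon(\rho,m)^2}$ and carry out the resulting one-variable discrete optimization over positive integers $m$. The overhead has two regimes separated by the threshold $m^\star := \log\frac{M_{l_1}(\rho)+1}{1-\varepsilon}$: writing out the condition $\frac{2^m(1-\varepsilon)-1}{M_{l_1}(\rho)} \le 1$ shows that $C^\varepsilon(\rho,m) = 1$ precisely when $m \le m^\star$, and $C^\varepsilon(\rho,m) = \frac{2^m(1-\varepsilon)-1}{M_{l_1}(\rho)}$ when $m \ge m^\star$. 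Hence for integer copy numbers the trivial-overhead regime is exactly $1 \le m \le \tilde m$ with $\tilde m = \lfloor m^\star \rfloor$ as in~\eqref{eq:coherence optimal number}, and the nontrivial regime is $m \ge \tilde m + 1$.

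First I would treat the low-copy regime: for $1 \le m \le \tilde m$ the objective is simply $\frac{m}{C^\varepsilon(\rho,m)^2} = m$, so its maximum over this range is $\tilde m$. Next, on the high-copy regime $m \ge \tilde m + 1$ the objective becomes $g(m) := \frac{m\,M_{l_1}(\rho)^2}{(2^m(1-\varepsilon)-1)^2}$, and the key claim is that $g$ is strictly decreasing in $m$ for all $m \ge 1$. This is the same monotonicity computation already performed in the entanglement section around~\eqref{eq:derivative}: differentiating gives $g'(m)$ equal to $M_{l_1}(\rho)^2(2^m(1-\varepsilon)-1)^{-3}\big[(1-\varepsilon)2^m(1-2m\ln 2)-1\big]$, whose bracketed factor is negative for every $m \ge 1$ because $2m\ln 2 > 1$, while the denominator is positive throughout this regime since $2^m(1-\varepsilon)-1 \ge 2^{\tilde m+1}(1-\varepsilon)-1 > M_{l_1}(\rho) \ge 0$. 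Consequently the supremum over the high-copy regime is attained at $m = \tilde m + 1$, with value $\frac{(\tilde m+1)M_{l_1}(\rho)^2}{(2^{\tilde m+1}(1-\varepsilon)-1)^2}$.

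Combining the two regimes, $V^\varepsilon(\rho)$ equals the larger of $\tilde m$ and $\frac{(\tilde m+1)M_{l_1}(\rho)^2}{(2^{\tilde m+1}(1-\varepsilon)-1)^2}$, which is exactly the claimed expression. I do not expect a genuine obstacle here: the argument is a routine discrete optimization that reuses the derivative estimate of~\eqref{eq:derivative}. The only points that merit a line of care are confirming that the denominator of $g$ never vanishes on the relevant range (which follows from the regime bound $2^m(1-\varepsilon)-1 > M_{l_1}(\rho) \ge 0$) and that $g(m)\to 0$ as $m\to\infty$, so that no supremum is hidden at infinity; the degenerate cases $M_{l_1}(\rho)=0$ or $\varepsilon=1$ can be checked to return the correct value (namely $\tilde m$, with the first term vanishing, or $\tilde m = \infty$), and the boundary case $\tilde m = 0$ likewise reduces to $g(1)$, which agrees with the stated formula.
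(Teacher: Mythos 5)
Your proposal is correct and follows essentially the same route as the paper's own proof: substitute the closed form of $C^\varepsilon(\rho,m)$ from Theorem~\ref{prop:coherence single qubit analytical}, split the optimization over $m$ at the threshold where the overhead exceeds $1$, use that the objective equals $m$ (increasing) below the threshold and reuse the derivative computation of Eq.~\eqref{eq:derivative} with $c=1-\varepsilon$ to show it is decreasing above it, and conclude that the supremum is attained at $m=\tilde m$ or $m=\tilde m+1$. Your explicit check that the denominator $(2^m(1-\varepsilon)-1)^3$ is positive in the high-copy regime is a small point of care that the paper glosses over, but it does not change the argument.
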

\begin{proof}
    We first remark that the function $m/C^\varepsilon(\rho,m)^2$ is monotonically decreasing with $m$, as we can write $m/C^\varepsilon(\rho,m)^2=M_{l_1}(\rho)^2 g(m)$ where $g(m)=m/(2^m(1-\varepsilon)-1)^2$ is the same function that appears in \eqref{eq:derivative} with $c=1-\varepsilon$, which is shown to be monotonically decreasing for all $m\geq 1$ and $c\geq 0$.
    This implies that the optimal $m$ for the maximization $\sup_m m/C^\varepsilon(\rho,m)^2$ is not greater than the smallest $m$ such that 
    \bal
     \frac{2^m(1-\varepsilon)-1}{M_{l_1}(\rho)}\geq 1.
     \label{eq:overhead larger}
    \eal
    On the other hand, for $m$ such that 
     \bal
     \frac{2^m(1-\varepsilon)-1}{M_{l_1}(\rho)}\leq 1,
     \label{eq:overhead smaller}
    \eal
    we have $m/C^\varepsilon(\rho,m)=m$, which is an increasing function with $m$.
    Therefore, letting $\tilde m$ be the maximum integer $m$ satisfying \eqref{eq:overhead smaller}, the maximum for $\sup_m m/C^\varepsilon(\rho,m)^2$ happens at either $m=\tilde m$, which gives $C^\varepsilon(\rho,m)=1$, or $m=\tilde m +1$, which gives $C^\varepsilon(\rho,m)=[2^{\tilde m+1}(1-\varepsilon)-1]/M_{l_1}(\rho)$.
    The result then follows by observing that $\tilde m$ can be explicitly obtained in the form of \eqref{eq:coherence optimal number}.
\end{proof}


\subsection{Magic}

Here, we discuss the resource theory of magic states, which is motivated by the scenario of fault-tolerant quantum computation~\cite{veitch2014resource,PhysRevLett.118.090501}.
The stabilizer states are the states that can be created by Clifford gates and classical randomness, and the resource theory of magic quantifies how much a given state deviates from the set of stabilizer states. 
Among them, the $T$ state defined by 
\bal
 T\coloneqq \dm{T} = \frac{I+(X+Y)/\sqrt{2}}{2},\quad \ket{T}\coloneqq\frac{1}{\sqrt{2}}(\ket{0}+e^{i\pi/4}\ket{1})
\eal
plays the major role, as access to the $T$ state together with stabilizer operations is sufficient to realize universal quantum computation. 
Therefore, magic state distillation protocols usually take $T$ state as the target state to synthesize. 
The important class of noisy input states for magic state distillation is the dephased $T$ state, 
\bal
 \rho_p^T\coloneqq (1-p)T + p I/2.
\eal
An arbitrary state can be brought into this form by the Clifford operation which applies $SX$ with $S={\rm diag}(1,i)$ with probability $1/2$. Therefore, the design of a magic state distillation protocol can focus on these specific noisy states as its input. 

We can apply our virtual resource distillation framework to this class of states, potentially providing better computational accuracy for algorithms run on fault-tolerant quantum computers that aim to obtain expectation values.
We note that a closely related setting was discussed in terms of a combination of quantum error mitigation and error correction methods~\cite{Lostaglio2021error,Piveteau2021error,Suzuki2021quantum}.
Our framework encompasses this strategy for magic state distillation as an application of the general approach of virtual resource distillation. 
In particular, the following analytical expressions for the virtual resource distillation extend the result in Ref.~\cite{Lostaglio2021error} to the regime with nonzero error.
Here, we let $\mO_{\rm STAB}$ and $\mF_{\rm STAB}$ denote the sets of stabilizer operations and states, respectively. 

\begin{proposition}\label{pro:magic T cost}

Let $p_{\rm th}\coloneqq 1/\sqrt{2}$ be the maximum value such that $\rho_{p_{\rm th}}^T\in \mF_{\rm STAB}$. Then, the virtual resource distillation overhead with respect to the target state $\ket{T}=(\ket{0}+e^{i\pi/4}\ket{1})/\sqrt{2}$ under stabilizer operations is characterized by
\bal
 C^\varepsilon(\rho^T_p,1)=\begin{cases}
\frac{1-2\varepsilon}{p_{\rm th}} & |p|\leq p_{\rm th}\\
\frac{1-2\varepsilon}{|p|} & |p|> p_{\rm th}
 \end{cases}.
 \eal
\end{proposition}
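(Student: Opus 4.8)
The plan is to establish matching upper and lower bounds, both exploiting that $\rho^T_p$ and the target share the same Bloch direction $\hat n$ --- the \emph{magic axis}, with $T=\tfrac12(I+\hat n\!\cdot\!\vec\sigma)$ and $\hat n\!\cdot\!\vec\sigma=(X+Y)/\sqrt2$ --- and that the single-qubit stabilizer polytope (the octahedron whose vertices are the six Pauli eigenstates) reaches only out to Bloch length $p_{\rm th}=1/\sqrt2$ in that direction, so that $\rho^T_p$ is free iff $|p|\le p_{\rm th}$.

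For the upper bound I would exhibit an explicit feasible decomposition. The key gadget is the stabilizer ``magic twirl'' $\mathcal T(\cdot)=\tfrac12\big[(\cdot)+(SX)(\cdot)(SX)^\dagger\big]$: since $SX$ is a Clifford with $SX\ket T=e^{i\pi/4}\ket T$ and $(SX)^2=iI$, it acts on the Bloch ball as a $\pi$-rotation about $\hat n$, so $\mathcal T$ is a free operation that dephases every state onto the $\hat n$-axis and fixes $\rho^T_p$; composing with the Clifford $\mathcal Z(\cdot)=Z(\cdot)Z$, which sends $\hat n\mapsto-\hat n$, gives a free operation carrying $\rho^T_p$ to its antipode on that axis. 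Taking $\Lambda_+=\mathcal T$ and $\Lambda_-=\mathcal Z\circ\mathcal T$ when $|p|\ge p_{\rm th}$ --- or, when $|p|<p_{\rm th}$, the free ``prepare'' channels that output the stabilizer mixtures $\omega^\pm$ of Bloch length $p_{\rm th}$ along $\pm\hat n$ --- together with $\lambda_\pm=\tfrac12\big(\tfrac{1-2\varepsilon}{q}\pm1\big)$, $q:=\max\{|p|,p_{\rm th}\}$, a short Bloch-vector computation gives $\lambda_+\Lambda_+(\rho^T_p)-\lambda_-\Lambda_-(\rho^T_p)=(1-\varepsilon)T+\varepsilon T^\perp$, which is exactly $\varepsilon$-close to $T$ in trace distance and has overhead $\lambda_++\lambda_-=\tfrac{1-2\varepsilon}{q}$. (Whenever $\tfrac{1-2\varepsilon}{q}\le1$ one uses instead the trivial protocol of overhead $1$, since $\rho^T_p$, resp.\ $\omega^+$, is then already $\varepsilon$-close to $T$ --- this is the implicit $\max\{\cdot,1\}$ in the statement.)

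For the lower bound I would use the dual expression for the overhead, Eq.~\eqref{eq:overhead dual}, together with the max--min inequality, inserting the witness $W=\tfrac12 I+\tfrac{1}{2q}\,\hat n\!\cdot\!\vec\sigma$. Its feasibility --- $0\le\Tr[W\Lambda(\rho^T_p)]\le1$ for all free $\Lambda$ --- is equivalent to the \emph{key lemma}: under any stabilizer operation $\Lambda$, the Bloch vector of $\Lambda(\rho^T_p)$ has $\hat n$-component of magnitude at most $q$. This follows by post-composing $\Lambda$ with $\mathcal T$ (which leaves the $\hat n$-component unchanged but, being free, cannot increase the standard robustness of magic $R^s$ relative to $\mathcal F_{\rm STAB}$), landing on the $\hat n$-axis where $R^s$ is the explicit function $c\mapsto\max\{0,(\sqrt2|c|-1)/2\}$ of the $\hat n$-component $c$; the monotonicity inequality $R^s\le R^s(\rho^T_p)$ then forces $|c|\le q$. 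With $W$ feasible, the inner infimum over $\tilde\tau\sim_\varepsilon T$ is attained at $\tilde\tau=(1-\varepsilon)T+\varepsilon T^\perp$, giving $C^\varepsilon(\rho^T_p,1)\ge2\Tr[W\tilde\tau]-1=\tfrac{1-2\varepsilon}{q}$ and closing the gap. (For $\varepsilon=0$ this lower bound is just Proposition~\ref{prop:virtual_monotones} applied to $R^s$, using $\|T\|_{\rm STAB}=1/p_{\rm th}$ and $\|\rho^T_p\|_{\rm STAB}=\max\{1,|p|/p_{\rm th}\}$.)

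The main obstacle is the key lemma: bounding the best possible alignment with $T$ reachable from $\rho^T_p$ under \emph{all} stabilizer operations, not just Clifford unitaries and state preparations --- this is precisely where the octahedron geometry and the monotonicity of $R^s$ must be combined. Everything else is routine tracking of Bloch vectors, the only delicacy being the boundary regimes in which the overhead collapses to $1$.
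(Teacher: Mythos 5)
Your proof is correct, and it follows the same essential route as the paper's: both hinge on the Clifford twirl $\mT(\cdot)=\frac12\big[(\cdot)+(SX)(\cdot)(SX)^\dagger\big]$, which dephases onto the $T$--$\overline{T}$ axis, and on the monotonicity of a resource measure to bound how far along that axis a stabilizer operation can push $\rho^T_p$. The differences are in presentation rather than substance, but they are worth noting. The paper restricts the optimization to twirled protocols (justified by trace-norm contractivity under $\mT$) and then asserts that the optimal $\mT\circ\Lambda_\pm(\rho^T_p)$ must be the extremal on-axis states $\rho^T_{\pm q}$ with $q=\max\{|p|,p_{\rm th}\}$, citing a monotone (the trace-distance measure) only for the non-achievability direction; your version makes the lower bound fully explicit by feeding the witness $W=\frac12 I+\frac{1}{2q}\,\hat n\!\cdot\!\vec\sigma$ into the dual form \eqref{eq:overhead dual}, with feasibility reduced to your key lemma, proved via monotonicity of $R^s$ and its exact value on the axis. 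This is a genuine improvement in rigor for the $|p|\le p_{\rm th}$ regime, where Theorem~\ref{prop:cost twirling} does not apply (since $\overline{T}\notin\mF_{\rm STAB}$) and the generic bound of Proposition~\ref{prop:cost lower bound fraction} is strictly loose (it gives $\frac{4(1-\varepsilon)}{1+q}-1<\frac{1-2\varepsilon}{q}$). Two further points in your favor: your explicit achievability construction for $|p|<p_{\rm th}$ via constant prepare-channels makes transparent that the input state is useless in that regime, and you correctly observe that the stated formula should carry an implicit $\max\{\cdot,1\}$ once $\varepsilon>\frac12(1-q)$, an edge case the proposition as written glosses over.
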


\begin{proof}
Recall that
\begin{equation}\begin{aligned}
C^\varepsilon(\rho^T_p,1)&=\min\lset\lambda_+ + \lambda_-\sbar \eta=\lambda_+\Lambda_+(\rho^T_p)-\lambda_-\Lambda_-(\rho^T_p),\right.\\ 
&\left.\, \Lambda_\pm\in \mO_{\rm STAB},\, \frac{1}{2}\|\eta-T\|_1\leq \varepsilon\rset.
\label{eq:cost noisy T definition}
\end{aligned}\end{equation}

Let $\mT(\cdot)=I\cdot I+SX\cdot (SX)^\dagger$ where $S={\rm diag}(1,i)$ is the phase gate and $X$ is the Pauli $X$ operator. 
Then, defining $\overline{T}=\dm{\overline{T}}$ with $\ket{\overline{T}}\coloneqq Z\ket{T}= \frac{1}{\sqrt{2}}(\ket{0}-e^{i\pi/4}\ket{1})$, we get for any state $\rho$ that 
\bal
\mT(\rho)=\tr(T\rho)\,T + \tr(\overline{T}\rho)\,\overline{T}.
\label{eq:T twirling}
\eal
Using $\mT(T) = T$ and the monotonicity of the trace norm under quantum channels, we have 
\bal
 \frac{1}{2}\|\mT(\eta)-T\|_1=\frac{1}{2}\|\mT(\eta)-\mT(T)\|_1 \leq \frac{1}{2}\|\eta-T\|_1. 
\eal
This ensures that if $\eta$ is a feasible solution of \eqref{eq:cost noisy T definition}, so is $\mT(\eta)$. 
Thus, the optimal solution $\eta$ for \eqref{eq:cost noisy T definition} can be restricted to the form $\eta=\lambda_+ \mT\circ\Lambda_+(\rho_p^T) - \lambda_- \mT\circ\Lambda_-(\rho_p^T)$. 

Note that $\rho^T_p = \frac{1+p}{2}T + \frac{1-p}{2}\overline{T}$. 
We assume $p\geq 0$ without the loss of generality because we can always apply $Z$ to flip the sign.
If $p\leq p_{\rm th}$, i.e., $\rho_p^T\in\mF_{\rm STAB}$, $\mT\circ\Lambda_\pm(\rho_p^T)$ is also a stabilizer state as $\mT\circ\Lambda_\pm\in\mO_{\rm STAB}$. Therefore, the optimal solutions should take 
\bal
\mT\circ\Lambda_+(\rho_p^T)&=\rho_{ p_{\rm th}}^T=\frac{1+p_{\rm th}}{2}T + \frac{1-p_{\rm th}}{2}\overline{T}\\
\mT\circ\Lambda_-(\rho_p^T)&=\rho_{ -p_{\rm th}}^T=\frac{1-p_{\rm th}}{2}T + \frac{1+p_{\rm th}}{2}\overline{T}.
\eal
This form specifies the optimal $\eta$ as
\bal
 \eta = \frac{1+(\lambda_++\lambda_-)p_{\rm th}}{2}\, T + \frac{1-(\lambda_++\lambda_-)p_{\rm th}}{2}\,\overline{T}.
\eal
Therefore, the condition $\frac{1}{2}\|\eta-T\|_1\leq \varepsilon$ is equivalent to $\frac{1-(\lambda_++\lambda_-)p_{\rm th}}{2}\leq \varepsilon$.
The optimal $\lambda_\pm$ under this condition gives $C^\varepsilon(\rho^T_p,1)=\frac{1-2\varepsilon}{p_{\rm th}}$.

When $p>p_{\rm th}$, there always exists $\Lambda\in\mO_{\rm STAB}$ such that $\rho_{p'}^T=\mT\circ \Lambda(\rho_p^T)$ for every $p'\in[-p,p]$---such a $\Lambda$ is either realized by mixing the maximally mixed state with $\rho^T_p$, or applying $Z$ to $\rho^T_p$ to make $\rho^T_{-p}$ and mixing the maximally mixed state to it. 
On the other hand, no $p'\not\in[-p,p]$ can be realized because otherwise, the free operation $\mT\circ\Lambda\in\mO_{\rm STAB}$ would increase a resource monotone (e.g., trace-distance measure $R_{\rm tr}(\rho)\coloneqq \min_{\sigma\in\mF_{\rm STAB}}\frac{1}{2}\|\rho-\sigma\|_1$).

Therefore, following a similar argument for the case of $p\leq p_{\rm th}$, the optimal solutions should take 
\bal
\mT\circ\Lambda_+(\rho_p^T)&=\rho_{ p}^T=\frac{1+p}{2}T + \frac{1-p}{2}\overline{T}\\
\mT\circ\Lambda_-(\rho_p^T)&=\rho_{ -p}^T=\frac{1-p}{2}T + \frac{1+p}{2}\overline{T}.
\eal
This form specifies the optimal $\eta$ as
\bal
 \eta = \frac{1+(\lambda_++\lambda_-)p}{2}\, T + \frac{1-(\lambda_++\lambda_-)p}{2}\,\overline{T}.
\eal
Therefore, the condition $\frac{1}{2}\|\eta-T\|_1\leq \varepsilon$ is equivalent to $\frac{1-(\lambda_++\lambda_-)p}{2}\leq \varepsilon$.
The optimal $\lambda_\pm$ under this condition gives $C^\varepsilon(\rho^T_p,1)=\frac{1-2\varepsilon}{p}$.

\end{proof}

For qutrit states, one of the magic states that maximize the negativity of the discrete Wigner function~\cite{veitch2012negative} is known as the Strange state:
\bal
 S= \dm{S},\quad \ket{S}\coloneqq\frac{1}{\sqrt{2}}(\ket{1}-\ket{2}).
\eal
We can characterize overhead for the strange state as follows.
\begin{proposition}\label{prop:magic overlap}
Let $\mc O_{\rm STAB}$ be the set of stabilizer operations, and consider $S$ as the target state for virtual resource distillation. 
Then, we have
\bal
 C^\varepsilon(\rho,m) = \max\left\{\frac{2(1-\varepsilon)}{f_{\mc O_{\rm STAB}}(\rho,m)} -1,1\right\}.
\eal
where $f_{\mc O_{\rm STAB}}(\rho,m)\coloneqq\max_{\Lambda\in\mO_{\rm STAB}}\Tr[\Lambda(\rho)S^{\otimes m}]$ be the maximum overlap with the copies of $S$ state. 
\end{proposition}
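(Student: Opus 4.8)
The plan is to obtain this as a direct instance of Theorem~\ref{prop:cost twirling}. The lower bound $C^\varepsilon(\rho,m)\ge\max\{2(1-\varepsilon)/f_{\mc O_{\rm STAB}}(\rho,m)-1,1\}$ is already supplied by Proposition~\ref{prop:cost lower bound fraction}, which holds for any convex closed set of free operations, so the entire task reduces to exhibiting a free generalized twirling operation $\mathcal T\in\mc O_{\rm STAB}$ of the form~\eqref{eq:generalized twirling} with target $S^{\otimes m}$; Theorem~\ref{prop:cost twirling} then closes the gap from above. (Here $\mc O_{\rm STAB}$ should be read as the class of stabilizer-preserving, i.e.\ resource non-generating, operations, which is what is needed for such a $\mathcal T$ to be free.)

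For a single copy ($m=1$) I would construct $\mathcal T$ explicitly as a mixture of Clifford unitaries. The key structural input is that the metaplectic (Weil) representation of the single-qutrit symplectic group $SL(2,\mathbb Z_3)$ on $\mathbb C^3$ is reducible, splitting as a one-dimensional invariant line --- the $(-1)$-eigenspace of the parity operator, spanned precisely by $\ket S=\tfrac{1}{\sqrt2}(\ket1-\ket2)$ up to a Clifford --- together with an irreducible two-dimensional complement. Averaging over this order-$24$ subgroup $G$ of Clifford unitaries therefore gives, by Schur's lemma, $\mathcal T(\cdot)=\frac{1}{|G|}\sum_{g\in G}U_g\cdot U_g^\dagger=\tr[S\,\cdot]\,S+\tr[(I-S)\,\cdot]\,\tfrac12(I-S)$, which is a mixture of Clifford unitaries and hence a stabilizer operation, with $\sigma^\star=\tfrac12(I-S)=\mathcal T(\ketbra{0}{0})$ free since $\ketbra{0}{0}$ is a stabilizer state. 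Theorem~\ref{prop:cost twirling} then yields the claim for $m=1$.

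For general $m$ this unitary twirl no longer works, as the orthogonal complement of $\ket S^{\otimes m}$ fails to be irreducible under the relevant product or permuted-product subgroups. I would instead invoke the sufficient condition stated after Theorem~\ref{prop:cost twirling} and its Corollary (from \cite[Lemma~5]{Takagi2021oneshot}): a free twirl of the form~\eqref{eq:generalized twirling} exists whenever $F_{\mc F_{\rm STAB}}(S^{\otimes m})^{-1}=1+R^s_{\mc F_{\rm STAB}}(S^{\otimes m})$, in which case the measure-and-prepare channel $\mathcal T(\cdot)=\tr[S^{\otimes m}\,\cdot]\,S^{\otimes m}+\tr[(I-S^{\otimes m})\,\cdot]\,\sigma^\star$, with $\sigma^\star$ the stabilizer state achieving the standard robustness, is stabilizer-preserving --- one checks that $\mathcal T(\sigma)$ always lies on the segment joining $\sigma^\star$ to the free point $\tfrac{1}{1+R^s}S^{\otimes m}+\tfrac{R^s}{1+R^s}\sigma^\star$, because $\tr[S^{\otimes m}\sigma]\le F_{\mc F_{\rm STAB}}(S^{\otimes m})=(1+R^s)^{-1}$ for every stabilizer state $\sigma$. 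The main obstacle is thus establishing the fidelity--robustness identity for all $m$: the inequality $F^{-1}\le 1+R^s$ is automatic, but the reverse amounts to a multiplicativity statement --- that the single-copy fidelity witness $\propto\ketbra{S}{S}$, tensored $m$ times, remains an optimal witness for the $m$-copy standard robustness of the Strange state, equivalently that no (possibly entangled) $m$-qutrit stabilizer state overlaps $\ket S^{\otimes m}$ by more than $2^{-m}$. This does not follow from convexity alone and is where the real work lies; with it in hand, the overhead is pinned exactly to $\max\{2(1-\varepsilon)/f_{\mc O_{\rm STAB}}(\rho,m)-1,1\}$.
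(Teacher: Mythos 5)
Your overall route is the same as the paper's: both reduce the claim to Theorem~\ref{prop:cost twirling} by exhibiting a free generalized twirling operation of the form~\eqref{eq:generalized twirling} for the target $S^{\otimes m}$, with the matching lower bound supplied by Proposition~\ref{prop:cost lower bound fraction}. For $m=1$ your argument is complete and in fact more explicit than the paper's one-line remark: the metaplectic representation of $SL(2,\mathbb{Z}_3)$ on $\mathbb{C}^3$ does split into the one-dimensional odd-parity line spanned by $\ket{S}$ and an irreducible two-dimensional complement, Schur's lemma then gives exactly $\mathcal{T}(\cdot)=\tr[S\,\cdot]\,S+\tr[(I-S)\,\cdot]\,\tfrac12(I-S)$, and $\sigma^\star=\tfrac12(I-S)=\mathcal{T}(\dm{0})$ is free because $\mathcal{T}$ is a uniform mixture of Cliffords acting on a stabilizer state. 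This is precisely the twirl the paper invokes.

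For $m\ge 2$, however, your proposal has a genuine (and self-acknowledged) gap. You are right that the tensor power of the single-copy twirl is not of the form~\eqref{eq:generalized twirling} --- its action on the complement of $S^{\otimes m}$ is state-dependent --- and your reduction of the existence of a free $m$-copy twirl to the identity $F_{\mathcal{F}_{\rm STAB}}(S^{\otimes m})^{-1}=1+R^s_{\mathcal{F}_{\rm STAB}}(S^{\otimes m})$, and thence (using the single-copy values and submultiplicativity of $1+R^s$) to the multiplicativity statement $F_{\mathcal{F}_{\rm STAB}}(S^{\otimes m})=2^{-m}$, is correct. But you then stop, declaring this ``where the real work lies.'' That multiplicativity is a nontrivial property of the Strange state and does not follow from anything you have written; without it the measure-and-prepare map you define is not known to be stabilizer-preserving, and the upper bound for $m\ge 2$ is unproven. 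The paper closes exactly this step by citing \cite{veitch2014resource,Takagi2021oneshot}, where the multi-copy twirling (equivalently the fidelity--robustness coincidence for $S^{\otimes m}$) is established. So the proposal is a faithful and correctly structured version of the paper's argument, fully worked out for $m=1$, but the proposition as stated for all $m$ requires supplying or citing the multiplicativity of the stabilizer fidelity of $\ket{S}$, which you have identified but not established.
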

\begin{proof}
The Strange state admits a free twirling operation of the form \eqref{eq:generalized twirling}~\cite{veitch2014resource, Takagi2021oneshot} and hence the result follows as a consequence of Theorem~\ref{prop:cost twirling}. 
In particular, the twirling operation for the $S$ state is the random application of Clifford unitaries that correspond to elements in the special linear group ${\rm SL}(2,\mathbb{Z}_3)$.

\end{proof}


\subsection{Quantum memory and error mitigation}

Next, we consider an example of virtual distillation in channel theories. We consider a case that distills a $d$-dimensional noisy memory $\mc M$ into an ideal $d$-dimensional memory $\mI_d$ --- that is, the identity channel, which perfectly preserves any quantum system --- by adding extra gates after the application of the memory channel.

The distillation overhead with respect to the target channel $\mI_d$ is
\begin{equation}
	\begin{aligned}
			&C^{\varepsilon}(\mc M,m) \\
   &= \min_{\{\mN_i\}_i, \{\lambda_i\}_i}\lset\sum_i|\lambda_i|\sbar\frac{1}{2}\left\|\sum_i \lambda_i \mc N_i\circ\mc M-\mc I_d^{\otimes m}\right\|_{\diamond}\le \varepsilon\rset.
	\end{aligned}
\label{eq:overhead memory}
\end{equation}
We can imagine this as the case where we have a quantum gate $\mc U$ followed by a noise $\mc M$. Then we aim to apply extra gates so that the noise is canceled.  
The strategy we consider here thus contains several error mitigation methods~\cite{PhysRevLett.119.180509,Li2017}, and similar performance analysis was also studied~\cite{Takagi2021optimal,Regula2021operational,Jiang2021physical}.  

The diamond norm $\left\|\sum_i \lambda_i \mc N_i\circ\mc M-\mI_d^{\otimes m}\right\|_{\diamond}$ can be written as a semidefinite program as follows~\cite{watrous2009semidefinite}.
\begin{equation}
	\begin{aligned}
			&\left\|\sum_i \lambda_i \mc N_i\circ\mc M-\mI_d^{\otimes m}\right\|_{\diamond} \\
   &= \min\Big\{2\lambda-\sum_i{\lambda_i}+1\,\Big\vert\,\lambda J_{\mc E}\geq \sum_i \lambda_i J_{\mc N_i\circ\mc M}-J_{\mI_d^{\otimes m}},\\
   &\hphantom{= \min\Big\{2\lambda-\sum_i{\lambda_i}+1\,\Big\vert\,}\mc E\in {\rm CPTP}\Big\} 
	\end{aligned}
\end{equation}
where $J_{\Lambda}$ denotes the Choi state for a channel $\Lambda$.
Therefore, we can write the virtual distillation overhead as a semidefinite program as
\begin{equation}
	\begin{aligned}
			C^{\varepsilon}(\mc M,m) &= \min_{\mc N_{\pm}, \lambda_{\pm}}\lset{\lambda_++\lambda_-}\sbar 2\lambda-\sum_i{\lambda_i}+1\le \varepsilon,\right.\\ 
			&\left.\lambda J_{\mc E}\geq \sum_i \lambda_i J_{\mc N_i\circ\mc M}-J_{\mI_d^{\otimes m}}, \mc E\in {\rm CPTP}\rset.
	\end{aligned}
\end{equation}

In Fig.~\ref{fig:memory}, we compute the overhead for depolarising channels, dephasing channels, and stochastic replacement channels for $m=1$.

\begin{figure}[h]\centering
  \includegraphics[width=\columnwidth]{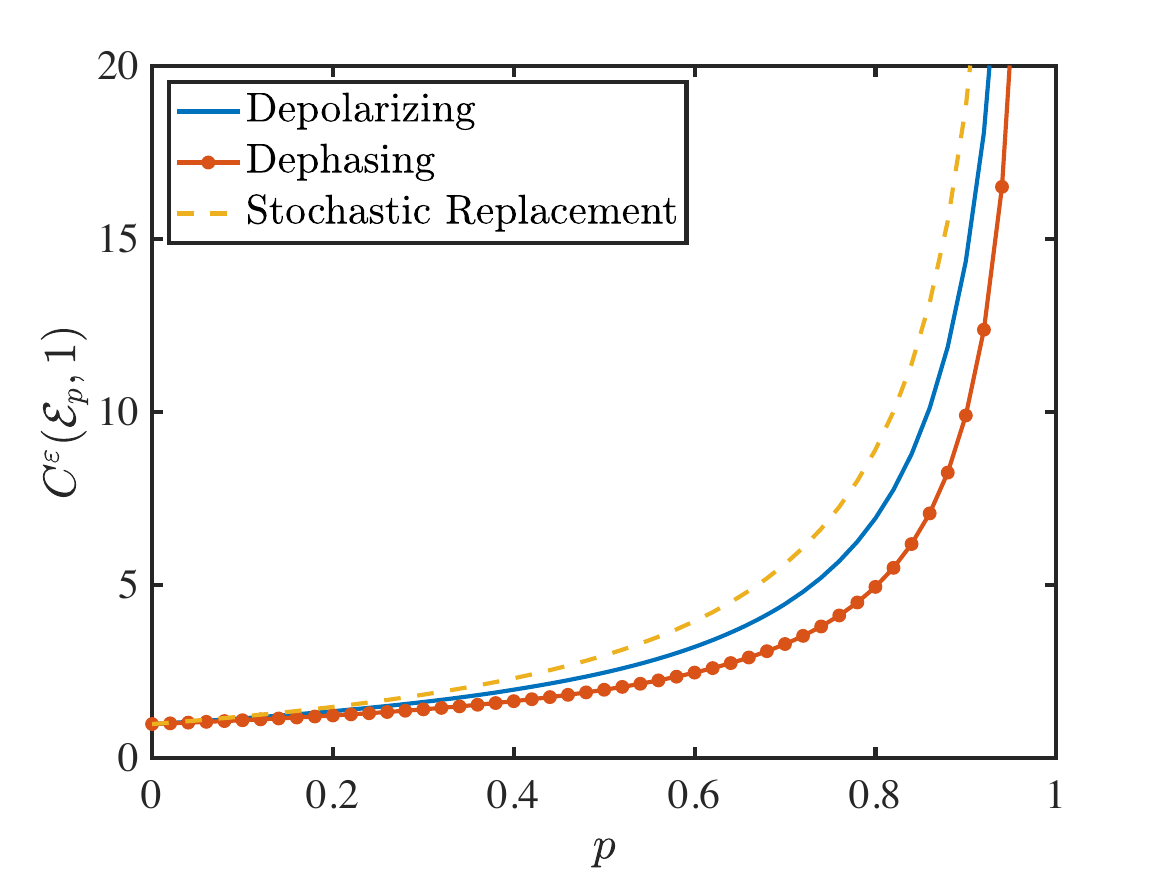}
  \caption{
  Distillation overhead for depolarising channels $\mE_p(\rho)=p\rho+(1-p)I/2$, dephasing channels $\mE_p(\rho)=p\rho+(1-p)Z\rho Z$, and stochastic replacement channels $ \mE_p(\rho) = p\rho+(1-p)|{0}\rangle\langle{0}|$. Here we consider $\varepsilon=0.01$.}\label{fig:memory}
\end{figure}


\subsection{Quantum communication}

A similar argument can be applied to the setting of quantum communication, in which Alice aims to send a quantum state to Bob via a noisy channel $\mE$.
In the usual setting of quantum communication, Alice and Bob apply additional quantum operations available to them so that Bob can recover the quantum state that was initially in Alice's hands. 
The rate at which noiseless qubits can be successfully sent is known as the quantum capacity of a noisy channel. 

We remark that this process can be considered as a channel distillation --- the operations Alice and Bob can apply are considered as free operations applied to a noisy channel, with the overall process constructing a superchannel that transforms a channel $\mE$ to (approximately) the identity channel.
From this perspective, quantum capacity coincides with the distillation rate with respect to available encoding and decoding operations constructing free operations, for which resource-theoretic tools can be employed to study the properties. 
This resource-theoretic view of quantum communication has recently been actively studied and provided new insights into the theory of quantum communication~\cite{takagi_2020,Regula2021fundamental,Fang2020no-go,Regula2021oneshot}.

The framework of virtual resource distillation allows us to extend the conventional setting of quantum communication.
The specific restrictions on communication scenarios, reflected in the choice of free operations applied to noisy quantum channels, largely depend on the assisting resource available --- such as non-signaling/entanglement assistance~\cite{bennett_2002,leung_2015} or classical communication assistance~\cite{Bennett96}.  
Since our framework is applicable to general resource theories, i.e., any choice of free operations, the technique of virtual resource distillation can be applied to communication settings with very general types of physical restrictions.
Here, we consider the setting of the most physical relevance, which is an unassisted setting in which Alice and Bob can only make local operations on their side. 

A subtlety that arises in virtual resource distillation for unassisted communication is that postprocessing after the measurement made by Bob requires one bit of preshared randomness (or classical communication from Alice to Bob) to agree on the operations they apply on each side. 
To avoid this, here we consider a significantly weaker setting, in which only Bob applies a probabilistic operation followed by measurement and classical postprocessing. 
In this setting, Bob can generate a random bit on his own and choose his operation and corresponding classical postprocessing. 

This reduces the estimation of virtual resource distillation rate and overhead to a framework almost identical to the one discussed in the last subsection. 
Here, our target channel is the identity channel $\mI$. 
For a given noisy channel $\mE$, the virtual resource distillation overhead with respect to the target channel $\mI$ is then characterized by $C^\varepsilon(\mE,m)$ defined in \eqref{eq:overhead memory} with $d$ being the dimension of the space that $\mE$ acts on.

Let us now focus on the evaluation of distillation overhead for the case of $\varepsilon=0$ and $m=1$. 
In this case, the overhead can be written as 
\bal\label{eq:virt_capacity}
 C^0(\mE,1)=\min_{\{\mN_i\}_i,\{\lambda_i\}_i}\lset\sum_i|\lambda_i|\sbar \mE^{-1}= \sum_i \lambda_i \mN_i\rset
\eal
assuming that the inverse map for $\mE$ exists. 
This overhead can then give a lower bound for the virtual resource distillation rate as $V^\varepsilon(\mE)\geq 1/C^0(\mE,1)^2$ for every $\varepsilon\in[0,1]$.

This quantity was studied in Ref.~\cite{Jiang2021physical,Regula2021operational} and shown to coincide with the diamond norm of the inverse map $\|\mE^{-1}\|_\diamond$~\cite{Regula2021operational}.
The analytical expressions of $C^0(\mE,1)$ for some noisy channels of interest were then obtained. 
For instance, for the $d$-dimensional depolarizing channel $\mD_p(\rho)\coloneqq (1-p)\rho + pI/d$, we have
\bal
 C^0(\mD_p,1) = \frac{1+(1-2/d^2)p}{1-p}.
\eal
This in particular provides the lower bound for the virtual distillation rate for a qubit depolarizing channel as  
\bal
 V^\varepsilon(\mD_p)\geq \left(\frac{1-p}{1+p/2}\right)^2. 
\eal

The exact variant of virtual distillation overhead in \eqref{eq:virt_capacity} is conceptually similar to zero-error quantum communication, where no error is allowed in the protocol.
We note, however, that a direct comparison of the lower bound we obtained for the virtual distillation rate and the quantum capacity may not be fair, because the computation of quantum capacity assumes that: (1) Alice also applies her operation, and (2) asymptotically many channel uses are allowed. More generally, the error in communication can be nonzero as long as it vanishes in the limit of infinitely many channel uses.
We can see that our lower bound $1/C^0(\mD_p,1)^2$ can already be significantly greater than the quantum capacity. 
To see this, recall that the quantum capacity for qubit depolarizing channel $Q(\mD_p)$ has a simple upper bound $Q(\mD_p)\leq 1-4p$ for $p\leq 1/4$ and $Q(\mD_p)=0$ for $p\leq 1/4$~\cite{Smith2008additive}. 
It is straightforward to check that $1-4p< 1/C^0(\mD_p,1)^2$ for $p\in (0,1]$, which results in $Q(\mD_p)<V^0(\mD_p)$ for the whole range of $p$. 
In particular, $V^0(\mD_p)>0$ for all $p\in[0,1]$, which has a stark contrast to the quantum capacity, which becomes $Q(\mD_p)=0$ for $p>1/4$.

As another example, take the qubit amplitude damping channel $\mA_\gamma(\rho)\coloneqq A_0\cdot A_0^\dagger + A_1\cdot A_1^\dagger$ with $A_0\coloneqq\dm{0}+\sqrt{1-\gamma}\dm{1}$ and $A_1\coloneqq \sqrt{\gamma}\ketbra{0}{1}$. 
Using the results in Refs.~\cite{Regula2021operational,Jiang2021physical}, the overhead can be computed as 
\bal
 C^0(\mA_\gamma,1) = \frac{1+\gamma}{1-\gamma},
\eal
which gives a lower bound for the virtual distillation rate as $V^\varepsilon(\mA_\gamma)\geq \left(\frac{1-\gamma}{1+\gamma}\right)^2$.
On the other hand, the quantum capacity of amplitude damping is known to be \cite{Giovannetti2005information}
\bal
 Q(\mA_\gamma)=\max_t [h_2((1-\gamma)t) - h_2(\gamma t)]
\eal
where $h_2(p)\coloneqq -p\log p - (1-p)\log (1-p)$ is the binary entropy.
Fig.~\ref{fig:amplitude_dampling} plots $Q(\mA_\gamma)$ and $1/C^0(\mA_\gamma)^2$ for $\gamma\in[0,1]$.
This shows that the virtual distillation rate is ensured to be greater than quantum capacity for $\gamma\geq 0.4$ and it remains nonzero while $Q(\mA_\gamma)=0$ for $\gamma\geq 1/2$.

\begin{figure}
    \centering
    \includegraphics[width=0.8\columnwidth]{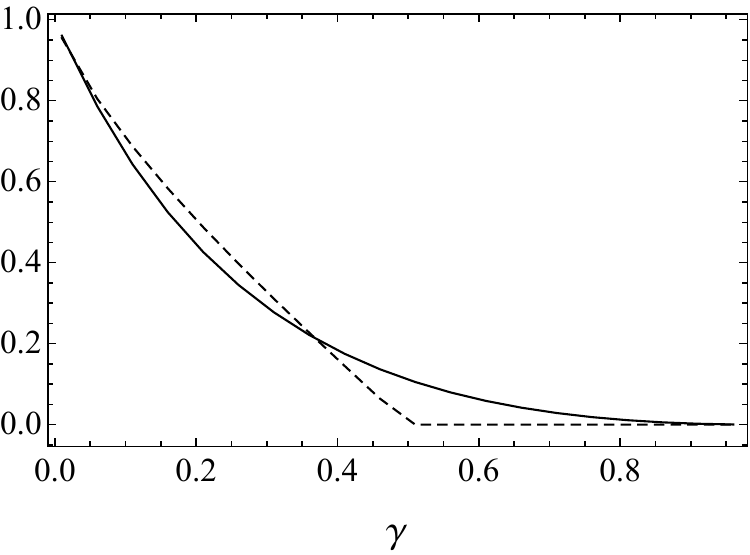}
    \caption{The lower bound $1/C^0(\mA_\gamma,1)^2$ for the virtual distillation rate (solid) and quantum capacity $Q(\mA_\gamma)$ (dashed) for the qubit amplitude damping channel $\mA_\gamma$.}
    \label{fig:amplitude_dampling}
\end{figure}


\subsection{Dephased non-Markovian processes}

Let us now discuss an application of virtual resource distillation in comb theories. 
Quantum combs become most relevant when the system and environment interact with each other, where one does not have control over the environment.
Non-Markovian dynamics particularly appear when the environment sustains quantum memory over multiple time steps. 
In such a scenario, physically accessible operations, which we take as free operations, should be quantum combs that only act on the accessible system. 
Here, we discuss an example where virtual resource distillation enables us to distill the environment comb that has the perfect quantum memory from the one with inferior memory.

Consider an $L$-step non-Markovian process that involves qubit system and environment that interact with each other via two CNOT gates at every time step. 
We let $S_j$ and $E_j$ refer to the system and environment at $j$\,th time step.
Every set of CNOT gates is followed by a partial dephasing channel $\mZ_p(\rho)\coloneqq (1-p)\rho + Z\rho Z$ in the environment, where $Z$ is the Pauli $Z$ operator (Fig.~\ref{fig:comb}).  

A quantum comb can be described by its Choi operator, which corresponds to the state obtained by inputting one end of a maximally entangled state into every input port of the comb (up to normalization)~\cite{Chiribella2008quantum}. The Choi operator of our given object is written as 
\bal
J_\Upsilon = \star_{j=1}^L\left(J_{\mZ_p}^{E_j} \star J_{\rm CNOT}^{S_{j} E_{j}}\right)
\eal
where $J_{\rm CNOT}^{S_{j}E_{j}}$ is the Choi operator for two CNOT gates at $j$\,th time step, and $J_{\mZ_p}^{E_j}$ is the Choi operator for $\mZ_p$. 
For channels $\mM:A\to B$ and $\mE:B\to C$, $J_\mE\star J_\mM\coloneqq\Tr_B\left[J_\mE^{T_B}J_\mM\right]$ refers to the link product, which gives the Choi operator for the concatenated channel $\mE\circ\mM$~\cite{Chiribella2008quantum}.

The dephasing channel degrades the quantum memory and decoheres quantum states over time.  
Our goal is to remove the effect of this dephasing noise in the environment by applying operations in the system.
Therefore, we set our target comb $\Theta$ as 
\bal
J_\Theta = \star_{j=1}^L\left(J_{\mI}^{E_j} \star J_{\rm CNOT}^{S_{j} E_{j}}\right).
\eal 

We would like to find a set $\{\Lambda_i\}_i$ of free operations, i.e., quantum combs that only act on the system side, so that $\Theta=\sum_i \lambda_i \Lambda_i(\Upsilon)$ for some real numbers $\{\lambda_i\}_i$.
To this end, it is useful to notice that the inverse map $\mZ_p^{-1}$ for the partial dephasing is decomposed as~\cite{PhysRevLett.119.180509}
\bal
 \mZ_p^{-1} = \frac{1-p}{1-2p}\mI - \frac{p}{1-2p}\mZ
\eal
and the optimal overhead is realized by this decomposition~\cite{Takagi2021optimal,Regula2021operational,Jiang2021physical}, which gives $C^0(\mZ_p,1)=1/(1-2p)$ with respect to the target channel $\mI$ in light of the discussion in the previous two subsections. 
The implementation of this inverse map is realized by applying $\mZ$ at probability $p$ followed by postprocessing, i.e., multiplying $1/(1-2p)$ to the measurement outcome with a possible sign factor if Pauli $Z$ is applied.

We now observe that the same action can be made on the environment by applying a $Z$ operator on the system side. 
Namely, we apply $Z$ operators both before and after the CNOTs at probability $p$, and we do not apply anything at probability $1-p$ (Fig.~\ref{fig:comb}). 
When a $Z$ operator is applied, the action of $Z$ propagates to the environment through the second CNOT gate, while the effect on the system side cancels out by the second $Z$ operator after the CNOTs.  

\begin{figure}
    \centering
    \includegraphics[width=\columnwidth]{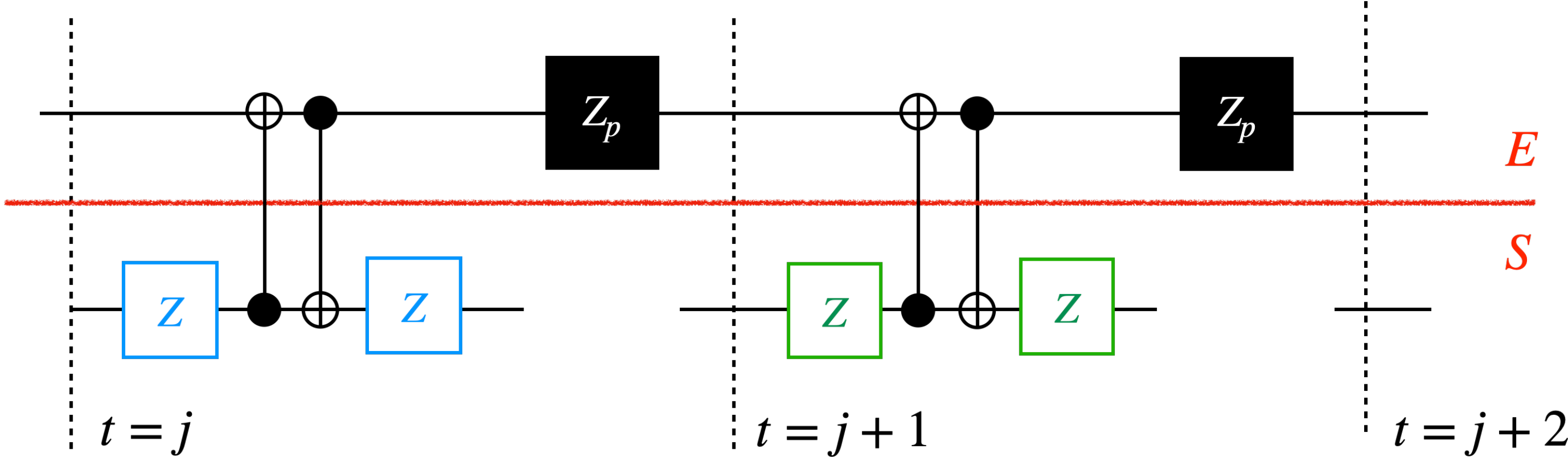}
    \caption{A diagram showing $j$\,th and $j+1$\,th time steps. The $Z$ operations in blue and green are stochastic operations applied at probability $p$ in the virtual distillation process. Two $Z$'s in the same color are either simultaneously applied or are not applied at all, while the application of blue and green $Z$'s are independent. }
    \label{fig:comb}
\end{figure}

We can independently apply the same procedure at every time step, which constructs $2^L$ free operations $\{\Lambda_{\vec{i}}\}_{\vec{i}\in\{0,1\}^L}$ where the location of 1's in $\vec{i}$ specifies the time steps at which $Z$ operators are applied. 
Letting $|\vec{i}|$ denote the number of 1's in $\vec{i}$, the desired linear decomposition of the target comb is written as
\bal
 \Theta = \sum_{\vec{i}\in\{0,1\}^L} (-1)^{|\vec{i}|} (1-p)^{L-|\vec{i}|}p^{|\vec{i}|}\Lambda_{\vec{i}}(\Upsilon),
\eal
which gives 
\bal
 C^{0}(\Upsilon,1)\leq \frac{1}{(1-2p)^L}.
\eal
We conjecture that equality holds because this is essentially the most efficient way to counteract the dephasing, although we leave the full investigation to future work. 

The implementation of the virtual resource distillation then is realized as follows. 
\begin{enumerate}
    \item At each step, apply $Z$ operators before and after the CNOTs at probability $p$ and do nothing at probability $1-p$. Record which operation was applied. 
    \item Multiply $(-1)^{\rm sgn}(1-2p)^L$ to the measurement outcome, where sgn takes 0 if the total number of time steps at which $Z$ operators were applied was even, and 1 if odd. 
    \item Repeat the same procedure many times and take a sample average of the postprocessed measurement outcomes.
\end{enumerate}

This example ensures that our virtual resource distillation framework in comb theories can be applied to memory preservation in the environment by manipulating the accessible system only. 
This may be seen as an error mitigation protocol applied to non-Markovian dynamics, which was previously studied in several other settings~\cite{Hokoshima2021relationship,Ho2021enhancing}.


\section{Conclusions}

We presented the framework of virtual resource distillation applicable to general resource theories with an arbitrary set of convex free objects and free operations, including general types of quantum objects such as quantum states, channels, and higher-order processes represented by quantum combs. 
We derived various expressions and bounds for virtual resource distillation rate and overhead, both in general settings and concrete theories of practical interest, demonstrating its versatility and broad applicability.  

Promising future directions include obtaining explicit evaluations of the performance of probabilistic virtual resource distillation, which may find use in practical settings that are not possible to characterize using only deterministic protocols. 
It will also be interesting to consider experimental implementations of virtual resource distillation that are possible on today's  quantum devices.

A further open question is the relation between virtual distillation rates and asymptotic rates of conventional distillation. Although, as remarked earlier and in~\cite{maintext}, the latter are very different from our approach --- requiring in particular coherent manipulation of many-copy input states $\rho^{\otimes n}$ with an unbounded number of copies --- it would nevertheless be interesting to understand whether virtual distillation can already improve on such rates. Furthermore, the many-copy extension of our virtual framework, i.e.\ the behaviour of $V^\varepsilon(\rho^{\otimes n})$ when more copies of the input state $\rho$ can be manipulated coherently, is an interesting question of its own. Although this sacrifices the `experimentally friendly' character of virtual distillation protocols,  it could be useful to understand to what extent virtual distillation capabilities can be improved through such many-copy protocols, leading to the delineation of the ultimate limits of virtual quantum resource distillation.

\begin{acknowledgments}
We thank Suguru Endo, Patrick Hayden, Jayne Thompson, and Mark M.\ Wilde for insightful discussions.
This work is supported by the National Natural Science Foundation of China Grant No.~12175003, the Singapore Ministry of Education Tier 1 Grant RG77/22, the National Research Foundation, Singapore, and Agency for Science, Technology and Research (A*STAR) under its QEP2.0 programme (NRF2021-QEP2-02-P06) and the Singapore Ministry of Education Tier 2 Grant MOE-T2EP50221-0005,  B.R.\ is partially supported by the Japan Society for the Promotion of Science (JSPS) KAKENHI Grant No.\ 22KF0067. R.T.\ is supported by the Lee Kuan Yew Postdoctoral Fellowship at Nanyang Technological University Singapore.
\end{acknowledgments}

\appendix

\section{Dual formulation of $C^\varepsilon$}
\label{app:dual}

Here we prove Eq.~\eqref{eq:overhead dual}, that is, establish a dual form of the virtual distillation cost $C^\varepsilon$.

To begin, we write
\begin{align}
  C^\varepsilon(\rho,m) &= \inf_{\substack{\widetilde\tau \sim_\varepsilon \tau^{\otimes m}}} \inf \big\{\lambda_++\lambda_- \;\big|\; \widetilde\tau = \lambda_+ \Lambda_+(\rho) - \lambda_- \Lambda_-(\rho),\nonumber\\
   &\qquad\qquad\qquad\lambda_\pm\geq 0,\ \lambda_+-\lambda_- = 1,\ \Lambda_\pm \in \mO\big\}\nonumber\\
   &= \inf_{\substack{\widetilde\tau \sim_\varepsilon \tau^{\otimes m}\\\Tr\widetilde\tau = 1}} \inf \big\{\lambda_++\lambda_- \;\big|\; \widetilde\tau = \lambda_+ \Lambda_+(\rho) - \lambda_- \Lambda_-(\rho),\nonumber\\
   &\qquad\qquad\qquad\lambda_\pm\geq 0,\ \Lambda_\pm \in \mO\big\}\\
   &= \inf_{\substack{\widetilde\tau \sim_\varepsilon \tau^{\otimes m}\nonumber\\\Tr\widetilde\tau = 1}} \inf \big\{\Tr\widetilde\Lambda_+(\rho) + \Tr\widetilde\Lambda_-(\rho) \;\big|\; \nonumber\\
   &\qquad\qquad\qquad\widetilde\tau = \widetilde \Lambda_+(\rho) - \widetilde\Lambda_-(\rho),\nonumber\\
   &\qquad\qquad\qquad \widetilde\Lambda_\pm \in \operatorname{cone}(\mO)\big\},\nonumber
\end{align}
where $\operatorname{cone}(\mO) = \lset \lambda \Lambda \sbar \lambda \geq 0,\; \Lambda \in \mO \rset$. We will now take the Lagrange dual of the inner minimisation. The Lagrangian of this problem is
\begin{align}
\mL\left(\widetilde\Lambda_\pm; H, X, Y\right) &=  \Tr\widetilde\Lambda_+(\rho) + \Tr\widetilde\Lambda_-(\rho)\nonumber \\
&\quad -\Tr\left( H \left[ \widetilde \Lambda_+(\rho) - \widetilde\Lambda_-(\rho)-\widetilde\tau\right]\right)\\
&\quad - \Tr X J_{\widetilde\Lambda_+} - \Tr Y J_{\widetilde\Lambda_-},\nonumber
\end{align}
where $J_{\Lambda}$ denotes the Choi operator of the corresponding map, and $H, X, Y$ are Lagrange multipliers satisfying $\Tr X J_{\Lambda} \geq 0 \; \forall \Lambda \in \mO$ and analogously for $Y$. 
Using the Choi--Jamiolkowski isomorphism, we can rewrite this as
\bal
\mL\left(\widetilde\Lambda_\pm; H, X, Y\right) &=  \Tr (\id \otimes \rho^T) J_{\widetilde\Lambda_+} + \Tr (\id \otimes \rho^T) J_{\widetilde\Lambda_-} \\
&\hspace{-10pt} + \Tr H \widetilde\tau - \Tr (H\otimes\rho^T) J_{\widetilde \Lambda_+} + \Tr (H\otimes \rho^T)J_{\widetilde\Lambda_-(\rho)}\\
&\hspace{-10pt} - \Tr X J_{\widetilde\Lambda_+} - \Tr Y J_{\widetilde\Lambda_-}.
\eal

By definition, the dual problem is then~\cite{Boyd2004convex}
\begin{align}
 &\sup_{\substack{H \in \mathrm{Herm}\\X \,:\, \Tr X J_{\Lambda} \geq 0 \; \forall \Lambda \in \mO\\Y \,:\, \Tr Y J_{\Lambda} \geq 0 \; \forall \Lambda \in \mO}} 
\;\inf_{\widetilde\Lambda_\pm \in \mathrm{Herm}} \;\mL\left(\widetilde\Lambda_\pm; H, X, Y\right)\nonumber\\
&= \sup \big\{ \Tr H \widetilde \tau \;\big|\; \Tr ([\id - H]\otimes\rho^T) J_{\Lambda} \geq 0 \; \forall \Lambda \in \mO,\\
&\qquad\qquad\qquad\; \Tr ([\id + H]\otimes\rho^T) J_{ \Lambda} \geq 0 \; \forall \Lambda \in \mO \big\}\nonumber\\
&= \sup \big\{ \Tr H \widetilde \tau \;\big|\; -1 \leq \Tr H \Lambda(\rho) \leq 1 \; \forall \Lambda \in \mO \big\}.\nonumber
\end{align}
Since $H=0$ is strictly feasible to the above, by Slater's theorem we have that the optimal values of the primal and dual optimisation problems are equal.
A change of variables $W \coloneqq 2 H - I$ gives the form stated in Eq.~\eqref{eq:overhead dual}. 

\bibliographystyle{apsrmp4-2}
\bibliography{myref}

\end{document}